\title{Quantum And Relativistic Protocols For Secure Multi-Party
Computation}
  \author{Roger Colbeck}
\newcounter{assum}
\newtheorem{assumption}[assum]{Assumption}
\newcommand{\comment}[1]{}
\newcommand{\ket}[1]{\left |  #1 \right\rangle}
\newcommand{\bra}[1]{\left \langle #1  \right |}
\newcommand{\ketbra}[2]{|#1\rangle\!\langle#2|}
\def\openone{\leavevmode\hbox{\small1\kern-3.8pt\normalsize1}}
\theoremstyle{plain}
\newtheorem{conjecture}{Conjecture}[chapter]
\newtheorem{theorem}{Theorem}[chapter]
\newtheorem{lemma}{Lemma}[chapter]
\newtheorem{ib}{Ideal Behaviour}
\newtheorem{idf}{Ideal Functionality}[chapter]
\newtheorem{corollary}{Corollary}[chapter]
\theoremstyle{definition}
\newtheorem{protocol}{Protocol}[chapter]
\newtheorem{definition}{Definition}[chapter]
\newtheorem*{SC}{Security Condition}
\begin{document}


\renewcommand\baselinestretch{1.2}
\baselineskip=18pt plus1pt  


\fancyfoot{}
\renewcommand{\headrulewidth}{0pt}
\renewcommand{\headrulewidth}{0.5pt}


\maketitle

\setcounter{secnumdepth}{3}
\setcounter{tocdepth}{2}

\begin{declaration}
This thesis is the result of my own work and includes nothing which is
the outcome of work done in collaboration with others, except where
specifically indicated in the text.
\end{declaration}


\begin{dedication} 
{\large
\it To my parents, Lorelei and John.}

\bigskip

\bigskip
{\small
  \ding{72}\ding{73}\ding{75}\ding{76}\ding{77}\ding{78}\ding{79}{\large \ding{74}}\ding{79}\ding{78}\ding{77}\ding{76}\ding{75}\ding{73}\ding{72}}

\end{dedication}




\begin{acknowledgements}      
The work comprising this thesis was carried out over the course of
three years at the Centre For Quantum Computation, DAMTP, Cambridge,
under the supervision of Adrian Kent.  I am indebted to him for
support and guidance.  Discussions with Adrian are always a pleasure
and his eagerness to find loopholes in any new proposal is second to
none.  His clear thinking and attention to detail have had huge impact
on my work, and indeed my philosophy towards research.

It is with great pleasure that I thank the members of the quantum
information group for an enjoyable three years.  Particular thanks go
to Matthias Christandl, Robert K\"onig, Graeme Mitchison and Renato
Renner for numerous useful discussions from which my work has
undoubtedly benefited, and to Jiannis Pachos for his constant
encouragement and support.

My two office co-inhabitants deserve special mention here, not least
for putting up with me!  Alastair Kay for withstanding (and almost
always answering) a battering of questions on forgotten physics,
\LaTeX, linux and much more besides, and Roberta Rodriquez
for providing unrelenting emotional support on all matters from
physics to life itself.

I am very grateful to the Engineering and Physical Sciences Research
Council for a research studentship and to Trinity College, Cambridge
for a research scholarship and travel money.  A junior research
fellowship from Homerton College, Cambridge has provided financial
support during the final stages of writing this thesis.

Finally I would like to thank my examiners Robert Spekkens and Andreas
Winter for their thorough analysis of this thesis.

\end{acknowledgements}



\fancyhead[RO]{{\bf Abstract}}


\begin{abstracts}        
Secure multi-party computation is a task whereby mistrustful parties
attempt to compute some joint function of their private data in such a
way as to reveal as little as possible about it.  It encompasses many
cryptographic primitives, including coin tossing and oblivious
transfer.  Ideally, one would like to generate either a protocol or a
no-go theorem for any such task.

Very few computations of this kind are known to be possible with
unconditional security.  However, relatively little investigation into
exploiting the cryptographic power of a relativistic theory has been
carried out.  In this thesis, we extend the range of known results
regarding secure multi-party computations.  We focus on two-party
computations, and consider protocols whose security is guaranteed by
the laws of physics.  Specifically, the properties of quantum systems,
and the impossibility of faster-than-light signalling will be used to
guarantee security.

After a general introduction, the thesis is divided into four parts.
In the first, we discuss the task of coin tossing, principally in
order to highlight the effect different physical theories have on
security in a straightforward manner, but, also, to introduce a new
protocol for non-relativistic strong coin tossing. This protocol
matches the security of the best protocol known to date while using a
conceptually different approach to achieve the task.  It provides a
further example of the use of entanglement as a resource.

In the second part, a new task, variable bias coin tossing, is
introduced.  This is a variant of coin tossing in which one party
secretly chooses one of two biased coins to toss.  It is shown that
this can be achieved with unconditional security for a specified range
of biases, and with cheat-evident security for any bias.  We also
discuss two further protocols which are conjectured to be
unconditionally secure for any bias.

The third section looks at other two-party secure computations for
which, prior to our work, protocols and no-go theorems were unknown.
We introduce a general model for such computations, and show that,
within this model, a wide range of functions are impossible to compute
securely.  We give explicit cheating attacks for such functions.

In the final chapter we investigate whether cryptography is possible
under weakened assumptions.  In particular, we discuss the task of
expanding a private random string, while dropping the assumption that
the protocol's user trusts her devices.  Instead we assume that all
quantum devices are supplied by an arbitrarily malicious adversary.
We give two protocols that we conjecture securely perform this task.
The first allows a private random string to be expanded by a finite
amount, while the second generates an arbitrarily large expansion of
such a string.

\end{abstracts}



\fancyhead[RO]{{\bf Contents}}
\fancyfoot[C]{\thepage}
\begin{romanpages}
\tableofcontents
\listoffigures
\fancyhead[RO]{{\bf List Of Figures}}
\cleardoublepage 
\markboth{\nomname}{\nomname} 

\fancyhead[RO]{\bfseries\rightmark}

\end{romanpages}

\chapter{Introduction}

\begin{quote}
{\it ``If you wish another to keep your secret, first keep it to
  yourself.''} -- Lucius Annaeus Senec
\end{quote}

\section{Preface}
Secrecy has been an important aspect of life since the birth of
civilization, if not before -- even squirrels hide their nuts.  While
the poor squirrel has to rely on unproven assumptions about the
intelligence and digging power of its adversaries, we, today, seek a
more powerful predicate.  We demand unconditional security, that is
security guaranteed by the laws of physics.  

One common example is that of a base communicating with a field agent.
In the standard incarnation of this problem, Alice, at base, uses a
key to encrypt her data, before sending the encryption in the clear to
agent Bob.  An eavesdropper, Eve, hearing only the encrypted message
can discover little about the data. Shannon's pioneering work on
information theory implies that to achieve perfect secrecy, so that,
{\em even if she possesses the entire encrypted message}, Eve can do
no better than simply guess Alice's message, requires a key that is at
least as long as the message.

This is an inconvenient result.  Distributing, carrying and securely
storing long keys is expensive.  In the 1970s, a band of classical
cryptographers came up with a set of practical ciphers to which they
entrusted their private communications, and indeed many of us do
today.  These evade Shannon's requirement on the key by assuming
something about the power of an eavesdropper.  The Rivest, Shamir and
Adleman cipher ({\sc RSA}\nomenclature[zRSA]{{\sc RSA}}{The Rivest
Shamir Adleman cipher}), for instance, assumes that an eavesdropper
finds it hard to factor a large number into the product of two primes.
Security then has a finite lifetime, the factoring time.  Although for
matters of national security, such a cryptosystem is inappropriate, it
is of considerable use to protect short-lived secrets.  For instance
if it is known that a hacker takes 20 years to find out a credit card
number sent over the internet, one simply needs to issue new credit
card numbers at least every 20 years.  But it is not that simple.  It
may take 20 years running the best known algorithm on the fastest
computer available today to break the code, but this could change
overnight.  The problem with relying on a task such as factoring is
that no one actually knows how hard it is.  In some sense, we believe
it is secure because very clever people have spent large amounts of
time trying to find a fast factoring algorithm implementable on
today's computers, and have failed.  More alarmingly, we actually know
of a factoring algorithm that works efficiently on a quantum computer.
We are then relying for security on no one having successfully built
such a computer.  Perhaps one already exists in the depths of some
shady government organization.  There are bigger secrets than one's
credit card numbers, and for these, we cannot risk such possibilities.

As we have mentioned, a quantum computer can efficiently break the
cryptosystems we use today.  Quantum technology also allows us to
build cryptosystems with improved power, and in fact such that they
are provably unbreakable within our assumptions.  The usefulness of
quantum mechanics in cryptography went un-noticed for many years.
Wiesner made the first step in 1970 in a work that remained
unpublished until 1983.  In 1984, Bennett and Brassard extended
Wiesner's idea to introduce the most notorious utilization of quantum
mechanics in cryptography -- quantum key distribution.  This allows a
key to be generated remotely between two parties with unconditional
security, thus circumventing the problem of securely storing a long
key.  The principle behind Bennett and Brassard's scheme is that
measuring a quantum state necessarily disturbs it.  If an eavesdropper
tries to tap their quantum channel, Alice and Bob can detect this.  If
a disturbance is detected, they simply throw away their key and start
again.  Any information Eve gained on the key is useless to her since
this key will be discarded.  Alice and Bob can then be assured of
their privacy.  Remote key distribution is impossible classically,
hence quantum mechanics is, at least in this respect, a
cryptographically more powerful theory.

Other cryptographic primitives have been applied to the quantum
setting.  These so-called post cold war applications focus on exchange
of information between potentially mistrustful parties.  Multiple
parties wish to protect their own data (perhaps only temporarily)
while using it in some protocol.  Bit commitment is one such example.
In an everyday analogy of this primitive, Alice writes a bit on a
piece of paper and locks it in a safe.  She sends the safe to Bob, but
keeps the key.  At some later time Alice can unveil the bit to Bob by
sending him the key, thus proving that she was committed to the bit
all along.  Of course, this scheme is not fundamentally secure---it
relies on unproven assumptions about the size of sledgehammer
available to Bob.  Mayers, Lo and Chau showed that a large class of
quantum bit commitment schemes are impossible.  This cast major doubt
on the possible success of other such primitives, but all was not
lost.  In 1999, Kent noticed that exploiting another physical theory
might rescue the situation. (He was not in fact the first to consider
using this theory, but seems to be the first to obtain a working
result.)  Special relativity\footnote{Strictly, special relativity and
the assumption of causality.} demands that information does not travel
faster than the speed of light.  The essence of its usefulness is that
in a relativistic protocol, we can demand that certain messages be
sent simultaneously by different parties.  The receipt times can then
be used to guarantee that these messages were generated independently.
Coin tossing, for example becomes very straightforward.  Alice and Bob
simply simultaneously send one another a random bit.  If the bits are
equal, they assign heads, if different, they assign tails.
Relativistic protocols have been developed to realise bit commitment
with, at present, conjectured security.

\section{Synopsis}
This thesis is divided into five chapters.

{\bf Introduction :}\qquad The remainder of this chapter is used to
introduce several concepts that will be important throughout the
thesis.  We discuss quantum key distribution, and in particular the
use and security of $universal_2$ hash functions as randomness
extractors in privacy amplification.  We introduce types of security,
and discuss the assumptions underlying the standard cryptographic
model, before describing the general physical frameworks in which our
protocols will be constructed.  Finally, we describe some important
cryptographic primitives.

\bigskip
{\bf The Power Of The Theory -- Strong Coin Tossing:}\qquad We
introduce the task of strong coin tossing and use it to highlight the
fact that different physical theories generate different amounts of
power in cryptography.  Our contribution here is a new protocol
applicable in the non-relativistic quantum case.  It equals the best
known bias to date for such protocols, but does so using a
conceptually different technique to that of protocols found in the
literature.  It provides a further example of the use of entanglement
as a resource.  Our protocol, Protocol \ref{colbeck_protocol}, and an
analysis of its security has appeared in \cite{Colbeck}.

\bigskip
{\bf Variable Bias Coin Tossing:}\qquad In this chapter we divide
secure two-party computations into several classes before showing that
a particular class is achievable using a quantum relativistic
protocol.  The simplest non-trivial computation in this class, a
variable bias coin toss, will be discussed in detail.  Such tasks have
not been considered in the literature to date, so this chapter
describes a new positive result in cryptography.  We prove that this
task can be achieved with unconditional security for a specified range
of biases, and with cheat-evident security for any bias.  We also
discuss two further protocols which are conjectured to be
unconditionally secure for any bias.  Most of the work covered by this
chapter has appeared in \cite{CK1}.

\bigskip
{\bf Secure Two-Party Computation:}\qquad In this chapter, we study
the remaining classes of two-party computation for which, prior to our
work, neither protocols nor no-go theorems were known.  We set up a
general model for such computations, before giving a cheating attack
which shows that a wide range of functions within these classes are
impossible to compute securely.  The culmination of these results is
given in Table \ref{fns2} (see page \pageref{fns2}).  A publication on
these results is in preparation.

\bigskip
{\bf Randomness Expansion Under Relaxed Cryptographic
Assumptions:}\qquad In the final chapter, we discuss a cryptographic
task, expanding a private random string, while relaxing the standard
assumption that each party trusts all of the devices in their
laboratory.  Specifically, we assume that all quantum devices are
provided by a malicious supplier.  We give two protocols that are
conjectured to securely perform this task.  The first allows a private
random string to be expanded by a finite amount, while the second
generates an arbitrarily large expansion of such a string.
Constructing formal security proofs for our protocols is currently
under investigation.

\section{Preliminaries}
The reader well versed in quantum information theory notions can
skip this section; for the non-specialist reader, we provide an
outline of some of the aspects that we draw upon regularly in the
forthcoming chapters.

\subsection{Local Operations}
We will often talk of local operations.  These describe any operation
that a party can do on the part of the system they hold locally, as
dictated by the laws of physics (specifically quantum mechanics).  For
quantum systems, these fall into three classes: altering the size of
the system, performing unitary operations, and performing
measurements.  A local operation can comprise any combination of
these.

\smallskip
{\bf System Size Alteration :}\qquad This is operationally trivial.  A
system can be enlarged simply by combining it with another system, and
contracted by discarding the other system.  When systems are enlarged,
the combined system then lives in the tensor product of the spaces of
the original systems, and its state is given by the tensor product of
the states of the two individual systems, which we denote with the
symbol $\otimes$.\nomenclature[X\otimes]{$\otimes$}{Tensor product}

\smallskip
{\bf Unitary Operations :}\qquad These are implemented by applying
some Hamiltonian to the system in question, for example by placing it
in an external field.  The system's dynamics follow that of the
time-dependent Schr\"odinger equation.  This defines a unitary
operation on the Hilbert space of the system.  In theory, any unitary
can be enacted on the system by varying external fields appropriately,
and applying them for the correct time periods.  However,
technologically, this represents a considerable challenge.

\smallskip
{\bf Measurement :}\qquad The most general type of measurement that
one needs to consider is a projective measurement.  Such a measurement
is defined by a set of operators, $\{\Pi_i\}$ with the property that
$\Pi_i^2=\Pi_i$, and $\sum_i \Pi_i=\openone$, the identity
operator\nomenclature[X\openone]{$\openone$}{The identity operator}.
The postulates of quantum mechanics demand the outcome of such a
measurement on a system in state $\rho$ to be $i$ with probability
${\rm tr}(\Pi_i\rho)$, and that the subsequent state of the system on
measuring $i$ is $\frac{\Pi_i\rho \Pi_i}{{\rm tr}(\Pi_i\rho)}$.  

While this is the most general type of measurement we need, it will
often be convenient to use the positive operator valued measure ({\sc
POVM})\nomenclature[zPOVM]{{\sc POVM}}{Positive Operator Valued
Measure: A measurement defined by a set of positive operators,
$\{E_i\}$ satisfying $\sum_i E_i=\openone$} formalism, whereby a
measurement is defined by a set of positive operators $\{E_i\}$ which
obey $\sum_i E_i=\openone$.  An outcome $i$ leaves the state of the
system as $\frac{\sqrt{E_i}\rho\sqrt{E_i}}{{\rm tr}(E_i\rho)}$ and
occurs with probability ${\rm tr}(E_i\rho)$.

Any {\sc POVM} can be realized as the combination of an enlargement of
the system, a unitary operation, and projective measurement (this
result is often called Neumark's theorem \cite{Peres}).  The following
is equivalent to performing the {\sc POVM} with elements $\{E_i\}$ on
a system in state $\rho$: Introduce an ancilla in state $\ket{0}$, and
perform the unitary operation, $U$, given by
$U\ket{0}\ket{\psi}=\sum_i(\openone\otimes\sqrt{E_i})\ket{i}\ket{\psi}$\footnote{This
unitary is only partially specified, since we have only defined its
operation when the first register is in the state $\ket{0}$.  However,
it is easily extended to a unitary over the entire space
\cite{Nielsen&Chuang}.}.  Then measure the projector onto
$\{\ketbra{i}{i}\otimes\openone\}$ generating the state,
$\frac{(\openone\otimes\sqrt{E_i})(\ketbra{i}{i}\otimes\rho)(\openone\otimes\sqrt{E_i})}{{\rm
tr}(E_i\rho)}$ with probability ${\rm tr}(E_i\rho)$.  On discarding
the ancillary system, this operation is equivalent to that of the {\sc
POVM}.

\bigskip
Any combination of these operations forms what we term a local
operation.  It is easy to verify that any large sequence of such
operations can be reduced to at most 4 steps: First, the system is
enlarged, then it is measured, then a unitary operation is performed
on it (possibly one that depends on the result), and finally part of
the system is discarded (again, possibly depending on the result).

Local operations have the property that for a system comprising
subsystems $Q$ and $R$, no local operation on $Q$ can be used to
affect the outcome probabilities of any measurement on system $R$,
even if the two systems are entangled.  This property means that
quantum theory does not permit superluminal signalling.

Another important property of local operations is that on average they
cannot increase entanglement between separated subsystems
\cite{Nielsen&Chuang}.

\subsubsection{Keeping Measurements Quantum}
\label{keep_quantum}
Rather than perform a measurement as prescribed by a protocol, it
turns out that one can instead introduce an ancillary register, and
perform a controlled {\sc NOT} between the system that was to be measured
and this ancilla.  The additional register in effect stores the result
of the measurement, such that if it is later measured, the entire
system collapses to that which would have been present if the
measurement had been performed as the protocol prescribed.  This
result holds for any sequence of operations that occur on the system
in the time between the controlled operation and the measurement on
the ancilla.  If one of these further operations should be dependent
on the measurement outcome, then, instead, a controlled operation is
performed with the outcome register as the control bit.  The process
of delaying a measurement in this way is often referred to as
``keeping the measurement quantum''.  Figure \ref{fig:keep_quantum}
illustrates this procedure.

\begin{figure}
\begin{center}
$(a\ket{0}+b\ket{1})\ket{\psi}\left\{{\begin{tabular}{c}$\xrightarrow{\text{probability }|a|^2}\ket{0}\ket{\psi}\xrightarrow{U_0}U_0(\ket{0}\ket{\psi})$\\
\\
$\xrightarrow{\text{probability }|b|^2}\ket{1}\ket{\psi}\xrightarrow{U_1}U_1(\ket{1}\ket{\psi})$
\end{tabular}}
\right.$

\bigskip
(a)

\bigskip
\bigskip
$(a\ket{0}\ket{\psi}\ket{0}_A+b\ket{1}\ket{\psi}\ket{1}_A)\xleftarrow{\text{controlled
    {\sc NOT}}}(a\ket{0}+b\ket{1})\ket{\psi}\ket{0}_A$

\bigskip
$\left\downarrow\begin{tabular}{c}\\$U_0\otimes\ketbra{0}{0}_A+U_1\otimes\ketbra{1}{1}_A$\\
\\ \\ \end{tabular}\right.$\\
$aU_0(\ket{0}\ket{\psi})\ket{0}_A+bU_1(\ket{1}\ket{\psi})\ket{1}_A\left\{\begin{tabular}{c}$\xrightarrow{\text{probability }|a|^2}U_0(\ket{0}\ket{\psi})\ket{0}_A$\\
\\
$\xrightarrow{\text{probability }|b|^2}U_1(\ket{1}\ket{\psi})\ket{1}_A$
\end{tabular}
\right.$

\bigskip
(b)
\end{center}
\caption{Sequence of operations for the implementation of a
  measurement in the $z$ basis on the first part of a state followed
  by a two-qubit unitary dependent on the outcome in the case (a)
  where the measurement is performed explicitly, and (b) where the
  measurement is kept at the quantum level until the end.  In the
  latter case an ancillary system indexed by $A$ has been introduced,
  and the unitary operation is now controlled on this system.  Note
  that the end result is the same in both cases.}
\label{fig:keep_quantum}
\end{figure}

\subsection{Distinguishing Quantum States}
The problem of how best to distinguish quantum states dates back
several decades.  For a good account see for example \cite{Chefles}.

Alice is to prepare a state
$\rho\in\{\rho_0,\rho_1,\ldots\rho_{n-1}\}$ and send it to Bob.  She
is to choose $\rho_i$ with probability $\eta_i$.  Bob, who knows the
identity of the states $\{\rho_0,\rho_1,\ldots\rho_{n-1}\}$ and their
probability distribution, is required to guess the value of $i$ by
performing any operations of his choice on $\rho$.  It is well known
that Bob cannot guess the value of $i$ with guaranteed success, unless
the states are orthogonal.

There are two flavours to this problem.  One, sometimes called quantum
hypothesis testing, involves maximizing the probability of guessing
the state correctly.  The other, unambiguous state discrimination,
seeks to never make an error.  This can be achieved only if the states
to be distinguished are linearly independent, and at the expense that
(unless the states are orthogonal) sometimes an inconclusive outcome
will be returned.  It is the first of these two problems that will be
relevant to us.

It follows from the discussion of local operations in the previous
section, and the fact that we don't need the system after the
measurement that it is sufficient for Bob to simply do a {\sc POVM} on
$\rho$.  This {\sc POVM} should have $n$ outcomes, with outcome $i$
corresponding to a best guess of Alice's preparation being
$\rho_i$\footnote{It is clear that we can always put the optimal
strategy in this form.  For a general {\sc POVM}, each element can be
associated with a state that is the best guess for the outcome
corresponding to that element.  If two elements have the same best
guess, we can combine their {\sc POVM} elements by addition to give a new
{\sc POVM}.  This generates a {\sc POVM} with at most $n$ outcomes.  If there are
fewer than $n$, then we can always pad the {\sc POVM} with zero operators.
A simple relabelling then ensures that outcome $i$ corresponds to a
best guess of $\rho_i$.}.  The task is to maximize
\begin{eqnarray}
\label{maxp}
\sum_i\eta_i{\rm tr}(E_i\rho_i)
\end{eqnarray}
over all {\sc POVM}s $\{E_i\}$.  

In general, it is not known how to obtain an analytic solution to this
problem \cite{Chefles}, although numerical techniques have been
developed \cite{JRF}.  However, a solution is known for the case
$n=2$.  In Appendix \ref{AppA}, we give a proof that in this case, the
maximum probability is $\frac{1}{2}\left(1+{\rm
tr}\left|\eta_0\rho_0-\eta_1\rho_1\right|\right)$ \cite{Helstrom}.  In
the case $\eta_0=\eta_1=\frac{1}{2}$, this expression is usually
written as $\frac{1}{2}(1+D(\rho_0,\rho_1))$, where
$D(\rho_0,\rho_1)\equiv\frac{1}{2}{\rm tr}|\rho_0-\rho_1|$ is the
trace distance between the two density matrices. \footnote{This is a
generalization of the classical distance between probability
distributions, for which we also use the symbol $D$, see Section
\ref{classdist}.}

\label{tracedist}\nomenclature[aD]{$D(\rho_0,\rho_1)$}{The trace
distance between density matrices $\rho_0$ and $\rho_1$, defined by
$\frac{1}{2}{\rm tr}\mid\rho_0-\rho_1\mid$}

Other cases for which analytic results are known involve cases where
the set of states to be distinguished are symmetric and occur with a
uniform probability distribution.  In such cases, the so-called square
root measurement is optimal \cite{HJW,HW}.  Another result that we
will find useful is the following theorem.

\begin{theorem} \cite{Holevo,Yuen&,Helstrom}
Consider using a set of $M$ measurement operators, $\{E_j\}$, to
discriminate between a set of $M$ states, $\{\rho_j\}$, which occur
with prior probabilities, $\{\eta_j\}$, where the outcome corresponding
to operator $E_j$ indicates that the best guess of the state is
$\rho_j$.  The set $\{E_j\}$ is optimal if and only if
\begin{eqnarray}
\label{con1}
E_j\left(\eta_j\rho_j-\eta_l\rho_l\right)E_l&=&0\;\;\;\forall\;j,l\\ 
\label{con2}
\sum_j E_j\eta_j\rho_j-\eta_l\rho_l&\geq &0\;\;\;\forall\; l.  
\end{eqnarray} 
\label{optimal_cond}
\end{theorem}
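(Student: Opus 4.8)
The plan is to read the statement as the optimality conditions of a semidefinite programme---maximise the linear functional $P(\{E_j\}):=\sum_j\eta_j\,{\rm tr}(E_j\rho_j)$ over the convex set of POVMs $\{\{E_j\}:E_j\geq 0,\ \sum_jE_j=\openone\}$---and to establish both directions by Lagrange duality. The object to keep in mind throughout is the ``dual operator'' $\Gamma:=\sum_jE_j\eta_j\rho_j$ attached to a candidate POVM. I would first record two elementary facts: (i) the trace identity ${\rm tr}(\Gamma)=\sum_j\eta_j\,{\rm tr}(E_j\rho_j)=P(\{E_j\})$, which holds for any POVM; and (ii) weak duality, namely that for any POVM $\{E'_j\}$ and any Hermitian $K$ with $K\geq\eta_l\rho_l$ for every $l$, one has $P(\{E'_j\})=\sum_l\eta_l\,{\rm tr}(\rho_lE'_l)\leq\sum_l{\rm tr}(KE'_l)={\rm tr}(K)$, since $K-\eta_l\rho_l\geq 0$ and $E'_l\geq 0$ make every term non-negative.

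For the ``if'' direction, assume (\ref{con1}) and (\ref{con2}) hold. Condition (\ref{con2}) already forces $\Gamma$ to be Hermitian and asserts $\Gamma\geq\eta_l\rho_l$ for all $l$, so $\Gamma$ is admissible in (ii); combined with (i) this yields $P(\{E'_j\})\leq{\rm tr}(\Gamma)=P(\{E_j\})$ for \emph{every} POVM $\{E'_j\}$, which is precisely optimality. (Incidentally (\ref{con1}) is not even needed here: it follows from (\ref{con2}) and (i), because the numbers ${\rm tr}\big((\Gamma-\eta_l\rho_l)E_l\big)\geq 0$ sum to ${\rm tr}(\Gamma)-P(\{E_j\})=0$, forcing $(\Gamma-\eta_l\rho_l)E_l=0$ and hence $E_j\eta_j\rho_jE_l=E_j\Gamma E_l=E_j\eta_l\rho_lE_l$.) This direction is a short calculation once the right $\Gamma$ has been identified.

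For the ``only if'' direction, suppose $\{E_j\}$ is optimal with value $P^\ast$. Here I would invoke strong duality: $E_j=\openone/M$ is a strictly feasible POVM, so Slater's condition holds, the dual programme $\min\{{\rm tr}(K):K=K^\dagger,\ K\geq\eta_l\rho_l\ \forall l\}$ attains its optimum, and this optimum equals $P^\ast$. Taking $K$ dual-optimal, ${\rm tr}(K)=P^\ast=\sum_l\eta_l\,{\rm tr}(\rho_lE_l)$, so $\sum_l{\rm tr}\big((K-\eta_l\rho_l)E_l\big)=0$ with each summand $\geq 0$; hence $(K-\eta_l\rho_l)E_l=0$ for all $l$. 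Summing over $l$ gives $K=\sum_l\eta_l\rho_lE_l$, and Hermiticity of $K$ then gives $K=\sum_lE_l\eta_l\rho_l=\Gamma$. Thus $\Gamma\geq\eta_l\rho_l$ for all $l$, which is (\ref{con2}), and $E_j\eta_j\rho_jE_l=E_j\Gamma E_l=E_j\eta_l\rho_lE_l$, which is (\ref{con1}). An equivalent, duality-free route is a perturbation analysis of the optimal POVM: conjugating by $e^{itH}$ preserves all constraints and forces ${\rm tr}\big(H(\Gamma-\Gamma^\dagger)\big)=0$ for every Hermitian $H$, hence $\Gamma=\Gamma^\dagger$; and were $\Gamma\not\geq\eta_l\rho_l$, transferring a small amount of measurement weight onto outcome $l$ along a vector $\ket{\phi}$ with $\bra{\phi}\eta_l\rho_l\ket{\phi}>\bra{\phi}\Gamma\ket{\phi}$ would strictly raise $P$, a contradiction.

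I expect the ``only if'' direction to be the main obstacle, and within it the honest handling of the positivity constraints $E_j\geq 0$. A naive stationarity argument---setting the first variation of $P$ to zero under constraint-preserving perturbations---only delivers $\Gamma=\Gamma^\dagger$; extracting the operator inequality (\ref{con2}) and the cross-term identity (\ref{con1}) requires either carefully characterising which perturbation directions are admissible where some $E_j$ is rank-deficient (i.e.\ on the boundary of the POVM cone), or else appealing to strong semidefinite-programming duality and complementary slackness to sidestep those explicit constructions. Everything else---the choice of $\Gamma$, the weak-duality bound, and checking that $\Gamma$ saturates it---is routine.
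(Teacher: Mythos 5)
The thesis does not actually prove this theorem: it is imported from the literature with a citation and then used as a black box (the only discrimination result proved in the thesis is the two-state Helstrom bound in Appendix~\ref{AppA}, which is derived directly and does not pass through these conditions). So there is no in-paper proof to match your argument against, and it must be judged on its own terms; on those terms it is correct. The choice of $\Gamma=\sum_jE_j\eta_j\rho_j$ as dual certificate, the weak-duality inequality, and the ``if'' direction are all sound, and your remark that (\ref{con1}) follows from (\ref{con2}) together with ${\rm tr}(\Gamma)=P(\{E_j\})$ is a genuine (and classical) sharpening of the statement: condition (\ref{con2}) alone is already necessary and sufficient once $\{E_j\}$ is a valid POVM. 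The ``only if'' direction via strong duality is also legitimate: $E_j=\openone/M$ is strictly feasible, the primal optimum is attained by compactness, Slater's condition yields dual attainment with zero gap, and complementary slackness gives $(K-\eta_l\rho_l)E_l=0$, whence $K=\sum_l\eta_l\rho_lE_l=\Gamma$ by Hermiticity and both conditions follow as you describe. Relative to the original variational proofs (perturbing the POVM directly, which is what your ``duality-free'' sketch gestures at), the semidefinite-programming formulation buys a clean handling of the boundary of the POVM cone --- exactly the point you flag as delicate. The only step I would insist you write out explicitly in a final version is the lemma that $A\geq 0$, $B\geq 0$ and ${\rm tr}(AB)=0$ force $AB=0$ (via ${\rm tr}(AB)=\lVert A^{1/2}B^{1/2}\rVert^2$), since you invoke it twice and it carries the weight of converting trace equalities into the operator identities (\ref{con1}).
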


\subsection{Entanglement, Bell's Theorem And Non-locality}
\label{Bell}
The title of this section could easily be that of a book.  A wealth of
previous research exists in this area, and a number of debates still
rage about the true nature of non-locality; some of which date back to
the famous Einstein-Podolsky-Rosen paper of 1935 \cite{EPR}, or even
before.  There is no evidence to date that contradicts the predictions
of quantum theory, but some find its philosophical consequences so
outrageous that they seek alternative theories that are more closely
aligned with what is ultimately their own beliefs.  Furthermore, there
exist experimental loopholes which sustain the belief that quantum
theory could be local\footnote{No experiment to date has properly
ensured space-like separation of measurements.}.  In this section I
briefly discuss some aspects of what is often referred to as quantum
non-locality.

The term entanglement describes the property of particles that leads
to their behaviour being linked even if they have a large separation.
Consider a pair of electrons created in a process which guarantees
that their spins are opposite.  According to quantum theory, until
such a time that a measurement is made, the state of the entire system
is
$\frac{1}{\sqrt{2}}\left(\ket{\uparrow_1\downarrow_2}+\ket{\downarrow_1\uparrow_2}\right)$.
Measuring either particle in the $\{\uparrow,\downarrow\}$ basis
causes the state of the entire two particle system to collapse to
either $\ket{\uparrow_1\downarrow_2}$ or
$\ket{\downarrow_1\uparrow_2}$ with probability half each.  What is
philosophically challenging about this is that measuring the first
particle affects the properties of the second particle {\it
instantaneously}.  If one were to perform the experiment, the natural
conclusion would be that each particle was assigned to be in either
the $\ket{\uparrow}$ or $\ket{\downarrow}$ state when they were
created, and hence such results are not at all surprising: we simply
didn't know how the state was assigned until the measurement.  A
famous analogy is that of Bertlmann's socks \cite{Bell}.  Which colour
he wears on a particular day is unpredictable, but his socks are never
coloured the same.  On seeing that his right sock is pink, {\it
instantaneously} one knows that his left is not.  Based on this
experiment alone, both hypotheses are tenable.

It was Bell who realized a way in which these hypotheses can be
distinguished \cite{Bell}.  He developed an inequality that any local
realistic theory must obey\footnote{A local theory is one in which no
influence can travel faster than light; a realistic theory is one in
which values of quantities exist prior to being measured.}, before
showing that suitable entangled quantum systems can violate this.  The
most common recasting of his ideas is the {\sc CHSH} inequality (named after
Clauser, Horne, Shimony and Holt\nomenclature[zCHSH]{{\sc CHSH}}{Clauser,
Horne, Shimony and Holt, the collaboration that developed a simple
version of Bell's inequality}).  Consider the following abstract
scenario.  Two spatially separated boxes each have two possible
settings (inputs), and two possible outputs, $+1$ and $-1$.  We label
the inputs $P$ and $Q$ for each box, and use $P_i\in\{1,-1\}$ and
$Q_i\in\{1,-1\}$ to denote the output of the box for input $P$ or $Q$
respectively, with index $i\in\{1,2\}$ corresponding to the box to
which we are referring.  (In a quantum mechanical context, the inputs
represent choices of measurement basis, and the outputs the
measurement result.)  The {\sc CHSH} test involves the quantity $\langle
P_1P_2+P_1Q_2+Q_1P_2-Q_1Q_2\rangle$, where $\langle X\rangle$ denotes
the expectation value of random variable $X$\nomenclature[xA]{$\langle
  X\rangle$}{The expectation value of random variable $X$}.  The following theorem
gives the limit of this quantity for local hidden variable theories.
\begin{theorem}
There is no assignment of values $\{P_1,P_2,Q_1,Q_2\}\in\{\pm 1,\pm
1,\pm 1,\pm 1\}$ (and hence no local hidden variable theory), for
which $\langle P_1P_2+P_1Q_2+Q_1P_2-Q_1Q_2\rangle > 2$
\end{theorem}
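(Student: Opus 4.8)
The plan is to reduce the statement to an elementary algebraic fact about $\pm 1$-valued quantities. For a single deterministic assignment $\{P_1,P_2,Q_1,Q_2\}$ the angle brackets are vacuous, so it suffices to show that the number $P_1P_2+P_1Q_2+Q_1P_2-Q_1Q_2$ is at most $2$. First I would regroup it around the outputs of the second box,
\[
P_1P_2+P_1Q_2+Q_1P_2-Q_1Q_2=P_1(P_2+Q_2)+Q_1(P_2-Q_2),
\]
and then use the key observation that, because $P_2$ and $Q_2$ each lie in $\{+1,-1\}$, exactly one of $P_2+Q_2$ and $P_2-Q_2$ is zero and the other is $\pm 2$ (the first if $P_2=Q_2$, the second if $P_2=-Q_2$). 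Consequently the expression collapses to $\pm 2P_1$ or $\pm 2Q_1$, and since $P_1,Q_1\in\{+1,-1\}$ as well, it equals either $+2$ or $-2$; in particular it never exceeds $2$.

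The second step is to upgrade this to the parenthetical claim about local hidden variable theories. In any such theory the four outputs obtained in a given run are functions $P_i(\lambda),Q_i(\lambda)\in\{+1,-1\}$ of a shared hidden variable $\lambda$ distributed according to some probability measure (allowing, if one wishes, additional independent local randomness merely replaces these by convex combinations of $\pm 1$ values, which changes nothing). The observed quantity $\langle P_1P_2+P_1Q_2+Q_1P_2-Q_1Q_2\rangle$ is then the expectation over $\lambda$ of the per-run value $P_1(\lambda)P_2(\lambda)+P_1(\lambda)Q_2(\lambda)+Q_1(\lambda)P_2(\lambda)-Q_1(\lambda)Q_2(\lambda)$, each instance of which is bounded by $2$ by the first step; an average of numbers each at most $2$ is itself at most $2$, which gives the bound.

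There is no real obstacle here: the argument is a one-line factorization followed by a trivial case split, and the whole theorem could equally be verified by brute-force enumeration of the $2^4=16$ sign patterns. The only point requiring a little care is conceptual rather than technical --- being explicit that, for a fixed assignment, the expectation value is just that assignment's value, so that the content of the theorem is the algebraic inequality $|P_1(P_2+Q_2)+Q_1(P_2-Q_2)|\le 2$, with the statement about hidden variable theories following immediately by linearity of expectation (equivalently, convexity).
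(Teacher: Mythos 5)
Your proof is correct: the factorization $P_1(P_2+Q_2)+Q_1(P_2-Q_2)$ together with the observation that exactly one of $P_2+Q_2$, $P_2-Q_2$ vanishes while the other is $\pm 2$ pins the per-run value to $\pm 2$, and averaging over a hidden variable preserves the bound. The paper itself states this theorem without proof (citing it as the standard {\sc CHSH} fact), so there is nothing to compare against, but your argument is the canonical derivation and is complete as written.
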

Nevertheless, values as high as $2\sqrt{2}$ are possible using quantum
systems \cite{Cirelson} although these fall short of the maximum
algebraic limit of 4.  The achievability of $2\sqrt{2}$ rules out the
possibility of a local hidden variable theory for explaining the data
(modulo a few remaining loopholes, see for example
\cite{Aspect,BCHKP,Kent_loophole} for discussions).  A non-local but
realistic theory can evade the theorem by allowing the value of
quantities defined on one particle to change when a measurement is
made on another, no matter how separated the particles are.  It is not
straightforward to drop the realistic assumption while keeping the
theory local, since the concept of locality itself is inherently
linked with realism.
Hence it is common terminology in the literature to use the phrase
``non-local effects'' to allude to violations of Bell-type
inequalities such as that of {\sc CHSH}.  

\subsection{Entropy Measures}
\label{entropy_measures}
\subsubsection{Random Events}
When we discuss random events, we assume that they occur according to
a pre-defined ensemble of possible outcomes and their associated
probabilities.  Event $X$ is a single instance drawn from the ensemble
$\{1,2,\ldots,|X|\}$ with probabilities
$\{P_X(1),P_X(2),\ldots,P_X(|X|)\}$.  We call this probability
distribution $P_X$.  \nomenclature[aP_X]{$P_X$}{The probability
distribution of a random variable $X$} The terminology $X=x$ refers to
a single instance drawn from this distribution taking the value $x$.
One similarly defines distributions over more than one random
variable.  For instance, $P_{XY}$ is the joint distribution of $X$ and
$Y$, and $P_{X|Y=y}$ is the distribution of $X$ conditioned on the
fact that $Y$ takes value $y$.

\subsubsection{Shannon Entropy}
It was Shannon who pioneered the mathematical formulation of
information \cite{Shannon}.  In essence his insight was that an event
that occurs with probability $p$ could be associated with an amount of
information $-\log p$. \footnote{In information theory, as in this
thesis, all logarithms are taken in base 2 and hence entropies and
related quantities are measured in
bits.}\nomenclature[xlog]{$\log$}{The binary logarithm function}
Consider many independent repetitions of random event $X$.  The
average information revealed by each instance of $X$ is given by the
Shannon entropy of $X$ defined as follows.
\begin{definition}
The Shannon entropy associated with an event $x$ drawn from random
distribution $X$ is $H(X)\equiv\sum_{x\in X}-P_X(x)\log P_X(x)$.
\end{definition}
Likewise, one can define conditional Shannon entropies.  $H(X|Y=y)$
denotes the Shannon entropy of $X$ given $Y$.  It measures the average
amount of information one learns from a single instance of $X$ if one
possesses string $y\in Y$, where $X,Y$ are chosen according to joint
distribution $P_{XY}$.  One can average this quantity to form
$H(X|Y)$, the conditional Shannon entropy.
\begin{definition}
The conditional Shannon entropy of an event $X$ given $Y$ is defined
by $H(X|Y)\equiv\sum_{x\in X, y\in Y}-P_Y(y)P_{X|Y=y}(x)\log P_{X|Y=y}(x)$.
\end{definition}
This leads one to define the mutual Shannon information between $X$
and $Y$ by $I(X:Y)\equiv H(X)-H(X|Y)=H(Y)-H(Y|X)$.  In some sense,
this is the amount of information in common to the two strings $X$ and
$Y$.

Shannon information was first used to solve problems of compression,
and communication over a noisy channel, as given in the following
theorems \cite{Shannon}.
\begin{theorem} (Source coding theorem)
Consider a source emitting independent and identically distributed
{\sc (IID)}  \nomenclature[ziid]{{\sc IID}}{Independent and Identically
Distributed} random variables drawn from distribution $P_X$.  For any
$\epsilon>0$ and $R>H(X)$, there exists an encoder such that for
sufficiently large $N$, any sequence drawn from $P_X^N$ can be
compressed to length $NR$, and a decoder such that, except with
probability $<\epsilon$, the original sequence can be restored from
the compressed string.  
\end{theorem}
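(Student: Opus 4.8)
The plan is to use the method of typical sequences (the asymptotic equipartition property). Fix $\delta>0$ small enough that $H(X)+\delta<R$; this is possible since $R>H(X)$. Without loss of generality discard from the alphabet any symbol with $P_X(x)=0$, so that $-\log P_X(x)$ is finite for every symbol. For a sequence $x^N=(x_1,\ldots,x_N)$ write $P_X^N(x^N)=\prod_{k=1}^N P_X(x_k)$, and call $x^N$ \emph{$\delta$-typical} if
\begin{eqnarray}
\label{typdef}
2^{-N(H(X)+\delta)}\;\le\; P_X^N(x^N)\;\le\; 2^{-N(H(X)-\delta)}.
\end{eqnarray}
Let $T^N_\delta$ be the set of $\delta$-typical sequences. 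First I would observe that $-\log P_X^N(x^N)=\sum_{k=1}^N\bigl(-\log P_X(x_k)\bigr)$ is a sum of $N$ {\sc IID} random variables, each with mean exactly $H(X)$ and finite variance (the alphabet is finite). Hence by the weak law of large numbers (made quantitative via Chebyshev's inequality), for any $\epsilon>0$ there is an $N_0$ such that for all $N\ge N_0$ the empirical average $-\tfrac1N\log P_X^N(x^N)$ lies within $\delta$ of $H(X)$ except with probability less than $\epsilon$; equivalently, $\Pr[x^N\notin T^N_\delta]<\epsilon$.

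Next I would bound the cardinality of the typical set. Since every $x^N\in T^N_\delta$ has $P_X^N(x^N)\ge 2^{-N(H(X)+\delta)}$ and these probabilities sum to at most $1$, we get $|T^N_\delta|\le 2^{N(H(X)+\delta)}$. The encoder is then defined as follows: enumerate the elements of $T^N_\delta$ and assign to each a distinct binary string of length $\lceil N(H(X)+\delta)\rceil$; since $H(X)+\delta<R$, for $N$ large enough this length is at most $NR$, and I pad all codewords to exactly length $NR$ (reserving one extra flag bit, or one unused string, to mark ``failure''). Any non-typical sequence is mapped to the failure string. The decoder simply reads off the index and returns the corresponding typical sequence, recovering $x^N$ exactly whenever $x^N\in T^N_\delta$; the only decoding failures occur on non-typical inputs, which happen with probability $<\epsilon$ by the previous paragraph. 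Taking $N\ge\max(N_0,N_1)$, where $N_1$ is the threshold beyond which $\lceil N(H(X)+\delta)\rceil\le NR$, completes the argument.

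The main obstacle is the quantitative control in the first step: one must convert the qualitative law of large numbers into an explicit statement ``for all $N\ge N_0(\epsilon,\delta)$'', which is where the finiteness of the variance of $-\log P_X(x)$ (guaranteed by the finite alphabet) is essential. Everything else—the counting bound $|T^N_\delta|\le 2^{N(H(X)+\delta)}$ and the construction of the (prefix-free, fixed-length) code—is routine once the typical set is in hand. A remark worth including is that the slack $\delta$ can be taken to zero as $N\to\infty$, so the rate can be pushed arbitrarily close to $H(X)$, which foreshadows the converse (that no rate below $H(X)$ works).
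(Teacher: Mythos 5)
Your proof is correct: it is the standard typical-sequences (AEP) argument, with the probability bound coming from Chebyshev applied to the {\sc IID} sum $-\log P_X^N(x^N)$, the counting bound $|T^N_\delta|\le 2^{N(H(X)+\delta)}$ from summing the lower bound in the typicality condition, and a fixed-length code on the typical set with a reserved failure symbol. The thesis does not actually prove this theorem---it states it as a known result and cites Shannon---so there is nothing to compare against; your argument is the canonical one and is complete, including the point (often glossed over) that the finite alphabet guarantees the finite variance needed to make the law of large numbers quantitative.
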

Furthermore, if one tries to compress the same source using $R<H(X)$
bits per instance, it is virtually certain that information will be
lost.

\begin{definition}
For a discrete, memoryless channel, in which Alice sends a random
variable drawn from $X$ to Bob who receives $Y$, the {\it channel
capacity} is defined by $C\equiv \max_{P_X}I(X:Y)$.
\end{definition}
\begin{theorem} (Noisy channel coding theorem)
Consider Alice communicating with Bob via a discrete memoryless channel
which has the property that if Alice draws from an {\sc IID} source $X$,
Bob receives $Y$.  For any $\epsilon>0$ and $R<C$, for large enough
$N$, there exists an encoding of length $N$ and a decoder such that
$\geq RN$ bits of information are conveyed by the channel for each
encoder-channel-decoder cycle, except with probability $<\epsilon$.
\end{theorem}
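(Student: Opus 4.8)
The plan is to use Shannon's random coding argument together with joint typicality decoding. First I would fix an input distribution $P_X$ that attains (or comes within a small margin of attaining) the maximum in the definition of $C$, so that for the induced joint distribution $P_{XY}$ (with $P_{Y|X}$ the given channel) we have $I(X:Y) > R + \delta$ for some $\delta > 0$; this is possible precisely because $R < C$. The codebook is then built at random: each of the $2^{RN}$ codewords $x^N(m)$, $m\in\{1,\ldots,2^{RN}\}$, is drawn independently according to $P_X^N$, and the resulting codebook is made known to both Alice and Bob. To transmit message $m$, Alice sends $x^N(m)$ through the channel and Bob receives $y^N$ distributed according to $\prod_k P_{Y|X}(y_k\mid x_k(m))$.

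Next I would set up the decoder via joint typicality. Define the jointly typical set $A^N_\epsilon$ as those pairs $(x^N,y^N)$ whose empirical entropies lie within $\epsilon$ of $H(X)$, $H(Y)$ and $H(XY)$ respectively. The asymptotic equipartition property --- a consequence of the law of large numbers applied to the {\sc IID} pairs $(X_k,Y_k)$ --- gives three facts: the actually transmitted pair $(x^N(m),y^N)$ lies in $A^N_\epsilon$ with probability tending to $1$; the size of $A^N_\epsilon$ is at most $2^{N(H(XY)+\epsilon)}$; and a pair drawn independently with the correct marginals lands in $A^N_\epsilon$ with probability at most $2^{-N(I(X:Y)-3\epsilon)}$. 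Bob's rule is to output $\hat m$ if $x^N(\hat m)$ is the \emph{unique} codeword jointly typical with $y^N$, and to declare failure otherwise.

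Then I would bound the average probability of error --- averaged over both the uniformly chosen message and the random codebook --- by a union bound over two events: (i) the transmitted codeword is not jointly typical with $y^N$, which has vanishing probability by the AEP; and (ii) some incorrect codeword $x^N(m')$ with $m'\neq m$ is jointly typical with $y^N$, which by independence of the codewords has probability at most $(2^{RN}-1)\,2^{-N(I(X:Y)-3\epsilon)} \leq 2^{-N(\delta - 3\epsilon)}$, tending to $0$ once $\epsilon$ is chosen small enough. Since the expectation over random codebooks of the average error probability is therefore eventually below $\epsilon/2$, at least one fixed codebook achieves average error below $\epsilon/2$; discarding the worst half of its codewords yields a code with maximal error below $\epsilon$ at the cost of one bit, i.e. rate $R - 1/N$, which still conveys at least $RN$ bits after starting from a rate marginally above the target (recall $R<C$ leaves room to spare).

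The main obstacle is establishing the joint typicality estimates rigorously, in particular the bound $2^{-N(I(X:Y)-3\epsilon)}$ on the probability that a statistically independent codeword is jointly typical with the received word: this is exactly the place where the rate $I(X:Y)$ enters, and it requires handling the three simultaneous empirical-entropy constraints defining $A^N_\epsilon$ and combining them as $H(X)+H(Y)-H(XY)=I(X:Y)$. The random coding step is then routine but only demonstrates existence rather than exhibiting an explicit code, and the passage from average to maximal error by expurgation --- together with the attendant, asymptotically negligible, loss of one bit of rate --- is a subtlety worth spelling out.
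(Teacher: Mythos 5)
Your proposal is essentially correct: it is the standard random-coding argument with joint typicality decoding, including the correct bound $2^{-N(I(X:Y)-3\epsilon)}$ on an independent codeword being jointly typical, the union bound over the $2^{RN}-1$ incorrect codewords, and the expurgation step to pass from average to maximal error. The paper itself offers no proof of this statement --- it is quoted as background with a citation to Shannon --- so there is nothing to compare against; your sketch is the canonical route, and the only points requiring care in a full write-up are the ones you already flag, namely the three simultaneous empirical-entropy constraints in the definition of $A^N_\epsilon$ and the bookkeeping that absorbs the one-bit rate loss from expurgation into the slack afforded by $R<C$.
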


Notice that in the noisy channel coding theorem, the channel is
memoryless, and Alice has an {\sc IID} source.  In other words, all uses
of the channel are independent of one another.  This is the situation
in which Shannon information is useful.  However, in cryptographic
scenarios where the channel may be controlled by an eavesdropper, such
an assumption is not usually valid.  Instead, other entropy measures
have been developed that apply for these cases, as discussed in the
next section.

The relative entropy, which is a measure of the closeness of two
probability distributions, will also be of use.
\begin{definition}
The relative entropy of $P_X$ and $Q_X$ is given by,
\begin{equation}
H(P_X||Q_X)\equiv\sum_x P_X(x)\log\frac{P_X(x)}{Q_X(x)}.
\end{equation}
\end{definition}

\subsubsection{Beyond Shannon Entropy}
\label{beyondshan}
R\'enyi \cite{Renyi} introduced the following generalization of the
Shannon entropy.
\begin{definition}
The R\'enyi entropy of order $\alpha$ is defined by
\nomenclature[aH_]{$H_{\alpha}(X)$}{The R\'enyi entropy of order
  $\alpha$ of random variable $X$.  Special cases are the max-entropy,
  $H_0(X)$, and the min-entropy, $H_{\infty}(X)$}
\begin{equation}
H_{\alpha}(X)\equiv\frac{1}{1-\alpha}\log\sum_{x\in X}P_X(x)^{\alpha}.
\end{equation}
\end{definition}
We have, $H_1(X)\equiv\lim_{\alpha\rightarrow 1}H_{\alpha}(X)=H(X)$.
Two other important cases are $H_0(X)=\log|X|$ and
$H_{\infty}(X)=-\log\max_{x\in X}P_X(x)$.  A useful property is that,
for $\alpha\leq\beta$, $H_{\alpha}(X)\geq H_{\beta}(X)$.

$H_{\infty}(X)$ is sometimes called the min-entropy of $X$.  We will
see that it is important for privacy amplification.  There, the
presence of an eavesdropper means that it no longer suffices to
consider each use of the channel as independent.  The min-entropy
represents the maximum amount of information that could be learned
from the event $X$, so describes the worst case scenario.  In a
cryptographic application, one wants to be assured security even in
the worst case.

In general, R\'enyi entropies are strongly
discontinuous. \footnote{Consider the two distributions $P_X$ and
$Q_X$ defined on $x\in\{1,\ldots,2^n\}$.  Take
$P_X(x=1)=2^{-\frac{n}{4}}$, $P_X(x\neq
1)=\frac{1-2^{-\frac{n}{4}}}{2^n-1}$, and $Q_X$ to be the uniform
distribution.  Comparing min-entropies gives
$H^Q_{\infty}(X)-H^P_{\infty}(X)=\frac{3n}{4}$.  In the large $n$
limit, the two distributions have distance $\approx 2^{-\frac{n}{4}}$
(see Definition \ref{class_dist}), which is exponentially small, while
the difference in min-entropies becomes arbitrarily large.}  However,
smoothed versions of these quantities have been introduced which
remove such discontinuities.  In essence, these smoothed quantities
involve optimizing such quantities over a small region of probability
space.  They have operational significance in cryptography in that
they provide the relevant quantities for information reconciliation
and privacy amplification as will be discussed in Section \ref{QKD}.
It will be the conditional versions of these entropies that concern
us, hence we provide a definition of these directly.
\begin{definition} \cite{RennerWolf}
For a distribution $P_{XY}$, and smoothing parameter $\epsilon>0$, we
define the following smoothed R\'enyi entropies:
\begin{eqnarray}
\label{H_0^e}
H_0^{\epsilon}(X|Y)&\equiv&\min_{\Omega}\max_y\log|\{x:P_{X\Omega|Y=y}(x)>0\}|\\
\label{def_1}
H_{\infty}^{\epsilon}(X|Y)&\equiv&\max_{\Omega}\left(-\log\max_y\max_x P_{X\Omega|Y=y}(x)\right),
\end{eqnarray}
where $\Omega$ is a set of events with total probability at least
$1-\epsilon$, and  $P_{X\Omega|Y=y}(x)$ denotes the probability that $X$
takes the value $x$, and event $\Omega$ occurs given that $Y$ takes
the value $y$.
\end{definition}
More generally, the smooth R\'enyi entropy of order $\alpha$ can be
defined \cite{RennerWolf}, but since, up to an additive constant these
equal either $H_0^{\epsilon}$ (for $\alpha<1$) or
$H_{\infty}^{\epsilon}$ (for $\alpha>1$), we ignore such quantities in
our discussion.  It is also worth noting that for a large number of
independent repetitions of the same experiment, the R\'enyi entropies
tend to the Shannon entropy, that is,
\begin{equation}
\lim_{\epsilon\rightarrow 0}\lim_{n\rightarrow\infty}\frac{H_{\alpha}^{\epsilon}(X^n|Y^n)}{n}=H(X|Y).
\end{equation}

\subsubsection{Quantum Entropic Quantities}
The entropy of a quantum state, $\rho$, is commonly expressed using
the von Neumann entropy,
\begin{equation}
H(\rho)\equiv-\text{tr}\left(\rho\log\rho\right).
\end{equation}
This is the quantum analogue of the Shannon entropy, and is equal to
the Shannon entropy of the probability distribution formed if $\rho$
is measured in its diagonal basis.  Hence, if $\rho$ is classical,
that is $\rho=\sum_x P_X(x)\ketbra{x}{x}$, for some orthonormal basis,
$\{\ket{x}\}$, then $H(\rho)=H(X)$.

In a similar way, one defines the quantum relative entropy between the
states $\rho$ and $\sigma$ by,
\begin{equation}
H(\rho||\sigma)=\text{tr}(\rho\log\rho)-\text{tr}(\rho\log\sigma).
\end{equation}
This again reduces to the classical version if $\sigma$ and $\rho$
have the same diagonal basis.  It has the following important
property \cite{Nielsen&Chuang}.
\begin{theorem}
(Klein's inequality) $H(\rho||\sigma)\geq 0$, with equality if and
only if $\rho=\sigma$.
\end{theorem}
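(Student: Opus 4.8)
The plan is to reduce the operator statement to the scalar inequality $\ln t \le t-1$ (valid for $t>0$, with equality iff $t=1$) by expanding both states in their eigenbases. Write $\rho = \sum_i p_i \ketbra{e_i}{e_i}$ and $\sigma = \sum_j q_j \ketbra{f_j}{f_j}$ with $\{\ket{e_i}\}$ and $\{\ket{f_j}\}$ orthonormal, $p_i,q_j \ge 0$ and $\sum_i p_i = \sum_j q_j = 1$, and set $t_{ij} \equiv |\braket{e_i}{f_j}|^2$; completeness of the two bases makes $(t_{ij})$ doubly stochastic, $\sum_i t_{ij} = \sum_j t_{ij} = 1$. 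A short computation gives $\mathrm{tr}(\rho\log\rho) = \sum_i p_i \log p_i$ and $\mathrm{tr}(\rho\log\sigma) = \sum_{ij} t_{ij} p_i \log q_j$, so, inserting $\sum_j t_{ij} = 1$ into the first term,
\begin{equation}
H(\rho||\sigma) = \sum_{ij} t_{ij}\, p_i \log\frac{p_i}{q_j}.
\end{equation}

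First I would dispose of the degenerate case. With the conventions $0\log 0 = 0$ and $p_i\log(p_i/q_j) = +\infty$ when $p_i>0$ but $q_j = 0$, the sum is $+\infty$ precisely when $\mathrm{supp}(\rho)\not\subseteq\mathrm{supp}(\sigma)$, and there is then nothing to prove; so I may assume $q_j>0$ whenever $t_{ij}p_i>0$. For the terms with $p_i>0$, apply $\ln t \le t-1$ in the form $\log(p_i/q_j) \ge (1 - q_j/p_i)/\ln 2$ (terms with $p_i = 0$ vanish). Multiplying by $t_{ij}p_i \ge 0$, summing, and using double stochasticity of $(t_{ij})$,
\begin{equation}
H(\rho||\sigma)\ \ge\ \frac{1}{\ln 2}\sum_{ij} t_{ij}(p_i - q_j)\ =\ \frac{1}{\ln 2}\Big(\sum_i p_i - \sum_j q_j\Big)\ =\ 0,
\end{equation}
which establishes nonnegativity.

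It remains to analyse equality, which I expect to be the fiddly part of the argument. Equality above forces $p_i = q_j$ for every pair with $t_{ij}p_i>0$, by tightness of $\ln t \le t-1$. Fix $i$ with $p_i>0$; since $\sum_j t_{ij}=1$ there is some $j$ with $t_{ij}>0$, and every such $j$ has $q_j = p_i$, hence $\sigma\ket{e_i} = \sum_j q_j \braket{f_j}{e_i}\ket{f_j} = p_i\sum_j \braket{f_j}{e_i}\ket{f_j} = p_i\ket{e_i}$ (terms with $\braket{f_j}{e_i}=0$ may be added freely). Thus every $\ket{e_i}$ with $p_i>0$ is an eigenvector of $\sigma$ with eigenvalue $p_i$, so on $\mathrm{supp}(\rho)$ the operator $\sigma$ acts as $\sum_{i:p_i>0} p_i\ketbra{e_i}{e_i} = \rho$; and since $\sigma \ge 0$ with $\mathrm{tr}\,\sigma = 1 = \sum_{i:p_i>0} p_i$ already exhausted on $\mathrm{supp}(\rho)$, $\sigma$ must vanish on $\mathrm{supp}(\rho)^\perp$, giving $\sigma = \rho$. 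The converse $H(\rho||\rho) = 0$ is immediate. There is no deep obstacle here beyond handling the null eigenvalues and the $+\infty$ convention consistently, together with the final trace argument that pins down $\sigma$ on the orthogonal complement of $\mathrm{supp}(\rho)$.
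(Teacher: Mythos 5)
The paper does not actually prove this theorem --- it is quoted from Nielsen and Chuang --- and your argument is, in all essentials, the standard proof from that reference: expansion of $\rho$ and $\sigma$ in their eigenbases, the doubly stochastic overlap matrix $t_{ij}=|\braket{e_i}{f_j}|^2$, and reduction to the scalar inequality $\ln t\leq t-1$. Your proof is correct, including the handling of the support condition, the $p_i=0$ terms (whose lower-bound contributions $-t_{ij}q_j/\ln 2$ are nonpositive, so the summed inequality still closes), and the equality case, which you settle somewhat more directly than the textbook by invoking term-by-term tightness of $\ln t\leq t-1$ and then pinning down $\sigma$ on $\mathrm{supp}(\rho)^{\perp}$ by a trace argument.
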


The conditional von Neumann entropy can be defined
by\footnote{Alternative definitions are sometimes given, e.g.\ in
  \cite{Nielsen&Chuang}, which do not contain the
  $H(\rho_B||\sigma_B)$ part \cite{Renner}}
\begin{eqnarray}
H(\rho_{AB}|\sigma_B)&\equiv&-{\text
  tr}(\rho_{AB}(\log\rho_{AB}-\log\openone_A\otimes\sigma_B))\\
&=&H(\rho_{AB})-H(\rho_B)-H(\rho_B||\sigma_B),
\end{eqnarray}
where $\rho_B=\text{tr}_A\rho_{AB}$.  We can also define a version
for the extremal case $\sigma_B=\rho_B$,
\begin{equation}
H(\rho_{AB}|B)\equiv H(\rho_{AB}|\rho_B).
\end{equation}
Likewise, we define quantum min-entropies,
\begin{eqnarray}
H_{\infty}(\rho_A)&\equiv&-\log\lambda_{\max}(\rho_A),\\
H_{\infty}(\rho_{AB}|\sigma_B)&\equiv&-\log\lambda,\\
H_{\infty}(\rho_{AB}|B)&\equiv&\min_{\sigma_B}H_{\infty}(\rho_{AB}|\sigma_B),
\end{eqnarray}
where $\lambda_{\max}(\rho)$ is the largest eigenvalue of
$\rho$, and $\lambda$ is the minimum real number such that
$\lambda\openone_A\otimes\sigma_B-\rho_{AB}\geq 0$.  

\begin{lemma}
Consider the case where system $A$ is classical, that is,
$\rho_{AB}=\sum_iP_I(i)\ketbra{i}{i}\otimes\rho_B^i$.  For such
states, 
\begin{enumerate}[(a)]
\item \label{pointa} $H_{\infty}(\rho_{AB}|B)\geq 0$, and
\item \label{pointb} $H_{\infty}(\rho_{AB}|B)=0$ if there exists some
$j$ such that $\rho_B^j$ is not contained within the support of
$\{\rho_B^i\}_{i\neq j}$.
\end{enumerate}
\end{lemma}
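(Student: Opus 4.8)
The plan is to work directly from the definitions of the quantum conditional min-entropy for classical-quantum states. Write $\rho_{AB}=\sum_i P_I(i)\,\ketbra{i}{i}\otimes\rho_B^i$. By definition, $H_{\infty}(\rho_{AB}|B)=\min_{\sigma_B}H_{\infty}(\rho_{AB}|\sigma_B)=-\log\lambda^*$, where $\lambda^*$ is the smallest number such that there exists a density operator $\sigma_B$ with $\lambda^*\openone_A\otimes\sigma_B-\rho_{AB}\geq 0$. So the whole statement reduces to controlling the quantity $\lambda^*$: part (a) is the claim $\lambda^*\leq 1$, and part (b) gives a sufficient condition for $\lambda^*=1$ (equivalently $H_\infty=0$; note also that since $A$ is classical one automatically has $\lambda^*\geq$ something forcing $H_\infty\le$ ... — in fact the relevant bound $\lambda^*\le 1$ is all we need for (a), and the content of (b) is the matching lower bound $\lambda^*\geq 1$ under the stated hypothesis).

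For part (a), I would exhibit an explicit feasible $\sigma_B$ achieving $\lambda=1$. The natural candidate is $\sigma_B=\rho_B=\sum_i P_I(i)\rho_B^i$. Then I must check $\openone_A\otimes\rho_B-\rho_{AB}\geq 0$, i.e. $\sum_i\ketbra{i}{i}\otimes\bigl(\rho_B-P_I(i)\rho_B^i\bigr)\geq 0$. Because the $\ketbra{i}{i}$ are orthogonal projectors, this block-diagonal operator is positive iff each block $\rho_B-P_I(i)\rho_B^i\geq 0$; and indeed $\rho_B-P_I(i)\rho_B^i=\sum_{k\neq i}P_I(k)\rho_B^k\geq 0$ as a convex combination (with nonnegative weights) of density operators. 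Hence $\lambda=1$ is feasible, so $\lambda^*\leq 1$ and $H_{\infty}(\rho_{AB}|B)\geq 0$.

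For part (b), suppose $\rho_B^j$ is not supported within $\mathrm{supp}\bigl(\{\rho_B^i\}_{i\neq j}\bigr)$; I want to show no feasible $\sigma_B$ can do better than $\lambda=1$, i.e. $\lambda^*\geq 1$. Take any feasible pair $(\lambda,\sigma_B)$, so in particular the $j$-th block gives $\lambda\sigma_B-P_I(j)\rho_B^j\geq 0$, which forces $\mathrm{supp}(\rho_B^j)\subseteq\mathrm{supp}(\sigma_B)$, and also $\lambda\sigma_B\geq P_I(k)\rho_B^k$ for every $k$. The hypothesis supplies a unit vector $\ket{\psi}\in\mathrm{supp}(\rho_B^j)$ orthogonal to $\mathrm{supp}(\rho_B^k)$ for all $k\neq j$; then $\bra{\psi}\rho_B^k\ket{\psi}=0$ for $k\neq j$, so $\bra{\psi}\rho_B\ket{\psi}=P_I(j)\bra{\psi}\rho_B^j\ket{\psi}$, while feasibility of the $j$-block evaluated on $\ket{\psi}$ gives $\lambda\bra{\psi}\sigma_B\ket{\psi}\geq P_I(j)\bra{\psi}\rho_B^j\ket{\psi}>0$. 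I would then compare $\lambda\bra{\psi}\sigma_B\ket{\psi}$ against $\bra{\psi}\rho_B\ket{\psi}$ using $\lambda\sigma_B\ge\rho_B$ (summing the block conditions, or directly, since $\sum_i\ketbra{i}{i}\otimes\rho_B^i$ with the block bound gives $\lambda\sigma_B\ge P_I(i)\rho_B^i$ and hence $\lambda\sigma_B\ge\rho_B$ after averaging) — so $\lambda\bra{\psi}\sigma_B\ket{\psi}\geq\bra{\psi}\rho_B\ket{\psi}=P_I(j)\bra{\psi}\rho_B^j\ket{\psi}$; combined with $\lambda\bra{\psi}\sigma_B\ket{\psi}\ge P_I(j)\bra{\psi}\rho_B^j\ket{\psi}$ I need one more squeeze to pin $\lambda\ge 1$, namely that $\bra{\psi}\sigma_B\ket{\psi}\le\bra{\psi}\rho_B^j\ket{\psi}\cdot(\text{something})$...

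The delicate point — and the step I expect to be the real obstacle — is precisely this last inequality: turning "$\ket{\psi}$ sees only the $j$-th component" into a genuine lower bound $\lambda\ge 1$ rather than just $\lambda>0$. The clean way is to restrict attention to the subspace $\mathcal{S}=\mathrm{supp}(\rho_B^j)\ominus\mathrm{supp}(\{\rho_B^i\}_{i\neq j})$ (nonzero by hypothesis): on $\mathcal{S}$, the projection of $\rho_B$ equals $P_I(j)$ times the projection of $\rho_B^j$, whereas the block condition for index $j$ restricted to $\mathcal{S}$ reads $\lambda P_{\mathcal S}\sigma_B P_{\mathcal S}\ge P_I(j) P_{\mathcal S}\rho_B^j P_{\mathcal S}=P_{\mathcal S}\rho_B P_{\mathcal S}$; but also $\rho_B^j$ restricted to $\mathcal S$ is a nonzero PSD operator whose trace is at most $1$, and $P_I(j)\le 1$, so evaluating traces (or operator norms) on $\mathcal S$ and using $\mathrm{tr}(\sigma_B)=1$ together with $\mathrm{tr}(\rho_B^j)\le 1$, $\mathrm{tr}(\rho_B)=1$, I would extract $\lambda\ge\mathrm{tr}(P_{\mathcal S}\rho_B P_{\mathcal S})/\mathrm{tr}(P_{\mathcal S}\sigma_B P_{\mathcal S})$ and then argue the right-hand side is forced to be $\ge 1$ because $\sigma_B$ must also satisfy $\lambda\sigma_B\ge\rho_B$ globally. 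I expect that chasing the constants correctly here — making sure I use $\mathrm{tr}(\sigma_B)=1$ and normalization of the $\rho_B^i$ in exactly the right place — is where the care is needed; everything else is routine positivity-of-block-diagonal-operators bookkeeping.
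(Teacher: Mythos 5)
Your part (a) is correct and is essentially the paper's own argument: take $\sigma_B=\rho_B$, note that $\openone_A\otimes\rho_B-\rho_{AB}$ is block diagonal with blocks $\rho_B-P_I(i)\rho_B^i=\sum_{k\neq i}P_I(k)\rho_B^k\geq 0$, so $\lambda=1$ is feasible and $H_{\infty}(\rho_{AB}|B)\geq 0$.

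Part (b) has a genuine gap, and you have correctly located it but not diagnosed why it cannot be closed as you have set things up. First, the concrete step ``$\lambda\sigma_B\geq\rho_B$ after averaging'' is false: the block conditions give $\lambda\sigma_B\geq P_I(i)\rho_B^i$ for each $i$, and summing them yields only $n\lambda\sigma_B\geq\rho_B$ with $n$ the number of values of $i$ (indeed for $\rho_{AB}=\frac{1}{n}\openone_A\otimes\rho_B$ the value $\lambda=\frac{1}{n}$ is feasible, so no such implication can hold). Second, and more fundamentally, the statement you are trying to prove --- that \emph{every} feasible pair $(\lambda,\sigma_B)$ has $\lambda\geq 1$ --- is false. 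Take $\rho_B^1=\ketbra{0}{0}$, $\rho_B^2=\ketbra{+}{+}$, $\rho_B^3=\ketbra{2}{2}$ with $P_I(i)=\frac{1}{3}$: the hypothesis of (b) holds with $j=3$, yet a block-diagonal $\sigma_B$ splitting its weight between the qubit subspace and $\ketbra{2}{2}$ achieves $\lambda=\frac{1}{3}\left(2+\frac{1}{\sqrt{2}}\right)<1$ (this is just the guessing probability, which is below $1$ because $\rho_B^1$ and $\rho_B^2$ are non-orthogonal). So under your reading of the definition as an optimization over $\sigma_B$, (b) is simply not true, and no amount of care with the constants on $\mathcal{S}$ will rescue it. The paper's proof never optimizes over $\sigma_B$: it evaluates everything at $\sigma_B=\rho_B$, for which the $j$th block of $\lambda\openone_A\otimes\rho_B-\rho_{AB}$ is $\lambda\sum_{i\neq j}P_I(i)\rho_B^i+(\lambda-1)P_I(j)\rho_B^j$; your witness vector $\ket{\psi}$ annihilates the first term, so the expectation is $(\lambda-1)P_I(j)\bra{\psi}\rho_B^j\ket{\psi}<0$ for every $\lambda<1$, forcing $\lambda\geq 1$ and, together with (a), $H_{\infty}(\rho_{AB}|B)=0$. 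Once you fix $\sigma_B=\rho_B$, your own witness argument finishes (b) in one line.
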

\begin{proof}
For a state of this form, $\lambda\openone_A\otimes\sigma_B-\rho_{AB}$
is block diagonal with block entries $\lambda\sum_{i\neq
j}P_I(i)\rho_B^i+(\lambda-1)P_I(j)\rho_B^j$, for some $j$.  If
$\lambda>1$, these are positive for all $j$, from which (\ref{pointa})
then follows using the definition of $H_{\infty}(\rho_{AB}|B)$.

In order that $H_{\infty}(\rho_{AB}|B)=0$, there must exist a $j$ such
that for all $\epsilon\equiv 1-\lambda>0$, $(1-\epsilon)\sum_{i\neq
j}P_I(i)\rho_B^i-\epsilon P_I(j)\rho_B^j$ is negative.  This implies
that for some $j$, $\rho_B^j$ is not contained within the support of
$\{\rho_B^i\}_{i\neq j}$, hence establishing (\ref{pointb}).
\end{proof}
Recall from Section \ref{beyondshan} that the classical min-entropy
can be associated with information content in the worst case scenario.
The same is true here.  In the extremal case, there exists some $j$
such that $\rho_B^j$ is not contained within the support of
$\{\rho_B^i\}_{i\neq j}$.  Then there exists some measurement on system
$B$ for which at least one outcome identifies the state of system $A$
precisely, and the corresponding min-entropy is 0.

One can also define smoothed versions of these entropies.  The
$\epsilon$-smooth min-entropy of $\rho_{AB}$ given $\sigma_B$ is given
by 
\begin{equation}
H_{\infty}^{\epsilon}(\rho_{AB}|\sigma_B)\equiv\min_{\bar{\rho}_{AB}}H_{\infty}(\bar{\rho}_{AB}|\sigma_B)
\end{equation}
where the minimum is over the set of operators satisfying
$D(\bar{\rho}_{AB},\rho_{AB})\leq\epsilon$, with ${\text
tr}(\bar{\rho}_{AB})\leq 1$.  In other words, the smoothed version of the
min-entropy is formed by considering density matrices close to $\rho_
{AB}$.  This is in direct analogy with the classical case, where
nearby probability distributions were considered using parameter,
$\Omega$.

\nomenclature[aInf]{$H_{\infty}^{\epsilon}(\rho_{AB}\arrowvert B)$}{The
$\epsilon$-smooth min-entropy of $\rho_{AB}$ given the system in
$\mathcal{H}_B$}
We also define
\begin{equation}
\label{smq}
H_{\infty}^{\epsilon}(\rho_{AB}|B)\equiv\min_{\sigma_B}H_{\infty}^{\epsilon}(\rho_{AB}|\sigma_B),
\end{equation}
where we give the second Hilbert space of the system to the eavesdropper.

\section{Quantum Key Distribution}
\label{QKD}
\nomenclature[zQKD]{{\sc QKD}}{Quantum Key Distribution} Quantum key
distribution is one of the big success stories of quantum information
theory.  It allows two separated agents, Alice and Bob, to generate a
shared random string about which an eavesdropper, Eve, has no
information.  Such a random string can form the key bits of a one-time
pad, for example, and hence allow secure communication between Alice
and Bob.  This task is known to be impossible classically, without
making computational assumptions, and is historically the first
instance of a quantum protocol.  Really the task should be called key
expansion, since an initial shared random string is needed for
authentication.  We avoid this distinction by giving Alice and Bob
shared authenticated classical channels (upon which Eve can only
listen, but not modify), and a completely insecure quantum channel.

Eve can always perform a denial of service attack, blocking
communication between Alice and Bob.  However, we assume instead, that
she wants to learn some part of their message.  There are several
steps common to most key distribution protocols.  Exchange of
information over the insecure channel, {\it reconciliation} of this
information (i.e.\ error correction) and then {\it privacy
amplification} (i.e.\ reducing Eve's information to a negligible
amount).

Often, the quantum part of the protocol is restricted to the first
step.  A quantum channel is used to set up correlated random strings,
after which classical reconciliation and privacy amplification
procedures are used.  In essence, the security of the protocol relies
on the fact that an eavesdropper can neither copy a quantum state, nor
learn anything about it without disturbance.  We will not discuss the
alternative approach, where these latter procedures are also quantum,
and at the end of the protocol Alice and Bob possess shared singlets.
For concreteness, we now outline Bennett and Brassard's 1984 protocol,
{\sc BB}84\nomenclature[zBB84]{{\sc BB}84}{Bennett and Brassard's 1984 quantum key
distribution scheme}.

\begin{protocol}
\label{prot_BB84}
Define 2 bases, $\mathcal{B}_0\equiv\{\ket{0},\ket{1}\}$, and
$\mathcal{B}_1\equiv\{\ket{+},\ket{-}\}$, where
$\ket{\pm}\equiv\frac{1}{\sqrt{2}}\left(\ket{0}\pm\ket{1}\right)$.

\begin{enumerate}
\item Alice selects a set of bits uniformly at random, $\{x_i\}$,
  along with a uniform random set of bases $\{A_i\}$, where
  $x_1\in\{0,1\}$, and $A_i\in\{\mathcal{B}_0,\mathcal{B}_1\}$.
\item She encodes bit $x_i$ using basis $A_i$, where $0$ is
  encoded as $\ket{0}$ or $\ket{+}$, and we use $\ket{1}$ or $\ket{-}$
  to encode $1$.
\item Alice sends the encoded qubits to Bob through the quantum channel.
\item Bob selects a random set of bases $\{B_i\}$, with
  $B_i\in\{\mathcal{B}_0,\mathcal{B}_1\}$, and measures the $i$th
  incoming qubit in basis $B_i$.
\item Once Bob has made all his measurements, Alice announces the
  bases she used over the public channel, and Bob does the same.
\item \label{sift} 
  ({\it sifting}) Alice and Bob discard all the measurements made
  using different bases.  On average half the number of qubits sent by
  Alice remain.  In the absence of noise and an eavesdropper the leftover
  strings are identical.
\item \label{error estimate} Alice and Bob compare the values of a
  subset of their bits, selected at random.  This allows them to
  estimate the error rate.  If too high, they abort.
\item \label{recon&amp} Alice and Bob perform reconciliation and
  privacy amplification on the remaining bits.
\end{enumerate}
\end{protocol}

\subsection{Information Reconciliation}
Information reconciliation is error correction.  In essence, Alice
wants to send error correction information to Bob so that he can make
his partially correlated string identical to hers.  Since this
information will be sent over a public channel on which Eve has full
access, Alice wishes to minimize the error correction information at
the same time as providing a low probability of non-matching strings
in the result.

The task of information reconciliation can be stated as follows.
Alice has string $X$ and Bob $Y$, these being chosen with joint
distribution $P_{XY}$.  Alice also possesses some additional
independent random string $R$.  What is the minimum length of string
$S=f(X,R)$ that Alice can compute such that $X$ is uniquely obtainable
by Bob using $Y$, $S$ and $R$, except with probability less than
$\epsilon$?

In \cite{RennerWolf}, this quantity is denoted
$H_{enc}^{\epsilon}(X|Y)$ and is tightly bounded by the relation
\begin{equation}
H_0^{\epsilon}(X|Y)\leq H_{enc}^{\epsilon}(X|Y)\leq
H_0^{\epsilon_1}(X|Y)+\log\frac{1}{\epsilon_2},
\end{equation}
where $\epsilon_1+\epsilon_2=\epsilon$.

It is intuitively clear why $H_0^{\epsilon}(X|Y)$ is the correct
quantity.  Recall the definition (\ref{H_0^e})
\begin{equation*}
H_0^{\epsilon}(X|Y)\equiv\min_{\Omega}\max_y\log|\{x:P_{X\Omega|Y=y}(x)>0\}|,
\end{equation*}
where $\Omega$ is a set of events with total probability at least
$1-\epsilon$.  The size of the set of strings $x$ that could have
generated $Y=y$ given $\Omega$ is $|\{x:P_{X\Omega|Y=y}(x)>0\}|$.
Alice's additional information needs to point to one of these.  It
hence requires $\log |\{x:P_{X\Omega|Y=y}(x)>0\}|$ bits to encode.
Since Alice does not know $y$, she must assume the worst, hence we
maximize on $y$.  Furthermore, since some error is tolerable, we
minimize on $\Omega$, by cutting away unlikely events from the
probability distribution.

\begin{figure}
\begin{center}
\includegraphics[width=\textwidth]{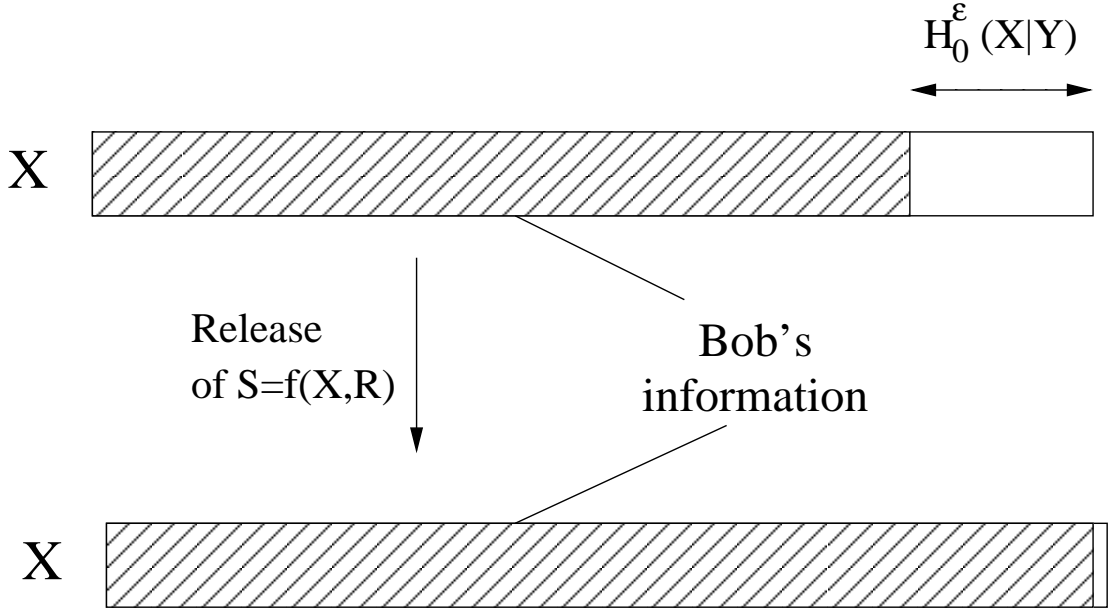}
\caption{Schematic showing information reconciliation.  The release of
$S=f(X,R)$ reduces Bob's uncertainty on Alice's string, $X$, to a
negligible amount.}
\label{fig:reconciliation}
\end{center}
\end{figure}


\subsection{Privacy Amplification}
\label{priv_amp}
In essence, this task seeks to find the maximum length of string Alice
and Bob can form from their shared string such that Eve has no
information on this string.

This task can be stated more formally as follows.  Alice possesses
string $X$ and Eve $Z$\footnote{For the moment we consider the case
where Eve's information is classical.}, distributed according to
$P_{XZ}$.  Alice also has some uncorrelated random string $R$.  What
is the maximum length of a binary string $S=f(X,R)$, such that for a
uniform random variable $U$ that is independent of $Y$ and $R$, we
have $S=U$, except with probability less than $\epsilon$?

This quantity, denoted $H_{ext}^{\epsilon}(X|Z)$, has been bounded
\cite{RennerWolf} as follows:
\begin{equation}
\label{amp_eqn}
H_{\infty}^{\epsilon_1}(X|Z)-2\log\frac{1}{\epsilon_2}\leq
H_{ext}^{\epsilon}(X|Z)\leq H_{\infty}^{\epsilon}(X|Z),
\end{equation}
where $\epsilon_1+\epsilon_2=\epsilon$.

Let us give a brief intuition as to why this is the correct quantity.
Recall the definition
\begin{equation*}
H_{\infty}^{\epsilon}(X|Z)\equiv\max_{\Omega}\left(-\log\max_z\max_x P_{X\Omega|Z=z}(x)\right)
\end{equation*}
Given that she holds $Z=z$, the minimum amount of information Eve
could learn from $X$ is $-\log \max_x P_{X|Z=z}(x)$.  We can think of
this as the information in $X$ that is independent of $Z=z$ in the
worst case.  If we minimize this quantity on $z$, which corresponds to
the worst possible case for Alice, we have $-\log \max_z \max_x
P_{X|Z=z}(x)$.  In many scenarios there is a small probability that an
eavesdropper can guess Alice's string perfectly, in which case this
quantity is zero.  We hence maximize over sets of events $\Omega$ that
have total probability at least $1-\epsilon$.  This introduces some
probability of error, but gives a significant increase in the size of
the min-entropy over its non-smoothed counterpart.

\subsubsection{Extractors And $universal_2$ Hashing}
\label{extractors}
Privacy amplification is often studied using the terminology of
extractors.  Roughly speaking, an extractor is a function that takes
as input $X$, along with some additional uniformly distributed, and
uncorrelated randomness, $R$, and returns $S=f(X,R)$ that is almost
uniformly distributed.  For a strong extractor, we have the additional
requirement that $S$ is independent of $R$.  After defining a distance
measure for classical probability distributions, we give a formal
definition of a strong extractor.
\begin{definition}
\label{classdist}
The classical distance\footnote{This is a special case of the trace
distance defined in Section \ref{tracedist} and hence we denote it by
the same symbol, $D$.  It is related to the maximum probability of
successfully distinguishing the two distributions in the same way that
the trace distance of two quantum states is related to the maximum
probability of distinguishing them (cf.\ Appendix \ref{AppA}).} between two
probability distributions $P$ and $Q$ defined on the domain $X$ is
given by
\begin{equation}
\label{class_dist}
D(P_X,Q_X)\equiv\frac{1}{2}\sum_{x\in X}\left|P_X(x)-Q_X(x)\right|.
\end{equation}
\end{definition}
\begin{definition}
\label{extractor}
Let $U_S$ be the uniform distribution over the members of $S$.  A
strong $(\tau,\kappa,\epsilon)$-extractor is a function that takes
inputs $X$ and $R$ and returns $S=f(X,R)$, where $|S|=2^{\tau}$, such
that if $H_{\infty}(X)\geq\kappa$, and $R$ is uniformly distributed
and independent of $X$, then $D(P_{SR},P_{U_S}P_R)\leq\epsilon$, where
$P_{U_S}$ is the uniform distribution on $S$.
\end{definition}
A small distance, $\epsilon$, between two probability distributions is
essentially the same as saying that the two distributions are equal,
except with probability $\epsilon$.

As an example of an extractor, consider a $universal_2$ hash function
\cite{CW,WC}.
\begin{definition}
A set of hash functions, $F$ from $X$ to $S$ is $universal_2$ if, for
any distinct $x_1$ and $x_2$ in $X$, then, for some function $f\in F$
picked uniformly at random, the probability that $f(x_1)=f(x_2)$ is at
most $\frac{1}{|S|}$.
\end{definition}
We now show that this satisfies the necessary requirements for a strong
extractor.

Consider some probability distribution $P_V$ on $V$, and take $U_V$ to
be the uniform distribution over the same set.  We have
\begin{eqnarray}
\nonumber
D(P_V,P_{U_V})&=&\frac{1}{2}\sum_{v\in
  V}\left|P_V(v)-\frac{1}{|V|}\right|\\
\nonumber
&\leq&\frac{\sqrt{V}}{2}\sqrt{\sum_{v\in V}(P_V(v)-\frac{1}{|V|})^2}\\
&=&\frac{\sqrt{V}}{2}\sqrt{\sum_{v\in V}P_V(v)^2-\frac{1}{|V|}},
\label{disteqn}
\end{eqnarray}
where we have used the Cauchy-Schwarz inequality.
\footnote{The Cauchy-Schwarz inequality states that $|{\bf x}.{\bf
y}|^2\leq |{\bf x}|^2|{\bf y}|^2$.  The version we use is for the case
{\bf y}=(1,1,\ldots, 1).}  Hence, the collision probability,
$P_C(V)\equiv \sum_{v\in V}P_V(v)^2$, i.e.\ the probability that two
events each drawn from $P_V$ are identical, allows us to bound the
distinguishability of $P_V$ from uniform.

To show that a $universal_2$ hash function is an extractor, we take
$V$ to be $SR$.  We have
\begin{eqnarray}
\nonumber
P_C(SR)
&=&P_C(R)P_C(S)\\
\nonumber 
&\leq&\frac{1}{|R|}\left(P_C(X)+\frac{1}{|S|}\right)\\
&=&\frac{1}{|R|}\left(2^{-H_2(X)}+\frac{1}{|S|}\right),
\end{eqnarray}
where the inequality follows from the definition of a $universal_2$
hash function.
Thus, using (\ref{disteqn}), we have
\begin{equation}
D(P_{SR},P_{U_S}P_R)\leq\frac{\sqrt{|S|}}{2}2^{-\frac{1}{2}H_2(X)}.
\end{equation}

Since $H_2(X)\geq H_{\infty}(X)$, we have shown that a $universal_2$
hash function is a strong
$(\tau,\kappa,\frac{1}{2}2^{\frac{1}{2}(\tau-\kappa)})$-extractor.
Alternatively, if we wish to use $universal_2$ hashing, and have
$H_{\infty}(X)\geq\kappa$, then to ensure that the output is
$\epsilon$-close to the uniform distribution, we can extract a
string whose length is bounded by
\begin{equation}
\label{maxlength}
\tau\leq\kappa-2\log\frac{1}{2\epsilon}.
\end{equation}
The use of a hash function for privacy amplification is illustrated in
Figure \ref{fig:priv_amp}.

\begin{figure}
\begin{center}
\includegraphics[width=\textwidth]{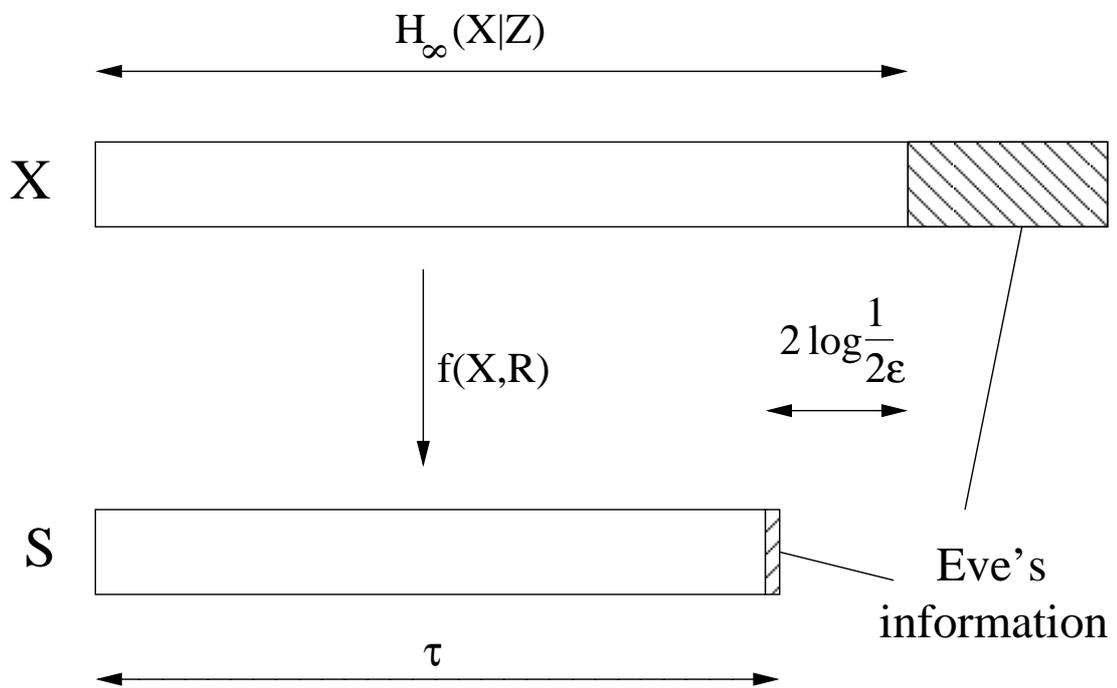}
\caption{Schematic showing privacy amplification of string $X$ to form
  $S$ using a $universal_2$ hash function.}
\label{fig:priv_amp}
\end{center}
\end{figure}

The drawback of $universal_2$ hashing is that in order to pick a
function randomly from the members of a $universal_2$ set requires a
long random string, $R$.  Many $universal_2$ classes require the
string $R$ to have length equal to that of the string being hashed,
although more efficient classes are known for cases in which the final
string is very short compared to the initial one \cite{CW,WC}.  For
more general extractors, constructions which require a much shorter
$R$ are known (see \cite{Shaltiel} for a recent review).

\subsubsection{Privacy Amplification}
In the context of privacy amplification, there is additional
information held by Eve.  We denote this using the random variable
$Z$.  Again, Alice wants to compress her string, $X$, using public
randomness\footnote{Her randomness is public because she needs to send
it to Bob in order that he can do the same.}, $R$, to form $S=f(X,R)$,
such that
\begin{equation}
\label{d2}
D(P_{SR|Z=z},P_{U_S}P_R)\leq\epsilon.
\end{equation}

Consider applying the extractor property (see Definition
\ref{extractor}) to the distribution $P_{X|Z=z}$.  This gives that
(\ref{d2}) is satisfied for $H_{\infty}(X|Z=z)\geq\kappa$.  As we
showed in the previous section, the string can be compressed to length
$H_{\infty}(X|Z=z)-2\log\frac{1}{2\epsilon}$.  Alice does not know the
value of this quantity, since she does not know $z$.  However, the
following lemma allows us to derive a useful bound.

\begin{lemma}
\label{cheb}
For a non-negative random variable, $x$, $-\log x>-\log\langle
x\rangle-t$, except with probability less than $2^{-t}$.
\end{lemma}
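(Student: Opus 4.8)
The plan is to recognise this lemma as a one-line consequence of Markov's inequality (despite the label, it is the first-moment bound, not the variance form of Chebyshev's inequality). The first step is to remove the logarithms. Since $y\mapsto-\log y$ is a strictly decreasing bijection of the positive reals, the event that the asserted inequality \emph{fails}, namely $-\log x\le-\log\langle x\rangle-t$, is exactly the event $x\ge 2^{t}\langle x\rangle$. Here I would assume $0<\langle x\rangle<\infty$, the only non-degenerate case: if $\langle x\rangle=0$ then $x=0$ almost surely and the claim is vacuous, and any outcome with $x=0$ automatically lies in the good event because $-\log 0=+\infty$. So it suffices to prove $\Pr\!\left[\,x\ge 2^{t}\langle x\rangle\,\right]\le 2^{-t}$.

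The second step is the estimate itself. I would apply Markov's inequality, $\Pr[x\ge a]\le\langle x\rangle/a$ for $a>0$, with the choice $a=2^{t}\langle x\rangle$, which immediately gives $\Pr[x\ge 2^{t}\langle x\rangle]\le\langle x\rangle/(2^{t}\langle x\rangle)=2^{-t}$. If a self-contained derivation is preferred over citing Markov by name, the same bound drops out in one line: write $\langle x\rangle$ as an expectation, discard the (non-negative) contribution of the outcomes with $x<a$, and bound what remains below by $a\,\Pr[x\ge a]$; rearranging yields $\Pr[x\ge a]\le\langle x\rangle/a$.

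There is really no obstacle to speak of; the only point deserving comment is the discrepancy between the strict $<$ in the statement and the $\le$ that Markov supplies. The inequality is strict exactly when the Markov bound is not saturated --- that is, unless $x$ is supported on $\{0,\,2^{t}\langle x\rangle\}$ --- so the cleanest reading is that ``with probability less than $2^{-t}$'' should be understood as ``with probability at most $2^{-t}$''. This is all that is needed for the privacy-amplification bound in the following section, where the lemma is applied with $x=2^{-H_{\infty}(X|Z=z)}$ regarded as a random variable over the choice of $z$.
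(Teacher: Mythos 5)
Your proposal is correct and is essentially the paper's own argument: the paper likewise rewrites the failure event as $x\geq 2^{t}\langle x\rangle$ and invokes the first-moment bound $P(x\geq\alpha)\leq\langle x\rangle/\alpha$ (which it calls Chebyshev's inequality in a footnote, though as you note it is really Markov's). Your added remarks on the degenerate case and on strict versus non-strict inequality are sensible refinements but do not change the route.
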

\begin{proof}
The probability that $-\log x>-\log\langle x\rangle-t$ is the same as
the probability that $x>2^t\langle x\rangle$.  Chebyshev's inequality
then gives the result\footnote{Chebyshev's inequality states that for
a non-negative random variable, $x$, and positive $\alpha$,
$P(x\geq\alpha)\leq\frac{\langle x\rangle}{\alpha}$, and is
straightforward to prove \cite{Mackay}.}.
\end{proof}
As a straightforward corollary to Lemma \ref{cheb}, we have
$H_{\infty}(X|Z=z)\geq H_{\infty}(X|Z)-t$, except with probability
$2^{-t}$, where the conditional min-entropy is defined by
\begin{equation}
H_{\infty}(X|Z)\equiv-\log\sum_{z\in Z}P_Z(z)\max_x P_{X|Z=z}(x)
\end{equation}
Hence, $H_{\infty}(X|Z)+\log\epsilon$ bounds $H_{\infty}(X|Z=z)$,
except with probability $\epsilon$.

In summary, Alice and Bob, by exchanging $R$ publicly, can compress
their shared random string $X$ which is correlated with a string $Z$
held by Eve, to a string $S$ of length roughly equal to
$H_{\infty}(X|Z)$, which is essentially uncorrelated with $Z$ and $R$.

To gain an intuition about privacy amplification, it is helpful to
consider an example.  The set of all functions from $k$ bits to
$\tau<k$ bits forms a $universal_2$ set (albeit an extremely large
one!).  If one picks randomly from amongst this set, then (with high
probability) the chosen function has the following property.  If two
strings are mapped under the chosen function, then the (Hamming)
distance between the resulting strings is independent of that of the
originals.  Thus nearby strings are (with high probability) mapped to
those which are not nearby.  If Eve knows the original string, but
with a few errors, then after it has been mapped, her error rate on
the final string will likely be large.  The probability of successful
amplification is bounded by the probability that Eve can guess the
initial string, since if she guesses correctly, she can discover the
final one with certainty.

\subsubsection{Significance Of Smoothed Entropies}
The bounds we have presented on the length of secure key we can
extract are not tight.  In Section \ref{priv_amp}, we alluded to the
fact that the length of extractable key is tightly bounded by smooth
versions of the R\'enyi entropies (see Equation (\ref{amp_eqn})).  We
briefly explain why this is the case.  Recall the definition in
(\ref{def_1})
\begin{equation*}
H_{\infty}^{\epsilon}(X|Y)\equiv\max_{\Omega}\left(-\log\max_y\max_xP_{X\Omega|Y=y}(x)\right),
\end{equation*}
where $\Omega$ is a set of events with total probability at least
$1-\epsilon$.

The smooth entropy quantity is formed from the sharp version by
cutting away small sections of the probability distribution, and hence
only considering the events $\Omega$.  Since the events cut away occur
with probability at most $\epsilon$, there is only a small affect on
the probability of error.  This may lead to a significant change in
the entropy quantity\footnote{See the discussion on the discontinuous
nature of R\'enyi entropies in Section \ref{beyondshan}.}, and hence a
much larger key can be extracted than that implied by the sharp
entropy quantity.

\subsubsection{Quantum Adversaries} 
\label{q_adv}
Everything we have discussed in this section so far has been with
respect to an eavesdropper holding classical information (the string
$Z$).  More generally, and of particular relevance when discussing
{\sc QKD}, the eavesdropper may attack in such a way that at the end of the
protocol she holds a quantum state that is correlated with Alice's
string.

Alice and Bob's procedure remains unchanged, so their final state at
the end of the protocol (after privacy amplification) is classical,
and corresponds to a string $S$.  Eve, on the other hand, possesses a
quantum system in Hilbert space
$\mathcal{H}_E$.\nomenclature[XHilb]{$\mathcal{H}$}{Hilbert
space} Like in the classical case, security is ensured by constraining
the trace distance.  We demand
\begin{equation}
\label{d3}
D(\rho_{SE},\rho_{U_S}\otimes\rho_E)\leq\epsilon,
\end{equation}
where $\rho_{U_S}$ denotes the maximally mixed state in
$\mathcal{H}_S$.

The trace distance cannot increase under trace-preserving quantum
operations (i.e.\ unitary operations and alterations of system size),
nor, on average, after measurements \cite{Nielsen&Chuang}.  Hence, a
key which satisfies (\ref{d3}) is secure in a {\it composable} manner.
That is, the string $S$ can be treated as random and uncorrelated with
another system in any application, except with probability $\epsilon$.
We need to show how to turn the string $X$, correlated with Eve's
quantum system, into the string $S$ which is virtually uncorrelated.
It turns out that a $universal_2$ hash function is suitable for this
purpose, like in the classical case.

Including the classical spaces used to define the string $X$ and the
random string, $R$, used to choose the hash function, the state of the
system is
\begin{equation}
\rho_{XER}=\sum_{r\in R}\sum_{x\in
  X}\left(P_R(r)P_X(x)\ketbra{x}{x}\otimes\rho_E^x\otimes\ketbra{r}{r}\right).
\end{equation}
Having applied the hash function $f\in F$, the state becomes
\begin{equation}
\rho_{SER}=\sum_{r\in R}\sum_{s\in
  S}\left(P_R(r)P_S(s)\ketbra{s}{s}\otimes\rho_E^s\otimes\ketbra{r}{r}\right),
\end{equation}
where $\rho_E^s=\sum_{x\in f^{-1}(s)}\rho_E^x$.  Ideally, the state of
the system in $\mathcal{H}_S$ would look uniform from Eve's point of
view, even if she was to learn $R$.  The variation from this ideal can
be expressed in terms of the trace distance,
$D(\rho_{SER},\rho_{U_S}\otimes\rho_{ER})$, and is bounded in the
following theorem \cite{Renner}.
\begin{theorem}
\label{Ren1}
If $f$ is chosen amongst a $universal_2$ set of hash functions, $F$,
using random string $R$, and is used to map $X$ to $S$ as described
above, then for $|S|=2^{\tau}$, we have
\begin{equation}
\label{sharp_min_ent}
D(\rho_{SER},\rho_{U_S}\otimes\rho_{ER})\leq\frac{1}{2}2^{-\frac{1}{2}\left(H_{\infty}(\rho_{XE}|E)-\tau\right)}.
\end{equation}
\end{theorem}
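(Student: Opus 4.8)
The plan is to follow the classical extractor argument of Section~\ref{extractors} almost line for line, with the collision probability $P_C$ replaced by a \emph{conditional} quantum ``two-norm'' quantity in which the eavesdropper's system is reweighted by a reference state. Fix any full-rank $\sigma_E$ on $\mathcal{H}_E$ and define the weighted Hilbert--Schmidt norm $\|M_{SER}\|_{2,\sigma_E}\equiv\big\|(\openone_{SR}\otimes\sigma_E^{-1/4})\,M_{SER}\,(\openone_{SR}\otimes\sigma_E^{-1/4})\big\|_2$. Using H\"older's inequality one obtains the operator Cauchy--Schwarz bound
\begin{equation*}
\|M_{SER}\|_1\;\leq\;\sqrt{|S|\,|R|}\;\|M_{SER}\|_{2,\sigma_E},
\end{equation*}
valid for Hermitian $M_{SER}$ whose $SR$-part lives in a space of dimension $|S||R|$. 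Taking $M_{SER}=\rho_{SER}-\rho_{U_S}\otimes\rho_{ER}$ plays exactly the role of the Cauchy--Schwarz step~(\ref{disteqn}) in the classical proof, reducing the theorem to an estimate of a single trace expression.

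The second step is to expand $\|\rho_{SER}-\rho_{U_S}\otimes\rho_{ER}\|_{2,\sigma_E}^2$. Because $S$ and $R$ are classical and the marginal on $\mathcal{H}_E$ is untouched by hashing, the cross terms in the labels $r$ and $s$ vanish, leaving a sum over $r$, weighted by $P_R(r)^2=|R|^{-2}$, of the ``collision minus uniform'' term
\begin{equation*}
\sum_s\mathrm{tr}\left[\big(\sigma_E^{-1/4}\rho_E^{(r,s)}\sigma_E^{-1/4}\big)^2\right]\;-\;\frac{1}{|S|}\,\mathrm{tr}\left[\big(\sigma_E^{-1/4}\rho_E\,\sigma_E^{-1/4}\big)^2\right],
\end{equation*}
where $\rho_E^{(r,s)}=\sum_{x:\,f_r(x)=s}P_X(x)\rho_E^x$ is the unnormalised $E$-block carrying hash value $s$. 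Averaging the first term over the uniform choice of hash function and splitting the double sum over preimages into the diagonal ($x=x'$) and off-diagonal ($x\neq x'$) contributions, the $universal_2$ property bounds the off-diagonal part by $\tfrac{1}{|S|}\mathrm{tr}[(\sigma_E^{-1/4}\rho_E\sigma_E^{-1/4})^2]$, which cancels the uniform subtraction \emph{exactly} --- the direct analogue of the cancellation of $1/|S|$ against $-1/(|S||R|)$ in the classical computation. What remains is $\mathrm{tr}[((\openone\otimes\sigma_E^{-1/4})\rho_{XE}(\openone\otimes\sigma_E^{-1/4}))^2]$, i.e.\ $2^{-H_2(\rho_{XE}|\sigma_E)}$ in terms of the conditional collision entropy. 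Collecting the $\sqrt{|S||R|}$ prefactor and the surviving $|R|^{-1}$ from the squared probabilities yields $D(\rho_{SER},\rho_{U_S}\otimes\rho_{ER})\leq\tfrac12\,2^{-\frac12(H_2(\rho_{XE}|\sigma_E)-\tau)}$.

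Finally, since this holds for every full-rank $\sigma_E$, I would use monotonicity of the R\'enyi entropies in the order parameter --- in conditional quantum form, $H_2(\rho_{XE}|\sigma_E)\geq H_\infty(\rho_{XE}|\sigma_E)$ --- together with the definition $H_\infty(\rho_{XE}|E)=\min_{\sigma_E}H_\infty(\rho_{XE}|\sigma_E)$ from the preliminaries, to deduce $H_2(\rho_{XE}|\sigma_E)\geq H_\infty(\rho_{XE}|E)$ and hence the stated inequality~(\ref{sharp_min_ent}). I expect the main obstacle to be the first step, and more precisely the realisation that the system $\mathcal{H}_E$ must be reweighted by $\sigma_E^{-1/4}$ on each side: without this, the naive bound $\|M\|_1\leq\sqrt{|S||R|\dim\mathcal{H}_E}\,\|M\|_2$ drags in a factor $\dim\mathcal{H}_E$ and only reproduces the \emph{unconditional} collision entropy, which is useless against an adversary whose $\mathcal{H}_E$ is strongly correlated with $X$. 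By contrast the hash-averaging step is just the classical calculation of Section~\ref{extractors} with traces of operator products in place of products of probabilities, and the closing entropy inequality is routine given the definitions already in place.
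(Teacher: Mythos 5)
Your proposal is correct, and it is essentially the standard argument for this theorem: the paper itself states the result without proof, citing Renner's thesis, and your three steps (the $\sigma_E^{-1/4}$-weighted Cauchy--Schwarz bound $\|M\|_1\leq\sqrt{\operatorname{tr}(\openone_{SR}\otimes\sigma_E)}\,\|M\|_{2,\sigma_E}$, the hash-averaged collision computation yielding $2^{-H_2(\rho_{XE}|\sigma_E)}$, and the monotonicity $H_2\geq H_\infty$ followed by optimising $\sigma_E$) reproduce exactly the proof given there. You also correctly identify the one genuinely non-classical ingredient, namely that the reference-state reweighting of $\mathcal{H}_E$ is what turns the unconditional collision entropy into the conditional one and avoids the useless $\dim\mathcal{H}_E$ factor.
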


Hence, like in the classical case, Alice and Bob can exchange a random
string, $R$, publicly in order to compress their shared random string,
$X$, which is partly correlated with a quantum system held by Eve to a
string $S$ of length roughly equal to $H_{\infty}(\rho_{XE}|E)$, which
is essentially uncorrelated with Eve's system and with $R$.

A similar relation in terms of the smoothed version of the quantum
min-entropy (see Equation (\ref{smq})) provides a better bound on the
key length \cite{Renner}.  Specifically, Equation (\ref{sharp_min_ent})
in Theorem \ref{Ren1} is replaced by
\begin{equation}
\label{sm_min_ent}
D(\rho_{SER},\rho_{U_S}\otimes\rho_{ER})\leq\epsilon+\frac{1}{2}2^{-\frac{1}{2}\left(H_{\infty}^{\epsilon}(\rho_{XE}|E)-\tau\right)}.
\end{equation}

Other extractors more efficient in the length of random string
required are known (see, for example \cite{LRVW}, for ones that
require order $\log{n}$ bits to compress an $n$ bit string.)  However,
these extractors have not been proven to be secure against quantum
attacks, and hence we choose not to use them in this thesis.

\bigskip
Note that privacy amplification using $universal_2$ hash functions has
certain composability properties.  That the final string produced
looks uniform to Eve, means that even if all but one of the bits of
the string are revealed, the final bit remains uniformly distributed
from Eve's perspective.

\section{Types Of Security}
We outline here the various types of security to which we will
refer: 
\begin{enumerate}
\item {\bf Unconditional Security:}\qquad Here the security relies
  only on the laws of physics being correct and applies even against a
  cheater with unlimited computational power (see for example
  \cite{Rudolph,Mayers,Kent_relBC}).  A party can always cause the
  protocol to abort, but can never achieve any useful gain (i.e.\ can
  never discover any private information, or affect the outcome of the
  protocol).
\item {\bf Cheat-evident Security \cite{Kent_INItalk,CK1}:}\qquad The
  protocol is insecure in some way, but any useful cheating attack
  will be detected with certainty.
\item { \bf Cheat-sensitive Security
  \cite{HardyKent,Spekkens&Rudolph_CSWCF,Aharonov&2}:}\qquad The
  protocol has the property whereby any useful cheating attack by one
  party gives that party a non-zero probability of being detected.
\item {\bf Technological Security:}\qquad Also known as {\it
  computational} security in many contexts, although technological
  security subsumes computational security.  Assuming something about
  the technological (computational) power of the adversary, they have
  no useful cheating attack.  However, its security will cease if the
  technological power increases or when a slow algorithm has cracked
  the code.  Users of {\sc RSA}, for instance, are only offered temporary
  security: our best algorithms take several years to factor an
  appropriately-sized product of prime numbers, and if a quantum
  computer can be built, much less.  \comment{ For certain
  implementations of {\sc QKD}, a slightly stronger form of technological
  security is present.  Specifically, the eavesdropper needs to have
  the technological capabilities to render the {\sc QKD} insecure at the
  time of implementation of the protocol.  Once the implementation has
  taken place, the secret is safe forever.  For example, if a quantum
  computer was built in a year's time, a message sent this morning
  using {\sc RSA} would be insecure, while if instead an appropriate quantum
  cryptosystem had been used, the secret would remain safe
  indefinitely.  Of course, information-theoretically secure {\sc QKD}
  schemes exist in theory, and for these a secret would be safe if the
  scheme was implemented in the presence of an eavesdropper with a
  quantum computer.  However, practical {\sc QKD} schemes will, at least
  initially, not offer such a strong promise.}
\end{enumerate}

Let us briefly describe what we mean by {\it useful} cheating.  We do
not demand our protocols prevent any kind of deviation, rather we
require that all deviations are useless, in the sense that they do not
give the deviating party any private information, or allow that party
to influence the outcome of the task.  For instance, in any protocol,
either party can always declare abort at any stage.  We do not
consider this to be a problem, unless at the time of abortion, some
private information has been gleaned.

If we are happy with technological security, then much in the way of
secure multi-party computation has been accomplished.  Kilian has
shown that (at least classically) oblivious transfer (described in
Section \ref{RBC}) can be used to implement any two-party secure
computation \cite{Kilian}.  Since oblivious transfer protocols based
on computational assumptions exist (see for example
\cite{NaorPinkas}), we can generate technologically secure protocols
for two-party secure computations in the classical world.

Unconditional security is the holy grail of the field, and is the
strongest type of security we could hope for, although, as we point
out in Section \ref{thesetting}, there are always additional
assumptions involved.  In many situations, cheat-evident security will
suffice.  This will be the case when being caught cheating accrues a
large penalty.  Consider the case of a bank engaging in a protocol
with one of its clients.  If the client catches the bank cheating, the
resulting media scandal will certainly be detrimental for the bank,
while the bank who catches its client cheating can impose some large
fine.  If the penalties are high enough, cheat-sensitive security
could be sufficient to prevent a party from cheating.

In general, when discussing specific protocols, we will find that they
may have one or more security parameters, $\{N_1,\ldots,N_r\}$.  A
protocol is said to be {\it perfectly secure} if there exist fixed,
finite values of the $\{N_i\}$ for which the security conditions
relevant to the protocol hold. In practice, we will tolerate some
(sufficiently small) probability of failure.  We say that a protocol
is {\it secure} if the security conditions become increasingly close
to holding as the $\{N_i\}$ tend to infinity.  This means that, for
any non-zero failure probability sought, there exist a set of values
for the security parameters for which the protocol achieves this
failure rate.

\section{The Setting} \label{thesetting}
In order to do cryptography, one has to set up a precise world in
which actions take place.  Such a world provides the framework in
which one can make rigorous mathematical statements, and hence prove
results about security.  The actual security we can achieve in
practice depends on how closely the actual environment in which we
perform our protocol resembles our mathematically defined world.
Ideally the two would coincide.  In general though, we will introduce
assumptions in order to create a simple mathematical framework.

The type of worlds that concern us will be distinguished as either
{\it relativistic}, or {\it non-relativistic} (depending on whether we
want to rely on the impossibility of super-luminal signalling for
security), and either {\it quantum} or {\it classical} (depending on
whether the users can create and manipulate quantum systems or not).
Before discussing these distinctions, we give an overview of the set
of assumptions that we will apply within all of our fictitious worlds.

It is impossible to do cryptography without assumptions: the challenge
is to see what can be done assuming as little as possible.  The weaker
the terms of our assumptions, the more powerful the result.  Although
some assumptions are unrealistic in their literal form, they are often
sufficient for realistic purposes.  Take for example the following:
\begin{assumption}
\label{ass1}
Each party has complete trust in the security of their laboratory.
\end{assumption}
By this we mean that nothing within their laboratories can be observed
by outsiders.  Without this assumption, cryptography is pointless, and
yet no laboratory in the world will satisfy such a property.  Can
anyone really be sure that there isn't a microscopic camera flying
around their lab, reporting back data to their adversaries?  As a
matter of principle any laboratory must be coupled to the environment
in some way\footnote{Interactions with a laboratory unable to exchange
information with the outside world would be problematic!}, and this
opens up a channel through which private information could flow.
However, as a practical aside, most parties could set up a laboratory
for which they would be happy that Assumption \ref{ass1} holds to good
enough approximation.  In so doing, they are in essence making
technological assumptions about their adversaries.

No matter what the setting, we will always assume Assumption
\ref{ass1}.  Then, when we talk about, for example, unconditional
security, we implicitly mean unconditional security {\em given our
assumptions} or unconditional security {\em within our model}.  This
caveat does not allow us to turn technologically secure protocols into
unconditionally secure ones by making appropriate assumptions.  A
technologically secure protocol is always insecure from an
information-theoretic point of view.  For example, under the
assumption that factoring is hard, we can say that the {\sc RSA}
cryptosystem is {\it technologically} secure, while without this
assumption, it is insecure.

In the spirit of making the result as powerful as possible, we will
also make the following assumption:
\begin{assumption}
\label{ass2}
Each party trusts nothing outside their own laboratory.
\end{assumption}
In particular, this precludes the possibility of doing cryptography
using a trusted third party, or a source of trusted noise (a situation
in which many cryptographic tasks are known to be achievable
\cite{CrepeauKilian,Damgard&,Damgard&2}).

If a protocol is secure under this assumption, then it is secure {\em
even if our adversaries can control the rest of the universe}.  In
particular, we make no assumption about any other participants in the
protocol.  We allow for the possibility that they may have arbitrarily
good technology, and arbitrarily powerful quantum computers.  In
addition, even if all the other players collude in the most
appropriate way, the protocol must continue to protect any honest
participants.  We need not furnish such protection upon dishonest
parties.

We choose to perform our protocols within perfect environments, so
that all communications are noiseless, all instruments operate
perfectly, and additional parties make no attempt to interfere with
any communications (but the parties with which we are trying to
interact with might).  We sum this up in the following assumption:
\begin{assumption}
\label{ass3}
All communication channels and devices operate noiselessly.
\end{assumption}
It is very convenient to make this assumption in cryptographic
scenarios since it allows all errors that occur during the
implementation of a protocol to be attributed to attacks by another
party.  In the real world, it will be necessary to drop this
assumption, so proliferating the complications of otherwise much
simpler protocols.  This leads to a discussion of {\it reliability}.
We say that a protocol is {\it perfectly reliable} if for some fixed
finite values of the security parameters it has the property that if
both parties are honest, the protocol succeeds without aborting.  In
the presence of noise, for finite values of the security parameters,
there will always be some probability that an honest protocol aborts.
The best we can hope for in such a situation is a {\it reliable}
protocol, where, as the security parameters tend to infinity, the
protocol tends towards perfect reliability.  Given that we assume
Assumption \ref{ass3}, we will always look for perfectly reliable
protocols.

In the future, one might anticipate quantum technology to have become
as widespread as classical technology is today.  Local hardware
retailer might act as a supplies of basic components (unitary gates,
measurement devices etc.).  A cavalier supplier might sell faulty
goods.  A malicious supplier might sell devices that would give him or
her crucial information in a subsequent protocol.  The following
assumption rids us of such considerations
\begin{assumption}
\label{ass4}
Each party has complete knowledge of the operation of the devices they
use to implement a protocol.
\end{assumption}

Assumptions \ref{ass1}--\ref{ass4} will be implicitly assumed in the
protocols discussed in this dissertation, unless otherwise stated.  In
particular, in Chapter \ref{Ch4} we discuss a task where we drop
Assumption \ref{ass4}, and assume instead that all of the devices used
are sourced from a malicious supplier.  Whether a particular set of
assumptions are sufficiently accurate is ultimately a matter for the
protocol's user.

\section{Cryptographic Protocols}
\label{theories}
A protocol is a series of instructions such that if each party follows
the instructions precisely, a certain task is carried out.  The
protocol may permit certain parties to make inputs at various stages,
and may allow them to call on random strings in their possession to
make such inputs.  If the protocol is complete, each party should have
a specified response to cover any eventuality.

Each party in a protocol has a set of systems on which they interact.
Systems on which more than one party can interact form the channel,
which, in the case of more than two parties, may have distinct parts.
In general, the channel system may be intercepted by a malicious party
at any time.  One can always assume that the size of the channel
system is fixed throughout the protocol.  A protocol in which this is
not the case can be transformed into one with this property by first
enlarging the channel system by adding ancillary systems, then
replacing any operations where a system is added to the channel by
swap operations between the system to be added and an ancilla in the
channel.

\subsection{Non-Relativistic Protocols}
\label{non-rel_sec}
Non-relativistic protocols involve the exchange of classical or
quantum information between parties whose locations are completely
unconstrained.  In such protocols, there is a set order in which the
communications occur, and such communications may effectively be
assumed instantaneous.  No constraint is placed on the amount of time
each party has to enact a given step of a protocol, and hence the
surrounding spacetime in which the participants live is irrelevant.

Consider as an illustration a two party protocol between Alice and
Bob.  Suppose that the first communication in the protocol is from
Alice to Bob.  We denote Alice's Hilbert space by $\mathcal{H}_A$,
Bob's by $\mathcal{H}_B$, and the channel's by $\mathcal{H}_C$.  Any
two party protocol then has the following form.

\begin{protocol}\qquad

\begin{enumerate}
\item \label{st1} Alice creates a state of her choice in
$\mathcal{H}_A\otimes\mathcal{H}_C$ and Bob creates a state of his
choice in $\mathcal{H}_B$.  We can assume that these states are pure,
with each party enlarging their Hilbert space if required.
\item Alice sends the channel to Bob.
\item Bob performs a unitary of his choice on
  $\mathcal{H}_C\otimes\mathcal{H}_B$.
\item Bob sends the channel to Alice.
\item Alice performs a unitary of her choice on
  $\mathcal{H}_A\otimes\mathcal{H}_C$.
\item[$\vdots$]
\item[$N.$] At the end of the protocol, both parties measure
  certain parts of their spaces.
\end{enumerate}
\end{protocol}

Note the following.  It is sufficient for Alice and Bob to do
unitaries on all systems in their possession at each step of the
protocol.  All system enlargement can be performed when creating the
initial states in Step \ref{st1}, and all measurements can be kept at
the quantum level until the end of the protocol (see Section
\ref{keep_quantum}).  If we label the unitary operations $U_1$, $U_2$,
$\ldots$, then prior to measurement, the protocol has implemented the
unitary $\left((U_1\otimes\openone_B)(\openone_A\otimes
U_2)\ldots\right)$ on the initial state.  The measurement in Step
$N$ may be used both to check that the protocol took place
correctly, and also to determine a classical output.  This procedure
is illustrated in Figure \ref{fig:protocol}.

\begin{figure}
\begin{center}
\includegraphics[width=\textwidth]{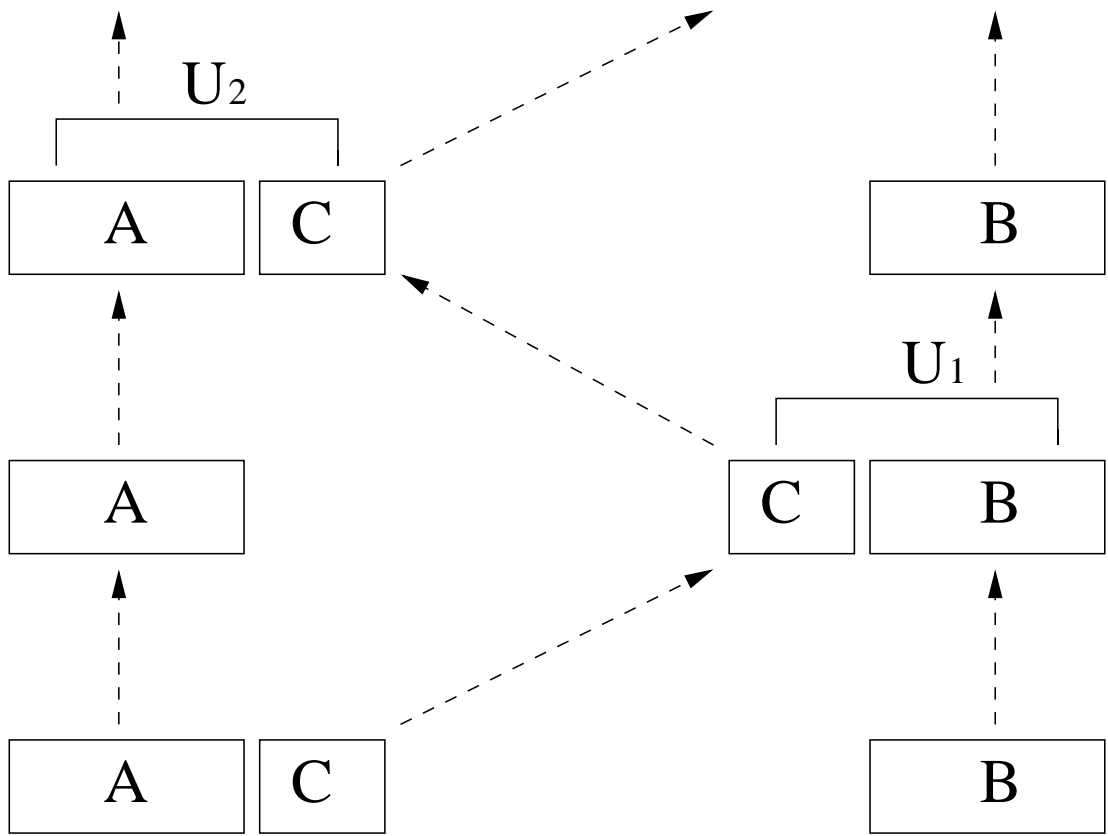}
\caption{Schematic of a non-relativistic protocol between two parties.
  A represents Alice's systems, B represents Bob's systems, and C is
  the channel.  Alice and Bob alternately perform unitaries as the
  protocol proceeds.}
\label{fig:protocol}
\end{center}
\end{figure}

A classical non-relativistic protocol is a special case in which all
states are replaced by classical data, unitary operations are replaced
by classical functions of such data, and we give each party private
randomness\footnote{In a quantum protocol, private randomness comes
for free since either party can create a state for which measurement
in the computational basis yields the desired probability
distribution.}.

\subsection{Relativistic Protocols}
\label{relprots}
In this dissertation, we will assume that relativistic protocols take
place in a Minkowski spacetime.  For practical purposes this is an
over-simplification.  In a more general spacetime the participants
could adopt the protocols we offer providing they are confident in
their knowledge of the structure of the surrounding spacetime and how
it changes during the protocol to sufficient precision.  A secure
protocol could be built along the lines of the ones we present
provided that bounds on the minimum light travel times between sets of
separated sites are known for the duration of the protocol.  Any
protocols carried out on Earth would certainly fit such a criteria.
To avoid a more elaborate discussion, detracting from the important
features of our protocols, we restrict to unalterable Minkowski
spacetimes.  For notational simplicity, we will also restrict our
discussion to the two-party case in the remainder of this section.

We use units in which the speed of light is unity and choose inertial
coordinates, so that the minimum possible time for a light signal to
go from one point in space to another is equal to their spatial
separation.  In a (two-party) relativistic protocol, Alice and Bob are
required to each set up laboratories within an agreed distance,
$\delta$, of two specified locations\footnote{This discussion
generalizes in an obvious way to cover protocols, which require Alice
and Bob to control three or more separate sites.}, $\underline{x}_1$
and $\underline{x}_2$.  Their separation is denoted
$\Delta=|\underline{x}_1-\underline{x}_2|\gg\delta$.
\nomenclature[gD]{$\Delta$}{Separation of sites in a relativistic
protocol} No restrictions are placed on the size and shape of the
laboratories, except that they do not overlap.

We refer to the laboratories in the vicinity of $\underline{x}_i$ as
$A_i$ and $B_i$, for $i = 1$ or $2$. We use the same labels for the
agents (sentient or otherwise) assumed to be occupying these
laboratories.  $A_1$ and $A_2$ operate with complete mutual trust and
have completely prearranged agreements on how to proceed such that we
identify them together simply as Alice; similarly $B_1$ and $B_2$ are
identified as Bob. \nomenclature[aA1A2]{$A_1$, $A_2$, \ldots}{The
agents of Alice in a relativistic protocol}
\nomenclature[aB1B2]{$B_1$, $B_2$, \ldots}{The agents of Bob in a
relativistic protocol} This setup is shown schematically in Figure
\ref{fig:rel_setup}.

\begin{figure}
\begin{center}
\includegraphics[width=\textwidth]{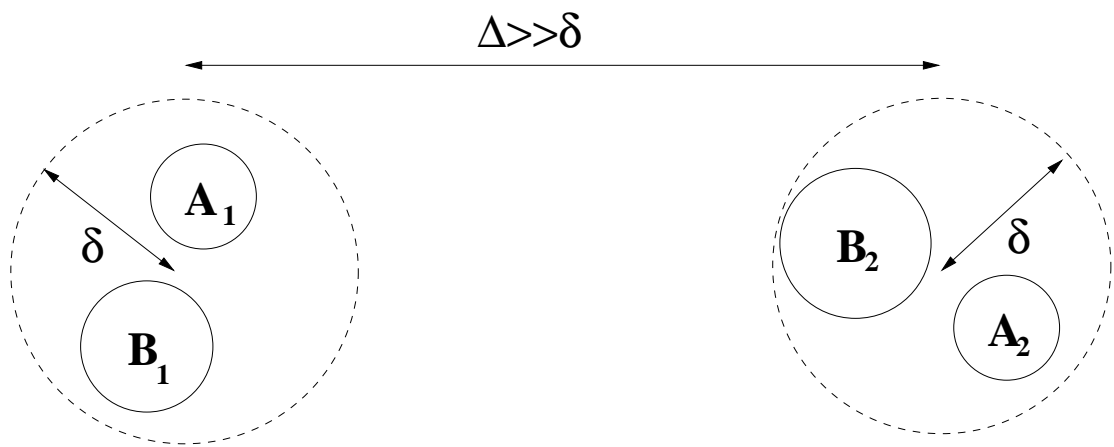}
\caption{Schematic of the setup for a relativistic protocol with two
  separated sites.}
\label{fig:rel_setup}
\end{center}
\end{figure}

To ensure in advance that their clocks are synchronized and that their
communication channels transmit at sufficiently near light speed, the
parties may check that test signals sent out from each of Bob's
laboratories receive a response within time $4 \delta$ from Alice's
neighbouring laboratory, and vice versa.  However, the parties need
not disclose the exact locations of their laboratories, or take it on
trust that the other has set up laboratories in the stipulated regions
(cf.\ Assumption \ref{ass2}). (A protocol which required such trust
would, of course, be fatally flawed.)  Each party can verify that the
other is not significantly deviating from the protocol by checking the
times at which signals from the other party arrive.  These arrival
times, together with the times of their own transmissions, can be used
to guarantee that particular specified pairs of signals, going from
Alice to Bob and from Bob to Alice, were generated independently. This
guarantee is all that is required for security.

We also assume that $A_1$ and $A_2$ either have, or can securely
generate, an indefinite string of random bits. This string is
independently generated and identically distributed, with probability
distribution defined by the protocol, and is denoted ${\bf
x}\equiv\{x_i\}$.  Similarly, $B_1$ and $B_2$ share a random string
${\bf y}\equiv\{y_i\}$. \nomenclature[ax]{{\bf x}}{Random string
shared by the agents of Alice in a relativistic protocol}
\nomenclature[ay]{{\bf y}}{Random string shared by the agents of Bob
in a relativistic protocol} These random strings will be used to make
all random choices as required by the protocol: as $A_1$ and $A_2$,
for instance, both possess the same string, ${\bf x}$, they know the
outcome of any random choices made during the protocol by the other.
We also assume the existence of secure authenticated pairwise channels
between the $A_i$ and between the $B_i$. \footnote{Note that this is
not an unreasonable assumption; these can easily be set up using the
familiar {\sc QKD} schemes, or simply by using the shared random
strings as one-time pads, and in suitable authentication procedures.}
These channels are not necessarily unjammable, but if an honest party
fails to receive the signals as required by a protocol, they abort.
Alternatively, one can think of Alice and Bob as occupying very long
laboratories, as depicted in Figure \ref{fig:rel_setup2}.

\begin{figure}
\begin{center}
\includegraphics[width=\textwidth]{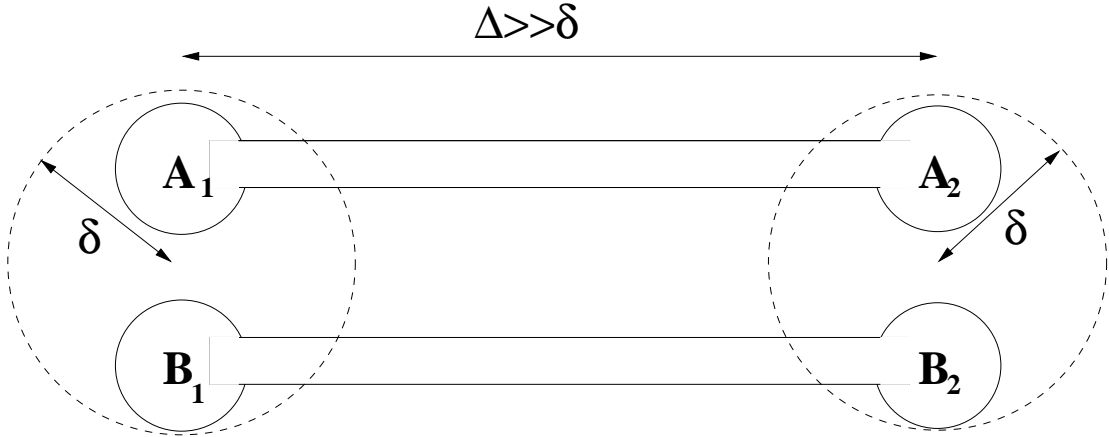}
\caption{Alternative setup for a relativistic protocol with two
  separated sites.}
\label{fig:rel_setup2}
\end{center}
\end{figure}

A relativistic protocol will be defined within this framework by a
prescribed schedule of exchanges of classical or quantum information
between the various agents.  In essence it involves two
non-relativistic protocols, one played out at each of the separated
locations.  These protocols have a limited ability to communicate
between one another.  This generates a constraint on the unitaries
that can be performed at various points in the protocol, since part of
Alice's Hilbert space may be in the secure channel between $A_1$ and
$A_2$, and hence temporarily inaccessible.

In a brief excursion to the real world, we note that the relativistic
setup we have described is not unrealistic.  $A_1$, $A_2$, $B_1$ and
$B_2$ need not be humans performing measurements by hand; rather they
can be machines performing millions of operations per second.  At a
separation of just $3{\rm m}$, one has around $10{\rm ns}$ to do
operations.  This, admittedly, is a little unrealistic for today's
technology, but at $3{\rm km}$, we have roughly $10{\rm \mu s}$ in
which to act.  Using an estimate of $10^8$ gates per second, we can
perform $10^3$ operations in this time.  We certainly do not need
planetary separations for such schemes.  There is also the matter of a
trade-off between large distance and low noise, especially when
considering quantum protocols, but because of Assumption \ref{ass3},
we will not be concerned by this.


\section{Cryptographic Primitives}
\label{RBC}
Three cryptographic primitives will be particularly relevant in this
thesis: Coin tossing, oblivious transfer ({\sc OT}) and bit commitment
({\sc BC}).  We give a brief outline of these tasks here.
\nomenclature[zOT]{{\sc OT}}{Oblivious Transfer}
\nomenclature[zBC]{{\sc BC}}{Bit Commitment}

Coin tossing protocols aim to generate a uniform random bit that is
shared by two parties in such a way that neither party can influence
the bit's value.  They will be discussed in detail in Chapter
\ref{Ch1}.

{\sc OT} comes in several flavours.  In this thesis, we use {\sc OT}
to describe the following functionality.  Alice sends a bit to Bob,
either $1$ or $0$.  Bob either learns Alice's bit, or he learns
nothing, each with probability $\frac{1}{2}$.  Alice does not learn
whether Bob received her bit or not.  It turns out that this task is
sufficient to allow any secure multi-party computation \cite{Kilian}.
Hence, {\sc OT} is in some sense the holy grail of the field.
However, it is known that {\sc OT} is impossible \cite{Rudolph}.  We
give a proof of this in Section \ref{OT_imposs}.


{\sc BC} is another important cryptographic primitive.  A {\sc BC} protocol
involves two steps.  In the first step, one party commits to a bit.
In the second, which occurs at some later time chosen by the
committer, this bit is revealed to a second party.  Before revelation,
the second party is oblivious to the value of the bit, while the first
is unable to alter its value.  One flavour of {\sc BC} can be used to build
a protocol for {\sc OT} \cite{Yao}.  A {\sc BC} of this type is impossible to
construct, even in a relativistic world (the Mayers-Lo-Chau argument
\cite{Mayers,LoChau} for non-relativistic protocols is easy to
extend).  Nevertheless, Kent has shown that a slightly different
flavour of {\sc BC} is possible in a classical, relativistic world
\cite{Kent_relBCshort,Kent_relBC}.  He further conjectures that this
protocol remains secure in a quantum world, against the most general
quantum attack, but presently this is unproven.

We will not go into the range of subtleties surrounding the various
types of {\sc BC} (the interested reader should refer to
\cite{Kent_relBC} for a longer discussion).  Here we simply point out
that Kent's {\sc BC} schemes require sustained communications in
order to maintain the commitment, and that they have the property of
{\em retractability}, that is the party making the commitment can get
their committed state returned if they later decide not to follow
through with the unveiling.  This latter feature is what scuppers the
use of relativistic bit commitment
({\sc RBC})\nomenclature[zRBC]{{\sc RBC}}{Relativistic Bit Commitment} schemes for
building Yao's {\sc OT} scheme \cite{Yao,Kent_relBC}.

We will use {\sc RBC} as a subprotocol in some of the schemes we later
discuss, so, for completeness, we outline a protocol for its
implementation here.  We choose the simplest of Kent's schemes, {\sc RBC}1,
in the case where Alice commits a bit to Bob:

\begin{protocol} ({\sc RBC}1)
\begin{enumerate}
\item $B_1$ sends to $A_1$ an ordered pair, $(n_{1,0},n_{1,1})$ of
  random non-equal integers.  These, along with all other integers
  used in the protocol, will be in the set $\{0,\ldots,N-1\}$, and all
  arithmetic performed is modulo $N=2^p$, for integer $p$.
\item To commit to bit $b$, $A_1$ returns $n_{1,b}+m_1$ to $B_1$.
\item To sustain the commitment, $A_2$ commits the binary form
  $a_{p-1}^1,a_{p-2}^1,\ldots,a_0^1$ of $m_1$ to $B_1$, by having
  $B_1$ send the random integer pairs
  $(n_{2,0},n_{2,1})$, $(n_{3,0},n_{3,1}),$ $\ldots,$ $(n_{p+1,0},n_{p+1,1}),$
  and returning the set
  $n_{2,a_{p-1}^1}+m_2,$ $n_{3,a_{p-2}^1}+m_3,$ $\ldots,$
  $n_{p+1,a_0^1}+m_{p+1}.$
\item This procedure then iterates, with $A_1$ committing the binary
  form of $m_2,\ldots,$ $m_{p+1}$ to $B_1$ in an analogous way.
\end{enumerate}

At some later time, Alice can unveil on either or both sides.  For
$A_1$ to unveil, she sends to $B_1$ the list of random numbers,
$\{m_i\}$, used by $A_2$ in her last set of commitments. ($A_1$ knows
these because they were generated using the shared random sting ${\bf
x}$.)  $B_1$ receives this list at such time that he can ensure they
were sent in a causally disconnected manner to the receipt of the
random pairs $\{(n_{i,0},n_{i,1})\}$ by $A_2$.  $B_1$ and $B_2$ can
then share all their data, and verify that it did correspond to a
valid commitment of either $0$ or $1$.
\end{protocol}

This protocol has the undesirable feature that it requires an
exponentially increasing rate of communication.  However, Kent has
also introduced a second protocol, {\sc RBC}2 which combines {\sc
RBC}1 with a scheme due to Rudich, in order to achieve {\sc RBC} with
a constant transmission rate.  The full details of this scheme can be
found in \cite{Kent_relBC}, and are not presented in this thesis.




\chapter{The Power Of The Theory -- Strong Coin Tossing}
\label{Ch1}

\begin{quote}
{\it ``A theory is acceptable to us only if it is beautiful''} --
Albert Einstein
\end{quote}

\section{Introduction}
Landauer's often quoted doctrine, {\it ``information is physical''}
succinctly expresses the fact that what can and cannot be done in
terms of information processing is fundamentally dictated by physics.
Information processing is performed by physical machines (abacuses,
computers, human beings, etc.), and the power of these limits the
information processing power.  In light of the above, it is not of
great surprise that new physical theories lead to changes in
information processing power.  Historically, though, more than 50
years elapsed between the development of quantum theory and the
realization that it offers an increase in information processing
power.  This delay can surely be attributed, at least in part, to the
failure of both physicists and information theorists to recognize the
physical nature of information.

In this chapter, we illustrate the r\^ole of the physical theory in
information processing power.  We consider theories that are either
quantum or classical, and are either relativistic or not.  The
relevance of the different theories for the construction of protocols
has been described in Section \ref{theories}.  Here we give specific
examples.  As a focus for our discussion we use one of the simplest
cryptographic tasks: strong coin tossing.  A classical
non-relativistic theory cannot realize this task to any extent.
Introducing quantum mechanics allows protocols with partial security,
while relativistic protocols can realize the task perfectly.

Informally, a coin tossing protocol seeks to allow two separated
parties to exchange information in such a way that they generate a
shared random bit.  The bit is random in the sense that (ideally) $0$
or $1$ occur with probability $\frac{1}{2}$ each, and neither party
can increase the probability of either outcome by any method.  In many
physical models, this ideal cannot be achieved.  In such cases one
weakens the requirements of the protocol.  It is demanded that if both
participants are honest, the outcome is $0$ or $1$ with probability
$\frac{1}{2}$ each.  A protocol is then given a figure of merit in
terms of the maximum cheating probability a dishonest party can
achieve against an honest party.  The quantity often used is the bias,
the deviation of the maximum cheating probability from $\frac{1}{2}$.
A strong coin tossing protocol seeks to protect an honest party from a
dishonest party whose direction of bias is unknown, while a weak coin
toss seeks to protect an honest party only against the dishonest party
biasing towards one particular outcome.  Commonly, coin tosses are of
the latter form (e.g.\ Alice and Bob, having recently divorced, want to
decide who keeps the car).  Strong coin tosses are relevant in
situations where there is knowledge asymmetry between the parties, so
that it is not clear to one which way the other wishes to bias (e.g.\
Alice knows whether the car works but Bob does not).

In the next sections, we give formal definitions of the relevant coin
tossing tasks before discussing how well they can be achieved in the
various physical models of interest.  Our contribution in this area is
Protocol \ref{colbeck_protocol}, for which no protocols are known with
a better bias.

\section{Definitions}
In a coin tossing protocol, two separated and mistrustful parties,
Alice and Bob, wish to generate a shared random bit.  We consider a
model in which they do not initially share any resources, but have
access to trusted laboratories containing trusted error-free apparatus
for creating and manipulating quantum states (cf.\ Assumptions
\ref{ass1}--\ref{ass4}). In general, a protocol for this task may be
defined to include one or more security parameters, which we denote
$N_1,\ldots,N_r$.

If both parties are honest, a coin tossing protocol guarantees that
they are returned the same outcome, $b\in\{0,1\}$ where outcome $b$
occurs with probability $\frac{1}{2}+\zeta_b(N_1,\ldots,N_r)$, or
``abort'' which occurs with probability $\zeta_2(N_1,\ldots,N_r)$, and
for each $j\in\{0,1,2\}$, $\zeta_j(N_1,\ldots,N_r)\rightarrow 0$ as
the $N_i\rightarrow\infty$.  The {\it bias} of the protocol towards
party $P\in\{A,B\}$ is denoted
$\epsilon_P=\max\left(\epsilon_P^0,\epsilon_P^1\right)$, where $P$ can
deviate from the protocol in such a way as to convince the other
(honest) party that the outcome is $b$ with probability at most
$\frac{1}{2}+\epsilon_{P}^{b}+\delta_P^b(N_1,\ldots,N_r)$, and the
$\delta_P^b(N_1,\ldots,N_r)\rightarrow 0$ as the
$N_i\rightarrow\infty$.  We make no requirements of the protocol in
the case where both parties cheat.

The {\it bias} of the protocol is defined to be
$\max(\epsilon_A,\epsilon_B)$.  A protocol is said to be {\it
balanced} if $\epsilon_A^{b}=\epsilon_B^{b}$, for $b=0$ and $b=1$.

We define the following types of coin tossing:
\begin{definition} 
{\bf (Ideal Coin Tossing)} A coin tossing protocol is ideal if it has
$\epsilon_A=\epsilon_B=0$, that is, no matter what one party does to
try to bias the outcome, their probability of successfully doing so is
strictly zero.  It is then said to be {\it perfectly secure} if for
some finite values of $N_1,\ldots,N_r$, the quantities
$\zeta_j(N_1,\ldots,N_r)$ and $\delta_P^b(N_1,\ldots,N_r)$ are
strictly zero, and otherwise is said to be {\it secure}.
\end{definition}

\begin{definition}
{\bf (Strong Coin Tossing)} A strong coin tossing
protocol is parameterized by a bias, $\gamma$.  The protocol has the
property that $\epsilon_P^b\leq\gamma$ for all $P\in\{A,B\}$ and
$b\in\{0,1\}$, with equality for at least one combination of $P$ and $b$.
\end{definition}

\begin{definition}
{\bf (Weak Coin Tossing)} A weak coin tossing protocol
is also parameterized by a bias, $\gamma$.  It has the property that
$\epsilon_A^0\leq\gamma$ and $\epsilon_B^1\leq\gamma$, with equality
in at least one of the two inequalities.
\end{definition}

\section{Where Lies The Cryptographic Power?}
\label{power}
Cryptography involves secrets.  One generally begins in a situation in
which each party holds private data, and ends in a situation in which
each party gains a specified and often highly restricted piece of
information on the inputs of the others.  Let us specialize to the
case of two party protocols.  Kilian \cite{Kilian} sums up the
difficulty of classical protocols for these on the grounds that at any
point in such a protocol, one party knows exactly what information
is available to the other, and vice-versa.  If this knowledge symmetry
can be broken (for example by assuming the existence of a black-box
performing {\sc OT}, or by using a trusted noisy channel
\cite{CrepeauKilian,Damgard&,Damgard&2}) then any secure multi-party
computation can be performed \cite{Kilian}.

Quantum mechanics also provides a way of generating knowledge
asymmetry.  For example, consider a protocol which involves Alice
choosing one of two non-orthogonal bases at random to encode each
bit.  She sends the quantum states which store the encodings to
Bob.  Bob, being unaware of Alice's bases, cannot reconstruct her bits
with certainty.  Likewise, if Bob measures each state he receives in
one of the two encoding bases chosen at random, then Alice cannot tell
exactly what Bob knows about her string.  Therefore, information
completeness is lost, and extra cryptographic power exists over
protocols involving only classical systems.  \comment{Suppose Alice
sends bits to Bob by encoding them in one of two non-orthogonal bases.
If Bob guesses the basis and measures, he will generate a string that
is partially correlated with Alice's.  If both parties follow this
procedure honestly, then Alice is unaware which of her bits Bob
received correctly, and likewise, Bob is unaware of a subset of
Alice's bits.}

The procedure described above acts like a noisy channel, but there is
a key cryptographic difference between the two.  The noise generated
by a noisy channel comes from an outside system, while that generated
by sending quantum states is inherent to the physics of the system.
From a cryptographic point of view, the former is equivalent to
assuming the existence of a trusted third party.  If either party
could tap into the system generating the noise, then security would be
compromised.  This is a by-product of the fundamental reversibility of
classical processes---if the process causing the noise was reversed,
the information would be recovered.  This is not the case for a
quantum mechanical measurement.  The process by which it is generated
is fundamentally irreversible, and hence such a security issue does
not arise\footnote{It is the possibility of delaying measurement that
prevents such a quantum system being used to build {\sc OT} as the
standard classical reductions \cite{CrepeauKilian,Damgard&,Damgard&2}
imply.  However, provided at least one party behaves honestly,
information completeness is lost.}.  This is not the only source of
cryptographic power generated by quantum theory.  Another comes from
the so-called monogamy of entanglement, which provides security in
Ekert's variant of the {\sc BB}84 protocol \cite{Ekert}, and also in
the protocols we discuss in Chapter \ref{Ch4}.

Relativistic protocols allow short-lived perfectly binding and
perfectly concealing commitments\footnote{From which longer-lived ones
can be constructed, as discussed previously.}.  When $A_1$ sends
classical information to $B_1$, she is perfectly committed to it
(since $B_1$ knows it).  However, from $B_2$'s point of view, this
commitment is perfectly concealing for the light travel time.
Anything $B_2$ sends to $A_2$ within this time he does in ignorance of
$A_1$'s message.  This is where the power lies in relativistic
cryptography.

\section{Coin Tossing}
\subsection{Classical Non-Relativistic Protocols}
Coin tossing is a two person game.  The ``moves'' of the game are the
communications of the parties.  In the classical and non-relativistic
case, coin tossing can be studied using well-established techniques of
game theory.  It can be phrased as a zero-sum game, meaning that the
payoffs for any outcome sum to zero. (We can assign $+1$ for a win,
$-1$ for a loss, and $0$ to abort for each party.  Thus if Alice wins,
she gets $+1$, while Bob gets $-1$, these having zero sum.  The exact
payoffs may not be precisely these, but this should not affect the
security of the computation.)

We present here a (sketch) proof of the impossibility of classical
coin tossing based on a result of game theory.  The result we need
refers to complete information games, which are those for which each
party knows all previous moves of all other parties prior to making
theirs.  The result states that all (finite) zero-sum
complete-information 2-person games are strictly determined, i.e., one
party following their optimal strategy can win against any strategy of
the other.  


Consider games in which there are no random moves.  After the last
move has been made, the game has a defined payout.  Let us suppose
that a positive payout favours Alice, and a negative one favours
Bob\footnote{Since the game is zero-sum, the payout to Alice is always
opposite that of Bob.}.  Suppose Bob makes the last move.  He will
choose his move so as to minimize the payout.  Assuming no degeneracy,
the last move is determined by this.  (If there is degeneracy, then
Bob can choose freely from amongst the degenerate moves.
Alternatively, one could construct a new game in which the degenerate
moves are combined.)  Since this move is determined, we can define a
game with one fewer moves in which the payouts are defined by what
results if Bob follows his optimal strategy.  This shorter game has
Alice making the final move, which she does so as to maximize the
payout.  Thus, if we assume Alice and Bob always make their best play
at every opportunity, this process iterates so that the entire game is
completely determined.  That is, one player always has a winning
strategy against any strategy of the other.  Since the winning
strategy works against any strategy of the other, it also works if the
other makes random choices at certain points in the protocol.  The
above argument is formally proven in Chapter 15 of
\cite{vonNeumann&Morg}.

A classical non-relativistic coin tossing protocol is such a game, and
hence one party can always win with certainty, i.e., the best
achievable bias is $\frac{1}{2}$.

Note that both non-relativistic quantum protocols and relativistic
protocols do not fit into this model.  In a quantum protocol, if one
party is allowed to choose their measurement basis, the other does not
know what information they received.  In a relativistic protocol,
timing constraints can be used to ensure that one party must make a
move without knowledge of those of the other party.  It is therefore
possible to construct protocols which are not information complete,
and hence the above argument does not go through.  We will demonstrate
this below by giving coin tossing protocols whose bias is less that
$\frac{1}{2}$.

\subsection{Quantum Non-Relativistic Protocols}
Such protocols, commonly abbreviated as quantum protocols, have been
widely studied in the literature.  That quantum coin tossing protocols
offer some advantage over classical ones was realized by Aharonov et
al.\ \cite{Aharonov&2}, who introduced a protocol achieving a bias of
$\frac{1}{2\sqrt{2}}$ \cite{Aharonov&2,Spekkens&Rudolph2}.  For strong
coin tossing, it has been shown by Kitaev that in any protocol, at
least one party can achieve a bias greater than
$\frac{1}{\sqrt{2}}-\frac{1}{2}$ \cite{Kitaev}. It is not known
whether this figure represents an achievable bias.  The best known
bias to date is $\frac{1}{4}$ \cite{Ambainis,Colbeck}.  This bias
is optimal for a large set of bit-commitment based protocols
\cite{Spekkens&Rudolph}.  For weak coin tossing, Kitaev's bound is
known not to apply and lower biases than
$\frac{1}{\sqrt{2}}-\frac{1}{2}$ have been achieved (see for example
\cite{Mochon2} for the best bias to date).  Moreover, Ambainis has
shown that a weak coin toss protocol with bias $\epsilon>0$ must have
a number of rounds that grows as $\Omega(\log\log\frac{1}{\epsilon})$
\cite{Ambainis}.

We present now the two protocols which achieve strong coin tossing
with bias $\frac{1}{4}$.  The first, due to Ambainis, is based on bit
commitment.\footnote{A bit commitment based coin tossing scheme has
one party commit a bit, after which the other announces another bit.
If the {\sc XOR} of the two bits is 0, the outcome is heads, if 1 it
is tails.}  The second protocol is our contribution.  It works by
trying to securely share entanglement before exploiting the quantum
correlations that result.

\comment{Our strong coin tossing protocols realise the following ideal:
\begin{if}(Strong coin tossing with bias $\frac{1}{4}$)

\begin{enumerate}
\item Each party can choose either to be honest, or to cheat.  If
  Alice chooses to cheat, she can decide the probability distribution
  $(p_0,p_1,p_{\text{abort}})$ as she pleases with the constraint that
  $p_0\leq\frac{3}{4}$ and $p_1\leq\frac{3}{4}$.
\item If both choose not to cheat, the ideal outputs a uniformly
  distributed random bit.
\item If Alice cheats, the ideal outputs $0$ with probability
  $\frac{3}{4}$, or else outputs ``abort''.
\item If Bob cheats, the ideal outputs $1$ with probability
  $\frac{3}{4}$, or $0$ with probability $\frac{1}{4}$.
}

We give a brief description of Ambainis' protocol below.  More
details, including the proof that it has a bias of $\frac{1}{4}$ can
be found in \cite{Ambainis}.
\begin{protocol}\qquad
\label{protocol1}

\noindent We define the states
\begin{eqnarray}
\ket{\phi_{b,x}}=\left\{\begin{array}{c}
\frac{1}{\sqrt{2}}(\ket{0}+\ket{1})\qquad b=0,\, x=0\\
\frac{1}{\sqrt{2}}(\ket{0}-\ket{1})\qquad b=0,\, x=1\\
\frac{1}{\sqrt{2}}(\ket{0}+\ket{2})\qquad b=1,\, x=0\\
\frac{1}{\sqrt{2}}(\ket{0}-\ket{2})\qquad b=1,\, x=1
\end{array}\right. .
\end{eqnarray}
The protocol then proceeds as follows:
\begin{enumerate}
\item \label{amb1} Alice picks two random bits $b\in\{0,1\}$ and
  $x\in\{0,1\}$, using a uniform distribution.  She creates the
  corresponding qutrit state $\ket{\phi_{b,x}}$ and sends it to Bob.
\item Bob picks a random bit, $b'\in\{0,1\}$ from a uniform
  distribution, and sends $b'$ to Alice.
\item Alice sends $b$ and $x$ to Bob, who then checks that the state
  he received in Step \ref{amb1} matches (by measuring it with respect
  to a basis consisting of $\ket{\phi_{b,x}}$ and two states
  orthogonal to it).  If the outcome of the measurement is not the one
  corresponding to $\ket{\phi_{b,x}}$, Bob aborts.
\item Otherwise, the result of the coin flip is $b\oplus b'$.
\end{enumerate}
\end{protocol}

This protocol is based on bit commitment.  Alice (imperfectly) commits
a bit, $b$, to Bob by encoding it using one of two non-orthogonal
pairs of states.  Bob then sends a bit $b'$ to Alice.  The outcome is
decided by the {\sc XOR} of $b$ and $b'$.  Many of the coin tossing
schemes considered in the literature are of this type.  The security
of such protocols is only as strong as the bit commitment on which
they are based.  Bounds on the possible biases achievable in bit
commitment schemes are well known \cite{Spekkens&Rudolph}.  However,
coin tossing is strictly weaker than bit commitment \cite{Kent_CTBC},
hence bounds on the achievability of bit commitment do not imply
similar ones for coin tossing.  It is therefore of interest to search
for schemes that do not rely on bit commitment.  We describe one such
protocol and give its complete security analysis below.  This protocol
has been published by us \cite{Colbeck}.

\begin{protocol}\qquad
\label{colbeck_protocol}
\begin{enumerate}
\item Alice creates $2$ copies of the state
   $\ket{\psi}=\frac{1}{\sqrt{2}}(\ket{00}+\ket{11})$ and sends the
   second qubit of each to Bob.
\item \label{Bch} Bob randomly selects one of the states to be used
  for the coin toss.  He informs Alice of his choice.
\item Alice and Bob measure their halves of the chosen state in the
  $\{\ket{0},\ket{1}\}$ basis to generate the result of the coin toss.
\item Alice sends her half of the other state to Bob who tests whether
  it is the state it should be by measuring the projection onto
  $\ket{\psi}$.  If his test fails, Bob aborts.
\end{enumerate}
\end{protocol}

\subsubsection{Alice's Bias}
Assume Bob is honest.  We will determine the maximum probability,
$p_A$, that Alice can achieve outcome 0 (an analogous result follows by
symmetry for the case that Alice wants to bias towards 1).  Alice's
most general strategy is as follows. She can create a state in an
arbitrarily large Hilbert space,
$\ket{\Psi}\in\mathcal{H}_A\otimes\mathcal{H}_{A_1}\otimes\mathcal{H}_{B_1}\otimes\mathcal{H}_{A_2}\otimes\mathcal{H}_{B_2}$,
where $\mathcal{H}_A$ represents the space of an ancillary system
Alice keeps, $\mathcal{H}_{B_1}$ and $\mathcal{H}_{B_2}$ are qubit
spaces sent to Bob in the first step of the protocol, and
$\mathcal{H}_{A_1}$ and $\mathcal{H}_{A_2}$ are qubit spaces, one of
which will be sent to Bob for verification.  On receiving Bob's choice
of state in Step \ref{Bch}, Alice can do one of two local operations
on the states in her possession, before sending Bob the relevant qubit
for verification.  Alice should choose her state and local operations
so as to maximize the probability that Bob obtains outcome $0$ and
does not detect her cheating.

Let us denote the state of the entire system by
\begin{equation}
\ket{\Psi}=\sum_{i=0}^1\sum_{j=0}^1a_{ij}\ket{\phi_{ij}}_{AA_1A_2}\ket{ij}_{B_1B_2}
\end{equation}
where $\{\ket{\phi_{ij}}_{AA_1A_2}\}_{i,j}$ are normalized states in
Alice's possession, and $\{a_{ij}\}_{i,j}$ are coefficients.  Suppose
Bob announces that he will use the first state for the coin toss.
There is nothing Alice can subsequently do to affect the probability
of Bob measuring 0 on the qubit in $\mathcal{H}_{B_1}$.  We can assume
that Bob makes the measurement on this qubit immediately on making his
choice.  Let us also assume that Alice discovers the outcome of this
measurement so that she knows the pure state of the entire system (we
could add a step in the protocol where Bob tells her, for
example\footnote{Such a step can only make it easier for Alice to
cheat, so security under this weakened protocol implies security under
the original one.}).  If Bob gets outcome $1$, then Alice cannot win.
On the other hand, if Bob gets outcome $0$, the state of the remaining
system becomes
\begin{equation}
\frac{a_{00}}{\sqrt{|a_{00}|^2+|a_{01}|^2}}\ket{\phi_{00}}_{AA_1A_2}\ket{0}_{B_2}+\frac{a_{01}}{\sqrt{|a_{00}|^2+|a_{01}|^2}}\ket{\phi_{01}}_{AA_1A_2}\ket{1}_{B_2},
\end{equation}
and Alice can win if she can pass Bob's test in the final step of the
protocol.  Since entanglement cannot be increased by local operations,
the system Alice sends to Bob in this case can be no more entangled
than this state.  Since measurements (on average) reduce entanglement,
Alice's best operation is a unitary on her systems.  Such an operation
is equivalent to a redefinition of $\{a_{ij}\}$ and
$\{\ket{\phi_{ij}}\}$, which Alice is free to choose at the start of
the protocol anyway.  Alice can do no better than by choosing the
coefficients, $\{a_{ij}\}$, to be real and positive.  The state which
best maximizes the overlap of the system in the $A_2B_2$ subspace with
$\ket{\psi}$ is then,
\begin{equation}
\frac{a_{00}}{\sqrt{a_{00}^2+a_{01}^2}}\ket{00}_{A_2B_2}+\frac{a_{01}}{\sqrt{a_{00}^2+a_{01}^2}}\ket{11}_{A_2B_2}.
\end{equation}
Alice therefore cannot fool Bob into thinking she was honest with
probability greater than
$\frac{(a_{00}+a_{01})^2}{2(a_{00}^2+a_{01}^2)}$.  Using a similar
argument for the case that Bob chooses the second state for the coin
toss shows that Alice's overall success probability is at most
$\frac{1}{4}\left(2a_{00}^2+2a_{00}a_{01}+2a_{00}a_{10}+a_{01}^2+a_{10}^2\right)$.
Maximizing this subject to the normalization condition gives a maximum
of $\frac{3}{4}$, hence we have the bound
$p_A\leq\frac{3}{4}$. Equality is achievable within the original
protocol (i.e., without the additional step we introduced) by having
Alice use the state
\begin{eqnarray}
\sqrt{\frac{2}{3}}\ket{0000}_{A_1B_1A_2B_2}+\frac{1}{\sqrt{6}}\left(\ket{0011}_{A_1B_1A_2B_2}+\ket{1100}_{A_1B_1A_2B_2}\right),
\end{eqnarray}
and simply sending $\mathcal{H}_{A_1}$ or $\mathcal{H}_{A_2}$ to Bob
in the final step, depending on Bob's choice.  

The protocol is cheat-sensitive towards Alice---{\em any} strategy
which increases her probability of obtaining one outcome gives her a
non-zero probability of being detected.

\subsubsection{Bob's Bias}
Assume Alice is honest.  We will determine the maximum probability, $p_B$,
that Bob can achieve the outcome 0.  The maximum probability
for outcome 1 follows by symmetry.  Bob seeks to take the qubits he
receives, perform some local operation on them, and then announce one
of them to be the coin-toss state such that the probability that Alice
measures 0 on her part of the state he announces is maximized.

Suppose that we have found the local operation maximizing Bob's
probability of convincing Alice that the outcome is 0.  Having
performed this operation and sent the announcement to Alice, the
outcome probabilities for Alice's subsequent measurement on the state
selected by Bob in the $\{\ket{0},\ket{1}\}$ basis are fixed.  Bob's
probability of winning depends only on this.  It is therefore
unaffected by anything Alice does to the other qubit, and, in
particular, is unaffected if Alice measures both of her qubits in the
$\{\ket{0},\ket{1}\}$ basis before looking at Bob's choice.  Such a
measurement commutes with Bob's local operation, so could be done by
Alice prior to Bob's operation without changing any outcome
probabilities.  If Alice does this measurement she gets outcome 1 on
both qubits with probability $\frac{1}{4}$.  In such a case, Bob
cannot convince Alice that the outcome is 0.  Therefore, we have
bounded Bob's maximum probability of winning via $p_B\leq\frac{3}{4}$.

To achieve equality, Bob can measure each qubit he receives in the
$\{\ket{0},\ket{1}\}$ basis, and if he gets one with outcome 0, choose
this state as the one to use for the coin toss.  There is no cheat
sensitivity towards Bob; he can use this strategy without fear of
being caught.

\subsubsection{Discussion}
In this section we have presented two non-relativistic quantum
protocols for strong coin tossing.  Each of which has bias
$\frac{1}{4}$.  The first, due to Ambainis, is based on bit
commitment.  The second is based on sharing entanglement.  In terms of
practicality, the key differences between the schemes are as follows.
Firstly, Ambainis' protocol requires manipulation and communication of
a single qutrit, while ours requires four qubits (two of which are
communicated).  Furthermore, there cannot be a bit-commitment based
scheme of this type\footnote{i.e.\ where all of the quantum systems
are supplied by Alice.} with a smaller dimensionality than Ambainis'
since bit-commitment based protocols using qubits cannot achieve bias
$\frac{1}{4}$ \cite{Spekkens&Rudolph}.  Secondly, Ambainis' protocol
does not require the storage of quantum systems.

The question of whether Kitaev's bound can be reached remains open.
That two protocols attempting to optimize the bias both have bias
$\frac{1}{4}$ is evidence that this might be the best possible.  One
would like to construct a proof of this.

\subsection{Relativistic Protocols}
\label{relCT}
Such protocols allow coin tossing with zero bias, due to the bit
commitment property they offer (cf.\ Section \ref{power}).

\begin{protocol}\qquad

\begin{enumerate}
\item At time $t_0$, $A_1$ sends a bit, $b\in\{0,1\}$, to $B_1$
  choosing $b$ from a uniform distribution.
\item $B_2$ simultaneously sends a bit, $b'$, to $A_2$.
\item $B_1$ checks that his received message arrived before time
  $t_0+D$, and likewise, so does $A_2$\footnote{Recall that we use
  units in which the speed of light is unity.}.  If this is not the
  case, they abort.
\item The disconnected agents of Alice communicate with one another,
  as do those of Bob.  Alice and Bob can then compute the coin toss
  outcome, $b\oplus b'$.
\end{enumerate}
\label{relCT_prot}
\end{protocol}

The impossibility of superluminal signalling prevents either party
cheating in such a protocol.

\section{Discussion}
In this chapter, we have shown how the physical world in which our
protocol operates has significant implications on its security, thus
highlighting the fact that what can and cannot be done in terms of
information processing tasks depends fundamentally on physics.  In a
non-relativistic, classical world, it is impossible to achieve
unconditional security for any two-party protocol, because such
protocols are information complete.  In a non-relativistic quantum
world, information completeness can be broken, as described in Section
\ref{power}.
\comment{For example, a protocol might involve Alice
encoding bits in one of two non-orthogonal bases, before sending the
quantum states to Bob. Bob, being unaware of Alice's bases, cannot
reconstruct her bits with certainty.  Likewise, if Bob measures each
state he receives in one of the two encoding bases chosen at random,
then Alice cannot tell exactly what Bob knows about her string.
Therefore, information completeness is lost, and extra cryptographic
power exists over protocols involving only classical systems.}  
This is sufficient to ensure partial security in coin tossing.
Relativity introduces the possibility of stronger security still.  The
impossibility of superluminal signalling means that information can be
completely concealed from one party, at least for the light travel
time.  This allows a zero-knowledge, finite-time commitment, which is
sufficient for coin tossing.

In the forthcoming chapters we discuss the extent to which quantum and
relativistic protocols can be used to achieve other cryptographic
tasks.  Chapter \ref{Ch2} will show that one additional task (variable
bias coin tossing) is possible, while in Chapter \ref{Ch3} a large set
of other tasks are shown to be impossible.



\chapter{Variable Bias Coin Tossing}
\label{Ch2}

\begin{quote}
{\it ``What does chance ever do for us?''} -- William Paley
\end{quote}

\section{Introduction}
In a future version of society, etiquette has become so important that
it is impinging on free will.  Declining an invitation from an
upstanding member of the community has become near impossible.  A new
social code has emerged to circumvent this, whereby the acceptance or
otherwise of all invitations are resolved via a variable bias coin
toss ({\sc VBCT}).  This task allows one party to secretly choose the bias
of the coin within some prescribed range.  If one wants to decline the
invitation, one biases so as to maximize the probability of
declination.  Then, on receiving the (hopefully) negative outcome, one
simply ascribes this to ill fortune.  This new social code therefore
restores some free will, at the expense that sometimes one has to
decline favourable invitations.
\bigskip
\nomenclature[zVBCT]{{\sc VBCT}}{Variable Bias Coin Tossing}

In this chapter, we consider protocols for the task of variable bias
coin tossing between two parties.  The results presented here have
been published by us in \cite{CK1}.  The aim of a {\sc VBCT} protocol is to
generate a shared random bit, as though by a biased coin whose bias is
secretly chosen by one of the parties to take some value within a
prescribed range.  This is the two-faced case of the more general task
of carrying out a variable bias $n$-faced die roll, in which one of
$n$ possible outcomes is randomly generated as though by a biased die,
whose bias (i.e. list of outcome probabilities) is secretly chosen by
one of the parties to take some value within a prescribed convex set.
Variable bias coin tossing and die rolling are themselves special
cases of secure $2$-party computations.  To understand their
significance, we first locate them within a general classification of
secure computation tasks.

\section{Secure Multi-Party Computation}
\label{sec_comp_intro}
A general secure classical computation involves $N$ parties, labelled
by $i$ in the range $1 \leq i \leq N$, who each have some input,
$x_i$, and wish to compute some (possibly non-deterministic) functions
of their inputs, with the $i$th party receiving as output $f_i(x_1 ,
\ldots , x_N )$.  We call this a classical computation because the
inputs and outputs are classical, although we allow such computations
to be implemented by protocols which involve the processing of quantum
states\footnote{Ones which do not, we call classical protocols: here
we are considering quantum relativistic protocols for classical
computations.}.  All of the computations we consider in this thesis
are classical in this sense (although most of the protocols we discuss
involve quantum information processing), and we will henceforth refer
to these as computations, with the term ``classical'' taken as
understood.  A perfectly secure computation guarantees, for each $i$,
each subset $J \subseteq \{ 1 , \ldots , N \}$, and each set of
possible inputs $x_i$ and $\{x_j\}_{j \in J}$, that if the parties $J$
do indeed input $\{x_j\}_{j \in J}$ and then collaborate, they can
gain no information about the input $x_i$ other than what is implied
by $\{x_j\}_{j \in J}$ and $\{f_j (x_1 , \ldots , x_N )\}_{j \in J}$.

Note that some tasks fall outside this model completely.  Bit
commitment, for example, requires that the output is at some time
fixed, but is not revealed until a later time.  Other computations
with this delay feature also fall outside the scope of our model.

We restrict attention here to two types of two-party computation: {\it
two-sided} computations in which the outputs prescribed for each party
are identical, and {\it one-sided} computations in which one party
gets no output.  We use the term {\it single function computations} to
cover both of these types, since, in both cases, only one function
need be evaluated.  We can classify single function computations by
the number of inputs (by which we mean the number of parties making an
input, as distinct from the size of the set of possible values of such
inputs), by whether they are deterministic or random, and by whether
one or two parties receive the output.

We are interested in protocols whose unconditional security is
guaranteed by the laws of physics.  In particular, as is standard in
these discussions, we do not allow any security arguments based on
technological or computational bounds: each party allows for the
possibility that the other may have arbitrarily good technology and
arbitrarily powerful quantum computers.  In addition, we assume that
Assumptions \ref{ass1}--\ref{ass4} (see Section \ref{thesetting})
hold.  Under such assumptions, the known results for secure
computations are summarized below.

{\bf Zero-input computations:} \qquad Secure protocols for zero-input
deterministic computations or zero-input random one-sided computations
can be trivially constructed, since the relevant computations can be
carried out by one or both parties separately.  The most general type
of zero-input two-sided random secure computation is a biased
$n$-faced secure die roll.  This can be implemented with unconditional
security by generalizing the relativistic protocol for a secure coin
toss given in Section \ref{relCT} as follows.  

\begin{protocol}\qquad

For an $n$-faced die with distribution $p_1,\ldots,p_n$,

\begin{enumerate}
\item $A_1$ creates a string, $X\in\{1,\ldots,n\}^N$, for which $Np_i$
members are $i$ for all $i\in\{1,\ldots,n\}$ ($N$ is such that $Np_i$
is an integer for all $i$, or, if the chosen probabilities are
irrational, we can get arbitrarily close to the correct distribution
by taking $N$ large).  The permutation of elements in the string is
chosen uniformly at random.  $A_1$ sends this string to
$B_1$. \footnote{For example, in an unbiased coin toss, $X$ is either
$(0,1)$ or $(1,0)$.  The second bit is redundant, hence the protocol
can be simplified to Protocol \ref{relCT_prot}, presented previously.}
\item $B_2$ simultaneously sends a random number,
$j\in\{1,\ldots,N\}$, to $A_2$.
\item $B_1$ checks that his received message arrived before time
$t_0+D$, and likewise, so does $A_2$.  If this is not the case, they
abort.
\item The disconnected agents of Alice communicate with one another,
as do those of Bob.  
\item Bob checks that the string he received from Alice has the
correct number of entries of each type.  If not, he aborts.  Otherwise
the outcome of the die roll is the $j$th member of $X$.
\end{enumerate}
\end{protocol}

\smallskip

{\bf One-input computations:} \qquad Secure protocols for
deterministic one-input computations are trivial; the party making the
input can always choose it to generate any desired output on the other
side and so might as well compute the function on their own and send
the output directly to the other party.

The non-deterministic case is of interest.  For one-sided computations,
where the output goes to the party that did not make the input, the
most general function is a one-sided variable bias $n$-faced die roll.
The input simply defines a probability distribution over the outputs.
In essence, one party chooses one from a collection of biased
$n$-faced dice to roll (the members of the collection are known to
both parties).  The output of the roll goes to one party only, who has
no other information about which die was chosen.

It is known that some computations of this type are impossible.
Oblivious transfer falls into this class, for instance, and is shown
to be impossible in Section \ref{OT_imposs}.  \footnote{To see that
{\sc OT} can be thought of as an example of a one-sided variable bias
$n$-faced die roll, consider the probability table, Table \ref{tabOT},
in Section \ref{OT_imposs}.  The computation can be though of as
having Alice pick one of two three sided die to roll (the three sides
being labelled $0$, $1$ and $?$).} In Chapter \ref{Ch3}, we discuss
other computations of this type, and show that they are impossible to
implement securely.

We call the two-sided case of a non-deterministic one-input function a
variable bias $n$-faced die roll.  This---and particularly the
two-faced case, a variable bias coin toss---is the subject of the
present chapter.  We will give a protocol that implements the task
with unconditional security for a limited range of biases, another
which permits any range of biases but achieves only cheat-evident
security, and two further protocols that allow any range of biases and
which we conjecture are unconditionally secure.  Such tasks are
impossible in non-relativistic cryptography.

\smallskip

{\bf Two-input computations:} \qquad Lo \cite{Lo} considered the task
of finding a secure nonrelativistic quantum protocol for a two-input,
deterministic, one-sided function. He showed that if the protocol
allows Alice to input $i$, Bob to input $j$, and Bob to receive
$f(i,j)$, while giving Alice no information on $j$, then Bob can also
obtain $f(i,j')$ for all $j'$.  For any cryptographically nontrivial
computation, there must be at least one $i$ for which knowing
$f(i,j')$ for all $j'$ gives Bob more information than knowing
$f(i,j)$ for just one value of $j$.  As this violates the definition
of security for a secure classical computation, it is impossible to
implement any cryptographically nontrivial computation securely.  

Lo's proof as stated applies to nonrelativistic protocols.  He showed
that there cannot exist a set of states
$\{\ket{\psi^{i,j}_{AB}}\}_{i,j}$, shared between Alice and Bob that
fulfil the requirements of such a computation.  In a relativistic
computation where all measurements are kept quantum until the end, the
final state must again be an ($i$, $j$)-dependent pure state
distributed between Alice and Bob.  Lo's impossibility result
therefore extends trivially to relativistic protocols\footnote{Rudolph
\cite{Rudolph} has defined the notion of a {\it consistent} task as
one for which there exist states shared between the parties, and local
operations which could satisfy the security demands.  Inconsistent
tasks are then impossible whether or not the protocol is relativistic,
hence Lo's proof also works in this scenario.  Consistent tasks are
not necessarily possible: they require a way to securely generate a
shared state of the correct form.  Whether this is achievable can
depend on whether a relativistic protocol is used or not.}.

Lo also noted that some secure two-input deterministic, two-sided
non-relativistic quantum computations are impossible, because they
imply the ability to do non-trivial secure two-input, deterministic
one-sided computations.  This argument also extends trivially to
relativistic protocols.

We will discuss further protocols in this class in detail in Chapter
\ref{Ch3}.

\bigskip

Table \ref{fns} summarizes the known results.

\begin{sidewaystable}
\begin{tabular}{|l|l|c|c|l|}
\hline
\multicolumn{2}{|c|}{Type of computation}&Securely Implementable&Comment\\
\hline
Zero-input&Deterministic &$ \checkmark $  &Trivial\\
        &Random       one-sided         &$ \checkmark $  &Trivial\\
        &Random       two-sided         &\checkmark   &Biased $n$-faced die roll\\
\hline
One-input &Deterministic &$\checkmark $  &Trivial\\
        &Random       one-sided         &(\ding{55})&One-sided
variable bias $n$-faced die roll\\
        &Random       two-sided         &$\checkmark^*$ &Variable bias
$n$-faced die roll\\
\hline
Two-input &Deterministic one-sided         &\ding{55}    &cf.\ Lo\\
        &Deterministic two-sided        &(\ding{55})  &cf.\ Lo\\
        &Random       one-sided        &? &see Chapter \ref{Ch3}\\
        &Random       two-sided        &? &see Chapter \ref{Ch3}\\
\hline
\end{tabular}
\caption{Functions computable securely in two-party computations using
  (potentially) both quantum and relativistic protocols, when
  unconditional security is sought.  $\checkmark$ indicates that all
  functions of this type are possible, \ding{55} indicates that all
  functions of this type are impossible, $\checkmark^*$ indicates that
  conjectures made later in this chapter imply that all functions of
  this type are possible, and (\ding{55}) indicates that some
  functions of this type are impossible. ? indicates an unknown result
  (to be discussed in Chapter \ref{Ch3}).  An updated version of this
  table, Table \ref{fns2}, is given at the end of Chapter \ref{Ch3}.}
\label{fns}
\end{sidewaystable}

\section{Variable Bias Coin Tossing}
\label{secvbct}
\subsection{Introduction}
We now specialize to the task of variable bias coin tossing ({\sc
VBCT}), the simplest case of a one-input, random, two-sided
computation.  We seek protocols whose security is guaranteed based on
the laws of physics.

The aim of a {\sc VBCT} protocol is to provide two mistrustful parties with
the outcome of a biased coin toss.  We label the possible outcomes by
$0$ and $1$ and define the {\it bias} to be the probability, $p_0$, of
outcome $0$.  The protocol should allow one party, by convention Bob,
to fix the bias to take any value within a pre-agreed range, $p_{\rm
min} \leq p_0 \leq p_{\rm max}$.  The protocol should guarantee to
both parties that the biased coin toss outcome is genuinely random, in
the sense that Bob's only way of influencing the outcome probabilities
is through choosing the bias, while Alice has no way of influencing
the outcome probabilities at all.  It should also guarantee to Bob
that Alice can obtain no information about his bias choice beyond what
she can infer from the coin toss outcome alone.

To illustrate the uses of {\sc VBCT}, consider a situation in which Bob may
or may not wish to accept Alice's invitation to a party, in a future
world in which social protocol decrees that his
decision\footnote{Naturally, a similar protocol, in which Alice
chooses the bias, governs the decision about whether or not an
invitation is issued.}  is determined by a variable bias coin toss in
which he chooses the bias within a prescribed range, let us say
$p_{\rm min} = \frac{1}{11} \leq p_0 \leq p_{\rm max} =
\frac{10}{11}$.  Alice, who is both self-confident and a Bayesian,
believes prior to the coin toss that the probability of Bob not
wishing to accept is $10^{-n}$, for some fairly large value of $n$.
If Bob does indeed wish to accept, he can choose $p_0 =
\frac{10}{11}$, ensuring a high probability of acceptance.  If he does
not, he can choose $p_0 = \frac{1}{11}$, ensuring a low probability of
acceptance.  If the invitation is declined, this social protocol
allows both parties to express regret, ascribing the outcome to bad
luck rather than to Bob's wishes.  Alice's posterior probability
estimate of Bob's not wishing to attend is approximately $10^{-n+1}$,
i.e., still close to zero.

For another illustration of the uses of {\sc VBCT}, suppose that Bob
has a large secret binary data set of size $N$.  For example, this
might be a binary encoding of a high resolution satellite image.  He
is willing to sell Alice a noisy image of the data set with a
specified level of random noise.  Alice is willing to purchase if
there is some way of guaranteeing, at least to within tolerable
bounds, that the noise is at the specified level and that it was
genuinely randomly generated.  In particular, she would like some
guarantee that constrains Bob so that he cannot selectively choose the
noise so as to obscure a significantly sized component of the data set
which he (but not necessarily she) knows to be especially interesting.
Let us suppose also that the full data set will eventually become
public, so that Alice will be able to check the noisy image against
it, and that she will be able to enforce suitably large penalties
against Bob if the noisy and true versions turn out not to be
appropriately related.  They may proceed by agreeing on parameters
$p_{\rm min}$ and $p_{\rm max} = 1 - p_{\rm min}$, and then running a
variable bias coin toss for each bit in the image, with Bob choosing
$p_0 = p_{\rm min}$ if the bit is $1$ and $p_0 = p_{\rm max}$ if the
bit is $0$.  Following this protocol honestly provides Alice with the
required randomly generated noisy image.  On the other hand, if Bob
deviates significantly from these choices for more than $O( \sqrt{N}
)$ of the bits, Alice will almost certainly be able to unmask his
cheating once she acquires the full data set.

\subsection{Definitions}
A {\sc VBCT} protocol is defined by a prescribed series of classical or
quantum communications between two parties, Alice and Bob.  If the
protocol is relativistic, it may also require that the parties each
occupy two or more appropriately located sites and may stipulate which
sites each communication should be made from and to.  The protocol's
definition includes bias parameters $p_{\rm min}$ and $p_{\rm max}$,
with $p_{\rm min} < p_{\rm max}$, and may also include one or more
security parameters $N_1 , \ldots , N_r$.  It accepts a one bit input
from one party, Bob, and must result in both parties receiving the
same output, one of the three possibilities $0$, $1$ or ``abort''.
The output ``abort'' can arise only if at least one of the parties
refuses to complete the protocol honestly.

We follow the convention that Bob can fix $p$ to be $p_{\rm min}$ or
$p_{\rm max}$ by choosing inputs $1$ or $0$ respectively (so that an
input of bit value $b$ maximizes the probability of output $b$).  He
can thus fix $p$ anywhere in the range $p_{\rm min} \leq p \leq p_{\rm
max}$ by choosing the input randomly with an appropriate weighting.
Since any {\sc VBCT} protocol gives Bob this freedom, we do not require a
perfectly secure protocol to exclude other strategies which have the
same result: i.e., a perfectly secure protocol may allow any strategy
of Bob's which causes $p_0$ to lie in the given range, so long as no
other security condition is violated. \footnote{Similar statements
hold, with appropriate epsilonics, for secure protocols: see below.}
However, if Bob in honest, he chooses either $p=p_{\rm min}$ or
$p=p_{\rm max}$.  This motivates the following security definitions.

We say the protocol is {\it secure} if the following conditions hold
when at least one party honestly follows the protocol.  Let $p_0$ be
the probability of the output being $0$, and $p_1$ be the probability
of the output being $1$.  Then, regardless of the strategy that a
dishonest party may follow during the protocol, we have $ p_0 \leq p +
\epsilon (N_1 , \ldots , N_r )$ and $ p_1 \leq (1-p) + \epsilon (N_1 ,
\ldots , N_r )$, where $p_{\rm min} \leq p \leq p_{\rm max}$ and the
protocol allows Bob to determine $p$ to take any value in this range.
Alice has probability less than $\zeta(N_1,\ldots,N_r)$ of obtaining
more than $\delta(N_1 , \ldots , N_r)$ bits of information that are
not implied by the outcome.  In addition, if Bob honestly follows the
protocol and legitimately aborts before the coin toss outcome is
known\footnote{We take this to be the point at which both parties have
enough information (possibly distributed between their remote agents)
to determine the outcome.}, then Alice has probability less than
$\zeta(N_1 , \ldots , N_r)$ of obtaining more than $\delta(N_1 ,
\ldots , N_r )$ bits of information about Bob's input.

(We should comment here on a technical detail that will be relevant to
some of the protocols we later consider.  It turns out, in some of our
protocols, to be possible and useful for Bob to make supplementary
security tests even after both parties have received information which
would determine the coin toss outcome.  The protocols are secure
whether or not these supplementary tests are made, in the sense that
the security criteria hold as the security parameters tend to
infinity.  However, the supplementary tests increase the level of
security for any fixed finite value of the security parameters.

We need slightly modified definitions to cover this case, since the
output of the protocol is defined to be ``abort'' if Bob aborts after
carrying out supplementary security tests.  If Bob honestly follows a
protocol with supplementary tests, and legitimately aborts after the
coin toss outcome is determined, then we require that Alice should
have probability less than $\zeta(N_1 , \ldots , N_r)$ of obtaining
more than $\delta (N_1 , \ldots , N_r )$ extra bits of
information---i.e., beyond what is implied by the coin toss outcome.

Note that introducing supplementary security tests may allow Alice to
follow the protocol honestly until she obtains the coin toss outcome,
and then deliberately fail the supplementary tests in order to cause
the protocol to abort.  However, this may not give her useful extra
scope for cheating.  In a {\sc VBCT} protocol in which the coin toss
outcome has some real world consequence, for instance, Alice can
always follow the protocol honestly and then refuse to abide by the
consequence dictated by its outcome: for example, she can decide not
to invite Bob to her party, even if the variable bias coin toss
suggests that she should.  This unavoidable possibility has the same
effect as her causing the protocol to abort after the coin toss
outcome is determined.)

In all the above cases, we require $\delta (N_1 , \ldots , N_r )
\rightarrow 0$, $\epsilon (N_1 , \ldots , N_r ) \rightarrow 0$ and
$\zeta(N_1,\ldots,N_r)\rightarrow 0$ as the $N_i \rightarrow \infty$.
We say the protocol is {\it perfectly secure} for some fixed values
$N_1 , \ldots , N_r$ if the above conditions hold with $\epsilon(N_1 ,
\ldots , N_r ) = \delta (N_1 , \ldots , N_r ) = \zeta(N_1 , \ldots ,
N_r ) = 0$.

Suppose now that one party is honest and the other party fixes their
strategy (which may be probabilistic and may depend on data received
during the protocol) before the protocol commences, and suppose that
the probability of the protocol aborting, given this strategy, is less
than $\epsilon'$.  Since the only possible outcomes are $0$, $1$ and
``abort'', it follows from the above conditions that, if Bob inputs
$1$, we have $p_{\rm min} - \epsilon (N_1 , \ldots , N_r ) - \epsilon'
< p_0 \leq p_{\rm min} + \epsilon (N_1 , \ldots , N_r )$ and $(1 -
p_{\rm min}) - \epsilon (N_1 , \ldots , N_r ) - \epsilon' < p_1 \leq
(1 - p_{\rm min}) + \epsilon (N_1 , \ldots , N_r )$.  Similarly, if
Bob inputs $0$, we have $p_{\rm max} - \epsilon (N_1 , \ldots , N_r )
- \epsilon' < p_0 \leq p_{\rm max} + \epsilon (N_1 , \ldots , N_r )$
and $(1 - p_{\rm max}) - \epsilon (N_1 , \ldots , N_r ) - \epsilon' <
p_1 \leq (1 - p_{\rm max}) + \epsilon (N_1 , \ldots , N_r )$.  In
other words, unless a dishonest party is willing to accept a
significant risk of the protocol aborting, they cannot cause the
outcome probabilities for $0$ or $1$ to be significantly outside the
allowed range.  Moreover, no aborting strategy can increase the
probability of $0$ or $1$ beyond the allowed maximum.

For an {\it unconditionally secure} {\sc VBCT} protocol, the above
conditions hold assuming only that the laws of physics are correct. In
a {\it cheat-evidently secure} protocol, if any of the above
conditions fail, then the non-cheating party is guaranteed to detect
this, again assuming only the validity of the laws of physics.

\section{VBCT Protocols}

\subsection{Protocol VBCT1}

We consider first a simple relativistic quantum protocol, which
implements {\sc VBCT} with unconditional security, for a limited range of
biases.  The protocol requires each party to have agents located at
three appropriately separated sites.
\bigskip\newline
{\bf Protocol {\sc VBCT}1}

\begin{enumerate}

\item \label{poisson} $B_1$, $B_2$ and $B_3$ agree on a random number
  $n$ chosen from a Poisson distribution with large mean (or other
  suitable distribution).

\item $A_1$ sends a sequence of qubits $\{\ket{\phi_i}\}$ to $B_1$,
where each $\ket{\phi_i} \in \{ \ket{ \psi_0},\ket{\psi_1} \}$ is
chosen independently with probability half each, using the random
string ${\bf x}$. The states $\ket{ \psi_0}$ and $\ket{\psi_1}$ are
agreed between Alice and Bob prior to the protocol, and the qubits are
sent at regular intervals according to a previously agreed schedule,
so that all the agents involved can coordinate their transmissions.

\item $B_1$ receives each qubit and stores it.

\item \label{seq} $A_2$ tells $B_2$ the sequence of states $\{
\ket{\phi_i} \}$ sent, choosing the timings so that $A_1$'s quantum
communication of the qubit $\ket{\phi_i}$ is spacelike separated from
$A_2$'s classical communication of its identity.  $B_2$ relays these
communications to $B_1$.

\item \label{announce} $B_3$ announces to $A_3$ that the $n$th state
will be used for the coin toss.  This announcement is made at a point
spacelike separated from the $n$th rounds of communication between
$A_1$ and $B_1$ and $A_2$ and $B_2$.  $A_3$ reports the value of $n$
to $A_1$ and $A_2$.

\item \label{Bmeasure} $B_1$ performs the measurement on
$\ket{\phi_n}$ that optimally distinguishes $\ket{\psi_0}$ from
$\ket{\psi_1}$, and then reveals $n$ to $A_1$, along with a bit $b$.
If his measurement is indicative of the state being $\ket{\psi_{b'}}$,
then Bob should select $b=b'$ if he wants outcome 0, or else select
$b=\bar{b'}$.  Let Alice's random choice for the $n$th state be
$\ket{\psi_a}$: recall that $A_2$ reported the value of $a$ to $B_2$
in Step \ref{seq}.

\item \label{Bcheck1} Some time later, on receipt of the sequence sent
by $B_2$ in Step \ref{seq}, $B_1$ measures his remaining stored states
to verify that they were correctly described by $A_2$.  If any error
occurs, he aborts.

\item \label{receive} $A_1$ receives from $A_3$ the value of $n$ sent
by $B_3$, confirming that $B_1$ was committed to guess the $n$th
state, and $B_1$ receives from $B_2$ the value of $a$ sent by $A_2$.
The outcome of the coin toss is $c= a \oplus b$.
\end{enumerate}

It will be seen that this protocol is a variant of the familiar
relativistic protocol for ordinary coin tossing.  As in that protocol,
Alice and Bob simultaneously exchange random bits.  However, Alice's
bit is here encoded in non-orthogonal qubits, which means that Bob can
obtain some information about its value.  Bob uses this information to
affect the bias of the coin toss.

We use the bit $w$ to represent Bob's wishes, with $w=0$ representing
Bob trying to produce the outcome $0$ by guessing correctly, and $w=1$
representing him trying to produce the outcome $1$ by guessing
wrongly. Security requires that
\begin{equation}
\label{criterion}
p(w|a,b,c) \approx p(w|c),  
\end{equation}
i.e.\ the bits $a$ and $b$ convey no information about Bob's wishes.
Perfect security requires equality in the above equation.

\subsubsection{Bob's Strategy}
The choice of $n$ need not be fixed by Bob at the start of the
protocol: for example, it could be decided during the protocol by
using an entangled state shared by the $B_i$.  However, we may assume
$B_3$ sends a classical choice of $n$ to $A_3$ ($A_3$ will measure any
quantum state he sends immediately in the computational basis, and
hence we may assume, for the purposes of security analysis, that $B_3$
carries out this measurement).  $B_3$'s announcement of $n$ is
causally disconnected from the sending of the $n$th state to $B_1$ and
of its identity to $B_2$.  Therefore, no matter how it is selected, it
does not depend on the value of the $n$th state.  While it could be
generated in such a way as to depend on some information about the
sequence of states previously received, these states are uncorrelated
with the $n$th state if Alice follows the protocol.  Such a strategy
thus confers no advantage, and we may assume, for the purposes of
security analysis, that the choice of $n$ is generated by an
algorithm independent of the previous sequence of states.  We may also
assume that $n$ is generated in such a way that $B_1$ and $B_2$ can
obtain the value of $n$ announced by $B_3$ with certainty: if not,
their task is only made harder.  In summary, for the purposes of
security analysis, we may assume that $B_3$ announces a classical
value of $n$, pre-agreed with $B_1$ and $B_2$ at the beginning of the
protocol.

$B_1$ is then committed to making a guess of the value of the $n$th
state: if he fails to do so then Alice knows Bob has cheated.  $B_1$'s
best strategy is thus to perform some measurement on the $n$th state
and use the outcome to make his guess.  We define
$\ket{\psi_0}=\cos{\frac{\theta}{2}}\ket{0}+\sin{\frac{\theta}{2}}\ket{1}$
and
$\ket{\psi_1}=\cos{\frac{\theta}{2}}\ket{0}-\sin{\frac{\theta}{2}}\ket{1}$,
where $0\leq\theta\leq\frac{\pi}{2}$.  Let the projections defining
the optimal measurement be $P_0$ and $P_1$.  We say that the outcome
corresponding to $P_0$ is ``outcome 0'', and similarly for the outcome
corresponding to $P_1$.  Without loss of generality, we can take
outcome 0 to correspond to the most likely state Alice sent being
$\ket{\psi_0}$ and similarly outcome 1 to correspond to
$\ket{\psi_1}$. Bob's probability of guessing correctly is then given
by,
\begin{equation}
p_B = \frac{1}{2}\left(\bra{\psi_0}P_0\ket{\psi_0}+\bra{\psi_1}P_1\ket{\psi_1}\right) \, .
\end{equation}
This is maximized for $P_0$ and $P_1$ corresponding to measurements in
the $\ket{\pm}$ basis, where
$\ket{\pm}=\frac{1}{\sqrt{2}}(\ket{0}\pm\ket{1})$.  The maximum value
is,
\begin{equation}
\label{p_Bmax}
p_B^{\rm{max}}=\frac{1}{2}\left( 1+\sin{\theta}\right).
\end{equation}

It is easy to see that the security criterion (\ref{criterion}) is
always satisfied.  Furthermore, the outcome, $c$ can be used by either
party to simulate the intermediates produced in the protocol (i.e.,
$a$, $b$, and the set of quantum states), making it clear that no
information is gained, other than that implied by the outcome (the
r\^ole of simulatability in security will be discussed further in
Section \ref{sec_defs}).  The minimum probability of Bob guessing
correctly is always $1-p_B^{\rm{max}}$, which he can attain by following
the same strategy but associating outcome $b'$ with a guess of
$\bar{b'}$.  The possible range of biases are those between $p_{\rm
min}=\frac{1}{2} \left( 1-\sin{\theta} \right)$ and $p_{\rm
max}=\frac{1}{2} \left( 1+\sin{\theta} \right)$.  The protocol thus
implements {\sc VBCT} for all values of $p_{\rm min}$ and $p_{\rm max}$ with
$p_{\rm min} + p_{\rm max} =1$ (and no others).

\subsubsection{Security Against Alice}

Security against Alice is ensured by the fact that $B_1$ tests $A_2$'s
statements about the identity of the states sent to $B_1$.

We seek to show that if Alice attempts to alter the probability of
$B_1$ measuring $0$ or $1$ with his measurement in Step
\ref{Bmeasure}, then in the limit of large $n$, either the probability
of her being detected tends to $1$, or her probability of successfully
altering the probability tends to zero.  Note that it may be useful
for Alice to alter the probabilities in either direction: if she
increases the probability that $B_1$ guesses correctly, she learns
more information about Bob's bias than she should; if she decreases
it, she limits Bob's ability to affect the bias.

We need to show that if, on the $i$th round, $B_1$ receives state
$\rho_i$, for which the probability of outcome $0$ differs from those
dictated by the protocol, then the probability of $B_1$ not detecting
Alice cheating on this state is strictly less than $1$.

$B_1$'s projections are onto $\{\ket{+},\ket{-}\}$ for the $n$th
state.  Alice's cheating strategy must ensure that for some subset of
the states she sends to $B_1$, there is a different probability of his
measurement giving outcome $0$.  Suppose that $\rho_i$ satisfies
\begin{eqnarray}
\label{overlap}
\bra{+}\rho_i\ket{+}&=p_{\rm max}+\delta_1\\
&=p_{\rm min}+\delta_2\, ,
\end{eqnarray}
where $\delta_1,\delta_2\neq 0$.  Then, if $B_1$ was to instead test
Alice's honesty, the state which maximizes the probability of Alice
passing the test, among those satisfying (\ref{overlap}), is
\begin{equation}
(p_{\rm max}+\delta_1)^{\frac{1}{2}}\ket{+}+(1-p_{\rm max}-\delta_1)^{\frac{1}{2}}\ket{-},
\end{equation}
and she should declare this state to be whichever of $\ket{\phi_0}$ or
$\ket{\phi_1}$ maximizes the probability of passing Bob's test.  We
have
\begin{equation}
\left(( p_{\rm max} ( p_{\rm max} + \delta_1 ))^{\frac{1}{2}} + ( (
1-p_{\rm max}) (1-p_{\rm max} - \delta_1 ))^{\frac{1}{2}}\right)^2
\leq 1-\delta_1^2\, ,
\end{equation}
and a similar equation with $p_{\rm min}$ replacing $p_{\rm max}$ and
$\delta_2$ replacing $\delta_1$.  Hence the probability of passing
Bob's test is at most $1 - \delta^2$, where $\delta = \min (|\delta_1|
, |\delta_2| )$.  In order to affect $B_1$'s measurement probabilities
with significant chance of success, there must be a significant
fraction of states satisfying (\ref{overlap}).  If a fraction $\gamma$
of states satisfy (\ref{overlap}) with $\min (|\delta_1| , |\delta_2|
) \geq \delta$ for some fixed $\delta >0$, then this cheating strategy
succeeds with probability at most $\gamma(1-\delta^2)^{\gamma n}$.
Hence, for any $\delta$, $\gamma$, the probability of this technique
being successful for Alice can be made arbitrarily close to $0$ if Bob
chooses the mean of the Poisson distribution used in Step
\ref{poisson} (and hence the expected value of $n$) to be sufficiently
large.

Note that, as this argument applies state by state to the $\rho_i$, it
covers every possible strategy of Alice's: in particular, the argument
holds whether or not the sequence of qubits she sends is entangled.

We hence conclude that the protocol is asymptotically secure against
Alice.

\subsection{Protocol VBCT2}

We now present a relativistic quantum {\sc VBCT} protocol which allows any
range of biases, but achieves only cheat-evident security rather than
unconditional security.
\bigskip\newline
{\bf Protocol {\sc VBCT}2}

\begin{enumerate}
\item \label{first step} $B_1$ creates $N$ states, each being either
$\ket{\psi_0}=\alpha_0\ket{00}+\beta_0\ket{11}$ or
$\ket{\psi_1}=\alpha_1\ket{00}+\beta_1\ket{11}$, with
$\{\alpha_0,\alpha_1,\beta_0,\beta_1\}\in\mathbb{R}^+$, $\alpha_0^2 >
\alpha_1^2$, and $\alpha_i^2+\beta_i^2=1$.  The states are chosen with
probability half each.  In the unlikely event that all the states are
the same, $B_1$ rejects this batch and starts again.  $B_1$ uses the
shared random string ${\bf y}$ to make his random choices, so that
$B_1$ and $B_2$ both know the identity of the $i$th state.  $B_1$
sends the second qubit of each state to $A_1$.  The values of
$\alpha_0,\beta_0,\alpha_1$ and $\beta_1$ are known to both Alice and
Bob.  We define the bias of the state $\ket{\psi_i}$ to be
$\alpha_i^2$, and write $p_{\rm min} = \alpha_1^2$ and $p_{\rm max} =
\alpha_0^2$.

\item \label{Alice decides} Alice decides whether to test Bob's
honesty ($z=1$), or to trust him ($z=0$).  She selects $z=0$ with
probability $2^{-M}$. $A_1$ and $A_2$ simultaneously inform $B_1$ and
$B_2$ of $z$, $A_2$'s communication being spacelike separated from the
creation of the states by $B_1$ in Step \ref{first step}.

\item \begin{enumerate} \item If $z=1$, $B_1$ sends all of his qubits
and their identities to $A_1$, while $B_2$ sends the identities to
$A_2$.  $A_1$ can then verify that they are as claimed and if so, the
protocol returns to Step \ref{first step}.  If not, she aborts the
protocol.
  
\item \label{Bcheck} If $z=0$, $B_1$ randomly chooses a state to use
for the coin toss from among those with the bias he wants.  $B_2$
simultaneously informs $A_2$ of $B_1$'s choice. \end{enumerate}

\item $A_1$ and $B_1$ measure their halves of the chosen state in the
$\{\ket{0},\ket{1}\}$ basis, and this defines the outcome of the coin
toss.  \renewcommand{\labelenumi}{(\arabic{enumi}.}

\item \label{extratest} As an optional supplementary post coin toss
security test, $B_1$ may ask $A_1$ to send all her remaining qubits
back to him, except for her half of the state selected for the coin
toss. He can then perform projective measurements on these states to
check that they correspond to those originally sent.)

\end{enumerate}

An intuitive argument for security of this protocol is as follows.  On
the one hand, as $M\rightarrow\infty$, the protocol is secure against
Bob since, in this limit, he always has to convince Alice that he
supplied the right states which he can only do if he has been honest.
But also, in the limit $N\rightarrow\infty$, we expect the protocol to
be secure against Alice, since in this limit, she cannot gain any more
information about the bias Bob selected than can be gained by
performing the honest measurement.

The protocol can only provide cheat-evident security rather than
unconditional security, since there are useful cheating strategies
open to Alice, albeit ones which will certainly be detected.  One such
strategy is for $A_1$ to claim that $z=0$ on some state, while $A_2$
claims that $z=1$.  This allows Alice to determine Bob's desired bias,
since $B_1$ will tell $A_1$ the state to use, and $B_2$ will tell
$A_2$ its identity.  However, this cheating attack will be exposed
once $B_1$ and $B_2$ communicate.

(Technically, Alice has another possible attack: she can follow the
protocol honestly until she learns the outcome, and then intentionally
try to fail Bob's tests in Step \ref{extratest} by altering her halves
of the remaining states in some way.  By so doing, she can cause the
protocol to abort after the coin toss outcome is determined.  However,
as discussed in Section \ref{secvbct}, this gives her no advantage.)

\subsubsection{Security Against Alice}
Assume Bob does not deviate from the protocol.  $A_2$ must announce
the value of $z$ without any information about the current batch of
states sent to $A_1$ by $B_1$.  Alice therefore cannot affect the
bias: once a given batch is accepted, she cannot affect $B_1$'s
measurement probabilities on any state he chooses for the coin toss.
While Alice's choices of $z$ need not be classical bits determined
before the protocol and shared by the $A_i$, we may assume, for the
purposes of security analysis, that they are, by the same argument
used in analyzing Bob's choice of $n$ in {\sc VBCT}1.

Once Bob has announced the state he wishes to use for the coin toss,
though, Alice can perform any measurement on the states in her
possession in order to gain information about Bob's chosen bias.  It
would be sufficient to show that any such attack that provides
significant information is almost certain to be detected by Bob's
tests in Step \ref{Bcheck}; if so, the existence of such attacks would
not compromise the cheat-evident security of the protocol.  In fact, a
stronger result holds: Alice cannot gain significant information by
such attacks.  From her perspective, if Bob selects a $\ket{\psi_0}$
state for the coin toss, the (un-normalized) mixed state of the
remaining $(N-1)$ qubits is,
\begin{equation}
\tilde{\sigma}_0 \equiv \sum_{m=0}^{N-2} \,
\sum_{\genfrac{}{}{0pt}{}{i_1 , \ldots , i_{N-1} \in \{ 0,
1\}}{\sum_{j=1}^{N-1} i_j = (N-1-m)}} \rho_{i_1} \otimes \rho_{i_2}
\otimes \cdots \otimes\rho_{i_{N-1}} \, ,
\end{equation}
while if Bob selects a $\ket{\psi_1}$ state for the coin toss, the
(un-normalized) mixed state of the remaining $(N-1)$ qubits is
\begin{equation}
\tilde{\sigma}_1 \equiv \sum_{m=1}^{N-1} \, \sum_{\genfrac{}{}{0pt}{}{i_1 , \ldots ,
i_{N-1} \in \{ 0, 1\}}{\sum_{j=1}^{N-1} i_j = (N-1-m)}}
\rho_{i_1} \otimes \rho_{i_2} \otimes \cdots \otimes \rho_{i_{N-1}} \, ,
\end{equation}
where
\begin{equation*}
\rho_i = \text{tr}_B ( \ketbra{\psi_i}{\psi_i} ) \qquad {\rm for~}i=0,1 \,
.
\end{equation*}
We will use $\sigma_0$ and $\sigma_1$ to denote the normalized
versions of $\tilde{\sigma}_0$ and $\tilde{\sigma}_1$ respectively.

We have
\begin{equation}
D(\rho_0\otimes\sigma_0,\rho_1\otimes\sigma_1)\leq
D(\rho_0\otimes\sigma_0,\rho_1\otimes\sigma_0)+
D(\rho_1\otimes\sigma_0,\rho_1\otimes\sigma_1)
\end{equation}
As $N \rightarrow \infty$, we have $D(\sigma_0 , \sigma_1) \rightarrow
0$ and so $D(\rho_0\otimes\sigma_0,\rho_1\otimes\sigma_1) \rightarrow
D(\rho_0 , \rho_1)$.  Since the maximum probability of distinguishing
two states is a function only of the trace distance (see Appendix
\ref{AppA}), the maximum probability of distinguishing
$\rho_0\otimes\sigma_0$ from $\rho_1\otimes\sigma_1$ tends, as $N
\rightarrow \infty$, to the maximum probability of distinguishing
$\rho_0$ from $\rho_1$.  The measurement that attains this maximum is
that dictated by the protocol.  
We hence conclude that, in the limit of large $N$, the excess
information Alice can gain by using any cheating strategy tends to
zero.

\subsubsection{Security Against Bob}
We now consider Bob's cheating possibilities, assuming that Alice does
not deviate from the protocol.  To cheat, Bob must achieve a bias
outside the range permitted.  Let us suppose he wants to ensure that
the outcome probability of $0$ satisfies $p_0 \geq p_{\rm max} +
\delta$, for some $\delta > 0$ (the case $p_1 \geq 1-p_{\rm min} +
\delta$ can be treated similarly), and let us suppose this can be
achieved with probability $\delta' > 0$.

For this to be the case, there must be some cheating strategy
(possibly including measurements) which, with probability $\delta'$,
allows $B_2$ to identify a choice of $i$ from the relevant batch of
$N$ qubits such that the state $\rho_i$ of $A_1$'s $i$th qubit then
satisfies
\begin{equation}
\label{zer}
\bra{0} \rho_i \ket{0} \geq p_{\rm max} + \delta .
\end{equation} 

If $A_1$'s $i$th qubit does indeed have this property, and she chooses
to test Bob's honesty on the relevant batch, the probability of the
$i$th qubit passing the test is at most $1 - \delta^2$.  To see this,
note that if (\ref{zer}) holds, the probability of passing the test is
maximized if the $i$th state is
\begin{equation} 
( p_{\rm max} + \delta)^{\frac{1}{2}} \ket{00} + ( 1 - p_{\rm max} -
\delta )^{\frac{1}{2}} \ket{11} \, ,
\end{equation}
and $B_1$ declares that the $i$th state is $\ket{\psi_0}$.  The
probability is then
\begin{equation}
\left(( p_{\rm max} ( p_{\rm max} + \delta ))^{\frac{1}{2}} + ( (
1-p_{\rm max}) (1-p_{\rm max} -\delta ))^{\frac{1}{2}}\right)^2 \leq
1-\delta^2\, .
\end{equation}

However, the probability of $A_1$'s measurement outcomes is
independent of $B_2$'s actions.  Hence this bound applies whether or
not $B_2$ actually implements a cheating strategy on the relevant
batch.  Thus there must be a probability of at least $\delta'
\delta^2$ of at least one member of the batch failing $A_1$'s tests.
Hence, for any given $\delta, \delta' > 0$, the probability that one
of the $\approx 2^M$ batches for which $z=1$ fails $A_1$'s tests can
be made arbitrarily close to $1$ by taking $M$ sufficiently large.

\subsection{Protocol VBCT3}
Protocol {\sc VBCT}2 can be improved by using bit commitment subprotocols to
keep Bob's choice of state secret until he is able to compare the
values of $z$ announced by $A_1$ and $A_2$.  This eliminates the
cheat-evident attack discussed in the last section, and defines a
protocol which we conjecture is unconditionally secure.  We use the
relativistic bit commitment protocol {\sc RBC}2 that is defined and reviewed
in \cite{Kent_relBC}.
\bigskip\newline
{\bf Protocol {\sc VBCT}3}

\begin{enumerate}
\item \label{1ststepz} $B_1$ creates $N$ states, each being either
$\ket{\psi_0}=\alpha_0\ket{00}+\beta_0\ket{11}$ or
$\ket{\psi_1}=\alpha_1\ket{00}+\beta_1\ket{11}$, with
$\{\alpha_0,\alpha_1,\beta_0,\beta_1\}\in\mathbb{R}^+$, and
$\alpha_i^2+\beta_i^2=1$.  The states are chosen with probability half
each.  $B_1$ and $B_2$ both know the identity of the $i$th state,
since $B_1$ uses the shared random string ${\bf y}$ to make his random
choices.  $B_1$ sends the second qubit of each state to $A_1$.  The
values of $\alpha_0,\beta_0,\alpha_1$ and $\beta_1$ are known to both
Alice and Bob.

\item Alice decides whether to test Bob's honesty, which she codes by
choosing the bit value $z=1$, or to trust him, coded by $z=0$.  She
selects $z=0$ with probability $2^{-M}$. $A_1$ and $A_2$
simultaneously inform $B_1$ and $B_2$ of the choice of $z$.

\item \label{Bsendsz} $B_1$ and $B_2$ broadcast the value of $z$ they
received to one another.

\item If $B_1$ received $z=1$ from $A_1$, he sends the first qubit of
each state to $A_1$, along with a classical bit identifying the state
as $\ket{\psi_0}$ or $\ket{\psi_1}$.  If $B_2$ received $z=1$ from
$A_2$, he sends $A_2$ a classical bit identifying the state as
$\ket{\psi_0}$ or $\ket{\psi_1}$.  These communications are sent
quickly enough that Alice is guaranteed that each of the $B_i$ sent
their transmission before knowing the value of $z$ sent to the other.
$A_2$ broadcasts the classical data to $A_1$ who tests that the
quantum states are those claimed in the classical communications by
carrying out the appropriate projective measurements.  If not, she
aborts.  If so, the protocol restarts at Step \ref{1ststepz}: $B_1$
creates a new set of $N$ states and proceeds as above.

\item If $z=0$, $A_2$ waits for time $\frac{D}{2}$ in the stationary
reference frame of $B_2$ before starting a series of relativistic bit
commitment subprotocols of type {\sc RBC}2 by sending the appropriate
communication (a list of suitably chosen random integers) to $B_2$.
$B_2$ verifies the delay interval was indeed $\frac{D}{2}$, to within
some tolerance.

\item $B_2$ continues the {\sc RBC}2 subprotocols by sending $A_2$
communications which commit Bob to the value of $i$ that defines the
state to use for the coin toss.

\item $B_1$ and $B_2$ then wait a further time $\frac{D}{2}$, by which
point they have received the signals sent in Step \ref{Bsendsz}.  They
then check that the $z$ values they received from the $A_i$ are the
same. If not, they abort the protocol.

\item $B_1$ and $B_2$ send communications to $A_1$ and $A_2$ which
unveil the value of $i$ to which they were committed, and hence reveal
the state chosen for the coin toss.  If the unveiling is invalid,
Alice aborts.

\item $A_1$ and $B_1$ measure their halves of the $i$th state in the
$\{\ket{0},\ket{1}\}$ basis to define the outcome of the coin toss.
\renewcommand{\labelenumi}{(\arabic{enumi}.}

\item \label{Bchecktwo} As an optional supplementary post coin toss
security test, $B_1$ asks $A_1$ to return her qubits from all states
other than the $i$th.  He then tests that the returned states are
those originally sent, by carrying out appropriate projective
measurements.  If the tests fail, he aborts the protocol.)
\end{enumerate}

\subsubsection{Security Against Alice}
In this modification of Protocol {\sc VBCT}2, there is no longer any
advantage to Alice in cheating by arranging that one of the $A_i$
sends $z=0$ and the other $z=1$.  Such an attack will be detected with
certainty, as is the case with Protocol {\sc VBCT}2.  Moreover, since Bob's
chosen value of $i$ is encrypted by a bit commitment, which is only
unveiled once the $B_i$ have checked that the values of $z$ they
received are identical, Alice gains no information about Bob's chosen
bias from the attack.  The bit commitment subprotocol {\sc RBC}2 is
unconditionally secure against Alice \cite{Kent_relBC}, since the
communications she receive are, from her perspective, uniformly
distributed random strings.

(As in the case of {\sc VBCT}2, technically speaking, Alice has another
possible attack: she can follow the protocol honestly up to Step
\ref{Bchecktwo} and then, once she learns Bob's chosen state,
intentionally try to fail Bob's tests by altering her halves of the
remaining states in some way.  By so doing, she can cause the protocol
to abort after the coin toss outcome is known.  Again, though, this
gives her no advantage.)

The protocol therefore presents Alice with no useful cheating attack.

\subsubsection{Security Against Bob}
Intuitively, one might expect the proof that {\sc VBCT}2 is secure against
Bob to carry over to a proof that {\sc VBCT}3 is similarly secure, for the
following reasons.  First, the only difference between the two
protocols is that Bob makes a commitment to the value of $i$ rather
than announcing it immediately.  Second, when the bit commitment
protocol {\sc RBC}2 is used, as here, just for a single round of
communications, it is provably unconditionally secure against general
(classical or quantum) attacks by Bob.

To make this argument rigorous, one would need to show that {\sc RBC}2 and
the other elements of {\sc VBCT}3 are securely {\it composable} in an
appropriate sense: i.e., that Bob has no collective quantum attack
which allows him to generate and manipulate collectively the data used
in the various steps of {\sc VBCT}3 in such a way as to cheat.  We
conjecture that this is indeed the case, but have no proof.

\subsection{Protocol VBCT4}
Classical communications and information processing are generally less
costly than their quantum counterparts, so much so that, in some
circumstances, it is reasonable to treat classical resources as
essentially cost free compared to quantum resources.  It is thus
interesting to note the existence of a classical relativistic protocol
for {\sc VBCT}, which is unconditionally secure against classical attacks,
and which we conjecture is unconditionally secure against quantum
attacks.  The protocol requires Alice and Bob each to have two
appropriately located agents, $A_1$, $A_2$ and $B_1$, $B_2$.
\bigskip\newline
{\bf Protocol {\sc VBCT}4}

\begin{enumerate}
\item Bob generates a $2M\times N$ matrix of bits such that each row
contains either $\alpha_0^2 N$ zero entries or $\alpha_1^2 N$ zero
entries, these being positioned randomly throughout the row.  The rows
are arranged in pairs, so that, for $m$ from $0$ to $(M-1)$, either
the $2m$th row contains $\alpha_0^2 N$ entries and the $(2m+1)$th
contains $\alpha_1^2 N$, or vice versa.  This choice is made randomly,
equiprobably and independently for each pair.  The matrix is known to
both $B_1$ and $B_2$ but kept secret from Alice.

\item Bob then commits each element of the matrix separately to Alice
using the classically secure relativistic bit commitment subprotocol
{\sc RBC}2 \cite{Kent_relBC}, initiated by communications between $A_2$ and
$B_2$.

\item $A_1$ then picks $M-1$ pairs at random.  She asks $B_1$ to
unveil Bob's commitment for all of the bits in these pairs of rows.

\item The {\sc RBC}2 commitments for the remaining bits are sustained while
$A_1$ and $A_2$ communicate to verify that each unveiling corresponds
to a valid commitment to either 0 or 1.  Alice also checks that each
unveiled pair contains one row with $\alpha_0^2 N$ zeros and one with
$\alpha_1^2 N$ zeros.  If Bob fails either set of tests, Alice aborts.

\item If Bob passes all of Alice's tests, $B_1$ picks the remaining
row corresponding to the bias he desires, and $A_2$ simultaneously
picks a random column.  They inform $A_1$ and $B_2$ respectively, thus
identifying a single matrix element belonging to the intersection.

\item Bob then unveils this bit, which is used as the outcome of the
coin toss.  The remaining commitments are never unveiled.
\end{enumerate}

\subsubsection{Security} 
The above protocol shows that, classically, bit commitment can be used
as a subprotocol to achieve {\sc VBCT}.  The proof that {\sc RBC}2 is
unconditionally secure against classical attacks \cite{Kent_relBC} can
be extended to show that Protocol {\sc VBCT}4 is similarly secure.  {\sc RBC}2 is
conjectured, but not proven, to be secure against general quantum
attacks.  We conjecture, but have no proof, that the same is true of
Protocol {\sc VBCT}4.

\section{Summary}
We have defined the task of variable bias coin tossing, illustrated
its use with a couple of applications, and presented four {\sc VBCT}
protocols.  The first, {\sc VBCT}1, allows {\sc VBCT} for a limited
range of biases, and is unconditionally secure against general quantum
attacks.  The second protocol, {\sc VBCT}2, is defined for any range
of biases and guarantees cheat-evident security against general
quantum attacks.  The third, {\sc VBCT}3, extends the second by using
a relativistic bit commitment subprotocol, and we conjecture that it
is unconditionally secure against general quantum attacks.

The fourth protocol, {\sc VBCT}4, is classical, and is based on multiple
uses of a classical relativistic bit commitment scheme which is proven
secure against classical attacks.  It can be shown to be
unconditionally secure against classical attacks.  The relevant
relativistic bit commitment scheme is conjectured secure against
quantum attacks, and we conjecture that this is also true of Protocol
{\sc VBCT}4.

Variable bias coin tossing is a simple example of a random one-input
two-sided secure computation.  The most general such computation is
what we have termed a variable bias $n$-faced die roll.  In this case,
there is a finite range of $n$ outputs, with each of Bob's inputs
leading to a different probability distribution over these outputs.
In other words, Bob is effectively allowed to choose one of a fixed
set of biased $n$-faced dice to generate the output, while Alice is
guaranteed that Bob's chosen die is restricted to the agreed set.

The protocols {\sc VBCT}2, {\sc VBCT}3 and {\sc VBCT}4 can easily be generalized to
protocols defining variable bias $n$-faced die rolls.  Thus, to adapt
protocols {\sc VBCT}2 and {\sc VBCT}3 to variable bias die rolling, we require Bob
to choose a series of states from the set $\{\ket{\psi_i}=
\sum_{j=0}^{n-1} \alpha^j_i \ket{jj} \}_{i=1}^r$, where $r$ is the
number of dice in the allowed set and where $(\alpha^j_i)^2$ defines
the probability of outcome $j$ for the $i$th dice (we take
$\{\alpha^j_i\}$ to be real and positive).  The protocols then proceed
similarly to those given above, defining protocols which we conjecture
to be cheat-evidently secure and unconditionally secure respectively.

To adapt Protocol {\sc VBCT}4, we require that the matrix rows contain
appropriate proportions of entries corresponding to the various
possible die roll outcomes.  We conjecture that this protocol is
unconditionally secure.

As we noted earlier, variable bias $n$-sided die rolling is the most
general one-input random two-sided two party single function
computation.  Our conjectures, if proven, would thus imply that all
such computations can be implemented with unconditional security.

\chapter{Secure Two-Party Classical Computation}
\label{Ch3}

\begin{quote}
{\it ``An essential element of freedom is the right to privacy, a
  right that cannot be expected to stand against an unremitting
  technological attack.''} -- Whitfield Diffie
\end{quote}

\section{Introduction}
Two wealthy and powerful businessmen wish to know who is the richest.
They are highly secretive about their bank balances, and do not wish
to disclose more information than that necessarily implied by the
outcome.  Does there exist a sequence of exchanges of (quantum)
information that implements this task?  This is an example of a secure
two-party computation.  In this chapter, we consider a range of such
computations and ask whether they can be implemented with
unconditional security.
\bigskip

A general introduction to secure two-party computation has been given
in Section \ref{sec_comp_intro}.  In this chapter, we continue to
focus on {\it single function computations}.  We will drop the
qualifier {\it single function}---all functions in this chapter can be
assumed to take this form unless otherwise stated.  The main focus is
on two-input functions for which the two-sided deterministic and the
one-sided and two-sided non-deterministic cases will each be discussed
separately (see Section \ref{sec_comp_intro} for definitions).  We
present a cheating attack that renders a large subset of these
functions insecure on the grounds that this attack allows one party to
gain more information about the other's input than is implied by the
outcome of the computation.

\section{Security Definitions In Secure Multi-Party Computation}
\label{sec_defs}
Phrasing security definitions for secure multi-party computations
requires some care.  It is not sufficient (but is necessary), for
example, to demand that the amount of information divulged to a
dishonest party in an implementation of a protocol be less than that
implied by the honest outcome, since the type of information may also
be important.  \comment{As an example, consider a protocol for which
an honest implementation gives Bob any three bits of a four bit string
input by Alice.  If Bob could deviate from the protocol in order to
ascertain the single bit corresponding to the parity of Alice's
string, we would consider the protocol insecure.}  In this section, we
discuss security definitions which sufficiently restrict both the
amount and type of information.  In essence, the idea is that a
protocol is secure if any information one party can get by deviating
from the protocol could have been derived from their output.

It may also be advantageous for one party to deviate from the protocol
in order to influence its outcome, in effect changing the computation
being performed.  A secure protocol must also protect against this
possibility.  Furthermore, we would like a security definition which
guarantees that when the protocol is used as a component of a larger
protocol, it remains secure.  The task of proving security of the
larger protocol can then be reduced to that of its sub-protocols,
together with an argument that the composition of such protocols
performs the desired task.

The universal security framework of Canetti \cite{Canetti}, and the
reactive simulatability framework of Backes, Pfitzmann and Waidner
\cite{PW,BPW} try to capture this idea in a classical context and
have recently been extended and used in quantum scenarios
\cite{Ben-OrMayers,Unruh,CGS}.  Following \cite{HMU}, we define the
following types of security.
\begin{definition} {\bf (Stand-alone security)}
For a proposed protocol, one gives an ideal behaviour.  One then
demands that for every attack against a real execution of the
protocol, there is an equivalent attack against the ideal, in the
following sense.  Suppose we have a black box implementing the ideal.
Then, for any attack on the real protocol, there must exist a {\it
simulator} which, when used in conjunction with the ideal protocol can
generate exactly the same view\footnote{The {\it view} of one party is
their complete set of quantum states and classical values.} as present
after the attack on the real execution.  If some part of the view is
probabilistic, the simulator must be able to generate a view whose
joint distribution with the computation's input is identical to that
of the real protocol.  Furthermore, there must exist a simulator that
can, in conjunction with a black box implementing the ideal, generate
any intermediate states present in the real execution if both parties
are honest.
\end{definition}
\begin{definition}{\bf (Universally composable security)}
The requirements of stand-alone security hold when the protocol is
used in any environment (i.e., as a subprotocol of any larger
protocol).
\end{definition}
The additional requirement for universal composability allows us to
replace the protocol by its ideal in any security analyses, and is
hence highly desirable.  However, such a requirement is rarely
achievable, and often one has to make do with stand-alone security.
The difficulty of satisfying universally composable security
definitions is highlighted in Section \ref{simulator_role}.

In order to prove security under either the stand-alone or universally
composable definitions, one needs to produce a suitable description
for the behaviour of an ideal protocol.  Such descriptions are often
given by invoking a trusted third party
({\sc TTP}\nomenclature[zTTP]{{\sc TTP}}{Trusted Third Party}).  While such
behaviours are called ``ideal'', they may not be ideal in the sense of
being the ultimate demands we might impose upon a protocol.  Such
demands often have to be weakened in order to find a set that are
feasible.  We give two ideals that might be used for computations
involving any number of parties, before specializing to the two-party
case.  We begin with ideals relevant to classical protocols.  Ideal
Behaviour \ref{ib1} represents a true ideal\footnote{The ideals we
give are phrased for general computations, but can easily be
specialized to the single-function case.}.


\begin{ib} \qquad
\label{ib1}
\begin{enumerate}
\item The {\sc TTP} obtains all of the data from all of the parties.
\item It extracts their correct input\footnote{For instance, in a
  computation, where one is supposed to input their bank balance, the
  correct input is the actual balance: an unscrupulous user may lie
  about their input.} from this data and performs the computation.
\item The {\sc TTP} returns to each party their individual outputs.
\end{enumerate}
\end{ib}
It is clear that such a behaviour places unduly strong requirements on
a protocol such that it could never be mimicked by a protocol in the
real world.  A party cannot even lie about their input in such a
model!  Instead, the following (weakened) model has been suggested
\cite{Goldreich} in order to capture some attacks that are impossible
to avoid.

\begin{ib}\qquad
\begin{enumerate}
\item The dishonest parties share their original inputs and decide on
  replaced inputs which they send to the {\sc TTP}.  The honest parties send
  their inputs.
\item The {\sc TTP} uses the inputs to determine the corresponding outputs,
  and sends them to the relevant parties.
\item The dishonest parties may collect their outputs of the {\sc TTP} and
  compute some function dependent on these and their initial inputs.
\end{enumerate}
\label{ib2}
\end{ib}
Let us emphasize two important points.  Firstly, cheating in a
protocol that satisfies the requirements of Ideal Behaviour \ref{ib2}
is only possible by make a replaced input.  The dishonest parties are
not allowed to coerce the {\sc TTP} into generating a different
functionality.  Secondly, in a real implementation of such a protocol,
each party will receive more than just their output.  In a secure
protocol, any additional data received must be of no use.  This is
captured in the security definition by the simulator.

For two-party protocols, it is known that such a behaviour cannot be
realized, and hence Ideal Behaviour \ref{ib2} is often only applied
for the case of honest majority \cite{Goldreich}.  The reason is that
one has to take into account each party's ability to abort within the
ideal behaviour.  In a real protocol, either party may abort, and, in
particular, they may do so at such a point where they have a knowledge
advantage over the other (except in the case of single output
computations, where one of the parties should never gain any
information).  This attack falls outside the scope of Ideal Behaviour
\ref{ib2}.
Goldreich \cite{Goldreich} introduces Ideal Behaviour \ref{ib3} which
tries to allow for this:

\begin{ib}\qquad

\label{ib3}
\begin{enumerate}
\item Each party sends its input to the {\sc TTP}.  A dishonest party
  may replace their input or send no input (abort).
\item {\sc TTP} determines the corresponding outputs and sends the first
  output to the first party.
\item The first party may (if dishonest) tell the {\sc TTP} to abort,
  otherwise it tells it to proceed.
\item If told to proceed, the {\sc TTP} gives the second output to the
  second party.  Otherwise it does nothing.
\end{enumerate}
\end{ib}

It is known that, assuming the existence of enhanced trapdoor
functions, protocols for any secure two-party computation can be
constructed that emulate Ideal Behaviour \ref{ib3} with computational
security \cite{Goldreich}.  When unconditional security is sought,
this ideal behaviour is suitable for a single-round protocol, or one
in which no information is given away until the last step (in which
case early abort is equivalent, in terms of the information gain of
both parties, to not going through with the protocol).  However, this
ideal behaviour neglects the possibility that either party may abort
{\it at any time}.  One could imagine protocols in which information
is built up gradually by each party in each round of communication, in
such a way that one party can only have a small amount more than the
other at any given time \cite{Lo}.  One might then invoke an instance
of Ideal Behaviour \ref{ib3} for each round of the protocol.  This
seems unduly cumbersome to build into a definition of a secure
computation.  An ideal whereby abort is allowed at any step is
desirable.  

We introduce the following ideal behaviour in order to capture this
(specializing now to the two-party case):

\newpage
\begin{ib}\qquad

\label{ib4}
\begin{enumerate}
\item Each party sends its input to the {\sc TTP}, along with a
  number, $a$, representing an additional function to compute at the
  end (if desired). (If both parties submit numbers, the lowest is
  taken.  Additionally, Alice can only submit even numbers, and Bob
  odd ones.)
\item The {\sc TTP} performs the correct function based on the inputs
  supplied to generate the correct outputs, $k_A$ and $k_B$.
\item The {\sc TTP} applies a further function to each of the outputs before
  sending $(a, f_a(k_A))$ to Alice and $(a, g_a(k_B))$ to Bob.
\end{enumerate}
\end{ib}

The additional function to be computed represents the output that
would be generated by a protocol which is aborted after step $a$.  The
behaviour has been phrased above in order to emphasize that the output
generated by early aborting gives no extra information and no other
type of information than that generated by following the protocol
honestly, in the sense that the correct final output can be used to
simulate any of the intermediate ones.

When extending such definitions to quantum protocols, there are a
number of additional considerations.  The ideal behaviour in a quantum
protocol may in many cases be weaker than its classical counterpart.
This comes about because:
\begin{enumerate}
\item A real protocol cannot mimic a {\sc TTP} that does measurements,
  since in the real implementation of a protocol, it is always
  possible to keep all measurements at the quantum level until the
  end. \footnote{Even though honest parties can be trusted to make
  measurements as the protocol progresses, it is equivalent when
  performing a security analysis to assume that they kept their
  measurements quantum until the end of the protocol, and hence we can
  restrict to protocols for which this is the case.}
\item A real protocol cannot perform classical certification of the
  inputs (i.e., cannot abort when a superposition is input instead of
  a single member of the computational basis) \cite{Kent_certif}.
\end{enumerate}
A classical protocol is able to circumvent such issues by implicitly
making the (technological) assumption that all parties can only
manipulate classical data.

Consider the following ideal behaviour for a quantum protocol
implementing a computation:

\begin{ib}\qquad

\label{ib5}
\begin{enumerate}
\item Both parties send their inputs to the {\sc TTP}.  If dishonest, the
  inputs may be quantum (i.e., superpositions) rather than members of
  an orthogonal basis.
\item The {\sc TTP} does a unitary operation on the inputs.  For
  example, in a two-sided deterministic computation the unitary might
  be defined by
  $U_f\ket{i}\ket{j}\ket{0}\ket{0}=\ket{i}\ket{j}\ket{k}\ket{k}$,
  where $f$ indexes the function being computed, $i$ is Alice's input,
  $j$ is Bob's input and $k=f(i,j)$ is the corresponding
  output\footnote{There are possible variants of the chosen unitary
  operation (see Section \ref{comp_model}).}.
\item The {\sc TTP} returns the first and third Hilbert spaces to Alice, and
  the second and fourth ones to Bob.
\end{enumerate}
\end{ib}
This is in fact stronger than we can achieve because it does not allow
for early abort.  Following arguments we presented in the classical
case, we should modify the steps to allow Alice to choose whether Bob
gets his output, and make further modifications to account for early
aborts, in the spirit of Ideal Behaviour \ref{ib4}.

Under Ideal Behaviour \ref{ib5}, cheating is restricted to making a
dishonest input, and to making an alternative measurement on the
output.  We will show that such cheating is enough to break any
reasonable requirements one might make for a large class of secure
classical computations.  Points 1 and 2 above ensure that one cannot
weaken the ideal behaviour such that this attack fails.  Hence quantum
protocols for these classes of secure classical computation do not
exist.

One special case is that of a one-input computation.  In the two-party
case, such a computation must be both random and two-sided (otherwise
it is trivial).  In Chapter \ref{Ch2}, we conjectured that such
computations (variable bias $n$-faced die rolls) are possible with
unconditional security.  Our definitions there were such that (if we
ignore the supplementary tests, which asymptotically were not
necessary for security) there are no useful ways to abort, and so the
behaviour realized is that of Ideal Behaviour \ref{ib2}.  (In Ideal
Behaviour \ref{ib2}, either party can force the outcome to be abort by
refusing to make an input to the {\sc TTP} in the first place.)

Our protocols for variable bias $n$-faced die rolling were
relativistic.  Exploiting relativistic signalling constraints does not
affect the type of behaviour realizable, in either quantum or
classical protocols.  Rather, using a relativistic protocol affects
the range of computations possible within each model.  For instance,
we cannot mimic a {\sc TTP} that performs coin tossing in a non-relativistic
world, but can in a relativistic one.  This is distinct from the types
of behaviour in which we embed the {\sc TTP}.

\subsection{The R\^ole Of The Simulator}
\label{simulator_role}
Let us demonstrate the importance of the simulator for universally
composable security definitions.  For this we will use the task of
extending coin tosses \cite{HMU}.  In such a task, Alice and Bob are
given access to a finite source of coin tosses, guaranteed to be
independent and uniformly distributed.  Their goal is to exchange
information and use this source in order to generate a shared random
string longer than that which is available from the source alone.

The protocol takes place in a classical environment in which Alice and
Bob are given access to the device supplying coin tosses.  This device
operates according to the following ideal.
\begin{idf}\qquad
\label{if1}
\begin{enumerate}
\item The {\sc TTP} waits until it has been initialized by both parties,
  after which, it generates a random string, $R$.
\item If Alice is dishonest, she can choose when the {\sc TTP} gives $R$
  to Bob, otherwise, the {\sc TTP} does so immediately.
\item Similarly, if Bob is dishonest, he can choose when the {\sc TTP} gives
  $R$ to Alice, otherwise, the {\sc TTP} does so immediately.
\end{enumerate}
\end{idf}
The following classical non-relativistic protocol is employed to
generate a shared random string longer than $R$, using a single call
of the above ideal at the appropriate time.  
\begin{protocol}\qquad
\label{ext_CT}
\begin{enumerate}
\item Alice sends a random string, $a$, to Bob.
\item \label{Bob_sends} Bob receives Alice's string and sends a random
string, $b$, to Alice.  Strings $a$ and $b$ have the same length.
\item Alice and Bob supply initiation signals to a device (device 1)
  that supplies perfect coin tosses.
\item Device 1 supplies $R$ to Alice and/or Bob in accordance with
  Ideal Functionality \ref{if1}, i.e., depending on whether either
  party is dishonest.
\item Alice and Bob use $R$ to perform privacy amplification on the
  concatenated string, $(a,b)$.  This generates a final string, $s$,
  that is (virtually) uniform and independent of $R$, $a$ and $b$.
  The final output of the protocol is the concatenation, $(R,s)$.
\end{enumerate}
\end{protocol}
Security of this protocol is discussed in \cite{HMU}.  It relies on
the fact that $R$ is not known to Alice and Bob until after they have
exchanged strings, and then follows from results on privacy
amplification (see Section \ref{priv_amp}).

\begin{figure}
\begin{center}
\includegraphics[width=\textwidth]{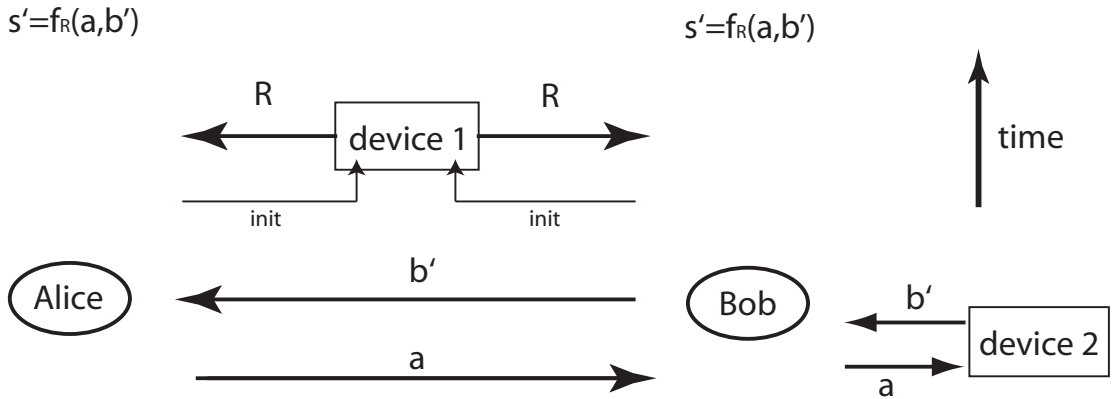}
\caption{The sequence of exchanges between Alice and Bob in the
  protocol for extending coin tosses (Protocol \ref{ext_CT}), where
  Bob interacts with a second device to choose his string.  Device 1
  is the supplier of perfect coin tosses, in the form of string $R$.
  In the original form of the protocol, device 2 is not used, and Bob
  sends a random string, $b$, of his own choosing to Alice.}
\label{fig:simulator2}
\end{center}
\end{figure}

We will show that this protocol is not sufficient to realize a
modification of Ideal Functionality \ref{if1}, where $R$ is replaced
by $(R,s)$.  This follows because there exists an interaction with a
system in the environment that Bob can follow in the real protocol,
but cannot implement in the ideal.

Consider an instance of the real protocol, and suppose Bob has access
to an additional device (device 2) with which he interacts only once.
He inputs $a$ into this device and it returns $b'$ to him, with $b'$
being a function of $a$ which he does not know.  Bob sends $b'$ to
Alice in place of $b$, after which the protocol proceeds with both
parties behaving honestly.  If he follows this strategy, the final
string $s'$ is distributed uniformly, regardless of the function
applied by the extra device.  Given an implementation of the ideal
protocol, which outputs $s_I$, it is easy for Bob to simulate $a$ and
$b$.  However, if Bob simulates $a$, and then inputs this into device
2, the string $b'$ he is returned will not necessarily be compatible
with the string returned by the protocol (i.e.\ $(R,f_R(a,b'))$ may
not equal $s_I$\footnote{More generally, the joint distributions over
all variables are different in the two cases.}).  It is impossible to
correctly simulate $b'$ and hence the protocol does not satisfy
universally composable security requirements.  The entire procedure is
shown in Figure \ref{fig:simulator2}.

While a cheating strategy of this kind is unlikely to present a
problem in any future application, it is possible that more
significant attacks exist.  The universally composable security
definition relieves us of such worries---if such a security definition
is satisfied, then one can replace all instances of the protocol with
the ideal without affecting security.

Unfortunately, it is rare that universally composable security can be
realised.  The type of attack given in this section is detrimental in
many contexts.  Protocols in which one party must respond to
information received by the other are particularly vulnerable in this
way.  One exception is the case of a classical protocol to give a
zero-knowledge proof for the graph non-isomorphism problem \cite{GMW},
which we discuss in Appendix \ref{AppD}.  The reason that this
protocol escapes the aforementioned attack is that one party (the
prover) always has the freedom to deterministically choose the output
of the protocol.

Relativistic protocols can provide a way to avoid this type of attack.
In a non-relativistic situation in which one party can pass
information they receive through an external device before responding,
it may be possible to instead use a relativistic protocol in which the
response is supplied by a distant agent of that party\footnote{This
will only work for protocols in which the response is supposed to be
independent of the received information.}.  As an example, suppose we
demanded that Step \ref{Bob_sends} of Protocol \ref{ext_CT} occurs at
spacelike separation to the point where Bob receives $a$ (which can be
done by having $B_2$ send it to $A_2$ in a relativistic protocol).
The attack involving the second device cannot then be implemented in
the real world and we do not need to provide a simulator for it in the
ideal case.

\comment{Nevertheless, there are many cases in which one is forced to make do
with stand-alone security definitions.  They are equivalent to
universally composable behaviour under the assumption that no
computations take place in parallel with the protocol being performed.
That is, while one protocol is being implemented, no party can send
any data to another protocol.}

\subsection{Computational Model}
\label{comp_model}
We will use a black box model for secure computation.  A black box is
a hypothetical device that satisfies a certain set of ideal
functionality requirements.  It features an authentication system
(e.g., an unalterable label) so that each party can be sure of the
function it computes.  We will give a security requirement, and show
that even if black boxes satisfying Ideal Behaviour \ref{ib5} were to
exist, this requirement cannot, in general, be satisfied.

We now comment on the possible forms of unitary operation that could
implement a particular computation.  In a two-sided, non-deterministic
computation, one seeks the functionality given by $U_f$, defined by
\begin{equation}
U_f\ket{i}_A\ket{j}_B\ket{0}\ket{0}=\ket{i}_A\ket{j}_B\sum_k\alpha^k_{i,j}\ket{kk}_{AB}.
\end{equation}

In practice, a computation might generate additional states, and one
should consider instead $U'_f$ defined by
\begin{equation}
U'_f\ket{i}_A\ket{j}_B\ket{0}\ket{0}\ket{0}=\ket{i}_A\ket{j}_B\sum_k\alpha^k_{i,j}\ket{kk}_{AB}\ket{\psi_{i,j}^k}_{AB},
\end{equation}
where the final Hilbert space corresponds to an ancillary system the
black box uses for the computation (and has arbitrary dimension).  In
the protocol mimicking such a box, this final state must be
distributed between Alice and Bob in some way, such that the part that
goes to Bob contains no information on Alice's input, and vice versa.

If this second unitary operation replaces that given in Ideal
Behaviour \ref{ib5}, then again each party has two ways of
cheating---inputing a superposition of honest states or using a
different measurement on the output.
We now show that under such attacks, insecurity of functions under
$U_f$ implies insecurity under $U'_f$, and hence we consider only the
former.

Consider the case where Alice makes a superposed input,
$\sum_ia_i\ket{i}$, rather than a single member of the computational
basis.  Then, at the end of the protocol, her reduced density matrix
takes either the form 
\begin{equation}
\label{sigj}
\sigma_j=\sum_{i,i',k}a_ia_{i'}^*\alpha_{i,j}^k(\alpha_{i',j}^k)^*\ketbra{i}{i'}\otimes\ketbra{k}{k},
\end{equation}
or
\begin{equation}
\sigma'_j=\sum_{i,i',k}a_ia_{i'}^*\alpha_{i,j}^k(\alpha_{i',j}^k)^*\ketbra{i}{i'}\otimes\ketbra{k}{k}\otimes{\text{tr}}_B\ketbra{\psi_{i,j}^k}{\psi_{i',j}^k},
\end{equation}
the first case applying to $U_f$, and the second to $U'_f$.

Alice is then to make a measurement on her state in order to
distinguish between the different possible inputs Bob could have made,
as best she could.  We will show that there exists a trace-preserving
quantum operation that Alice can use to convert $\sigma'_j$ to
$\sigma_j$ for all $j$.  Therefore Alice's ability to distinguish
between $\{\sigma'_j\}_j$ is at least as good as her ability to
distinguish between $\{\sigma_j\}_j$.

In order that the protocol functions correctly when both Alice and Bob
are honest, we require
$\text{tr}_B\ketbra{\psi_{i,j}^k}{\psi_{i,j}^k}\equiv\rho^{i,k}$ to be
independent of $j$ (otherwise Alice can gain more information on Bob's
input than that implied by $k$ by a suitable measurement on her part
of this state).  By expressing $\rho^{i,k}$ in its diagonal basis,
$\rho^{i,k}=\sum_m\lambda^{i,k}_mU_A^{i,k}\ketbra{m}{m}_A(U_A^{i,k})^{\dagger}$,
we have
\begin{equation}
\ket{\psi_{i,j}^k}=\sum_m\sqrt{\lambda_m^{i,k}}U_A^{i,k}\ket{m}_A\otimes U_B^{i,j,k}\ket{m}_B,
\end{equation}
where $\{\ket{m}_A\}_m$ form an orthogonal basis set on Alice's system
and likewise $\{\ket{m}_B\}_m$ is an orthogonal basis for Bob's system.
Bob then holds
\begin{equation}
\text{tr}_A\ketbra{\psi_{i,j}^k}{\psi_{i,j}^k}=\sum_m\lambda_m^{i,k}U_B^{i,j,k}\ketbra{m}{m}_B(U_B^{i,j,k})^{\dagger}.
\end{equation}
This must be independent of $i$, hence $\lambda_m^{i,k}$ and
$U_B^{i,j,k}$ must be independent of $i$.  Thus
\begin{equation}
\ket{\psi_{i,j}^k}=\sum_m\sqrt{\lambda_m^k}(U_A^{i,k}\otimes
U_B^{j,k})\ket{m}_A\ket{m}_B.
\end{equation}
It hence follows that there is a unitary on Alice's system converting
$\ket{\psi_{i_1,j}^k}$ to $\ket{\psi_{i_2,j}^k}$ for all $i_1$, $i_2$,
and that furthermore, this unitary is independent of $j$.
Likewise, there is a unitary on Bob's system converting
$\ket{\psi_{i,j_1}^k}$ to $\ket{\psi_{i,j_2}^k}$ for all $j_1$, $j_2$,
with this unitary being independent of $i$.

Returning now to the case where Alice makes a superposed input.  The
final state of the entire system can be written
\begin{equation}
\sum_{i,k}a_i\alpha_{i,j}^k\ket{i}_A\ket{j}_B\ket{k}_A\ket{k}_B(U_A^{i,k}\ket{m}_A)(U_B^{j,k}\ket{m}_B).
\end{equation}
Alice can then apply the unitary
\begin{equation}
V=\sum_{i,k}\ketbra{i}{i}_A\otimes\openone_B\otimes\ketbra{k}{k}_A\otimes\openone_B\otimes
(U_A^{i,k})^{\dagger}\otimes\openone_B
\end{equation}
to her systems leaving the state as
\begin{equation}
\sum_{i,k}a_i\alpha_{i,j}^k\ket{i}_A\ket{j}_B\ket{k}_A\ket{k}_B\sum_m\sqrt{\lambda_m^k}\ket{m}_A(U_B^{j,k}\ket{m}_B).
\end{equation}
Alice is thus in possession of density matrix
\begin{equation}
\sum_{i,i',k}a_ia_{i'}^*\alpha_{i,j}^k(\alpha_{i',j}^k)^*\ketbra{i}{i'}\otimes\ketbra{k}{k}\otimes\rho_A^k,
\end{equation}
where $\rho_A^k=\sum_m\lambda_m^k\ketbra{m}{m}_A$.  Hence, on tracing
out the final system, we are left with $\sigma_j$ as defined by
(\ref{sigj}).

We have hence shown that there is a ($j$-independent) trace-preserving
quantum operation Alice can perform which converts $\sigma'_j$ to
$\sigma_j$ for all $j$.  Hence Alice's ability to distinguish between
Bob's inputs after computations of the type $U'_f$ is at least as good
as her ability to distinguish Bob's inputs after computations of the
type $U_f$, and so, under the type of attack we consider, insecurity of
computations specified by $U_f$ implies insecurity of those specified
by $U'_f$.  We will therefore consider only type $U_f$ in our
analysis.  An analogous argument follows for the one-sided case, and
likewise for the deterministic cases (which are special cases of the
non-deterministic ones).

\bigskip
In this chapter we will show that the following security condition can
be broken for a large class of computation.
\begin{SC}
Consider the case where Bob is honest.  A secure computation is one
for which there is no input, together with a measurement on the
corresponding output that gives Alice a better probability of guessing
Bob's input than she would have gained by following the protocol
honestly and making her most informative input.  This condition must
hold for all forms of prior information Alice holds on Bob's input.
\end{SC}

\section{Deterministic Functions}
\label{det_fns}
We first focus on the deterministic case\footnote{We refer the reader
to Section \ref{sec_comp_intro} for descriptions of the various types
of function we consider.}.  Lo showed that two-input deterministic
one-sided computations are impossible to compute securely \cite{Lo},
hence only two-sided deterministic functions remain\footnote{Lo did
not consider relativistic cryptography, but his results apply to this
case as well (see the discussion in Section \ref{sec_comp_intro}).}.
Suppose now that the outcome of such a protocol leads to some
real-world consequence.  In the dating problem \cite{GottesmanLo}, for
example, one requires a secure computation of $k=i\times j$, where
$i,j\in\{0,1\}$.  If the computation returns $k=1$, then the protocol
dictates that Alice and Bob go on a date.  This additional real-world
consequence is impossible to enforce, although naturally, both Alice
and Bob have some incentive not to stand the other up, since this
results in a loss of the other's trust.  A cost function could be
introduced to quantify this.  Because suitable cost assignments must
be assessed case by case, it is difficult to develop general results.
To eliminate such an issue, we restrict to the case where the sole
purpose of the computation is to learn something about the input of
the other party.  No subsequent action of either party based on this
information will be specified.

We say that a function is {\it potentially concealing} if there is no
input by Alice which will reveal Bob's input with certainty, and
vice-versa.  If the aim of the computation is only to learn something
about the input of the other party, and if Bob's data is truly
private, he will not enter a secure computation with Alice if she can
learn his input with certainty.  We hence only consider potentially
concealing functions in what follows.  In addition, we will ignore
{\it degenerate} functions in which two different inputs are
indistinguishable in terms of the outcomes they afford.  If the sole
purpose of the computation is to learn something about the other
party's input, then, rather than compute a degenerate function, Alice
and Bob could instead compute the simpler function formed by combining
the degenerate inputs of the original.

An alternative way of thinking about such functions is that they
correspond to those in which there is cost for ignoring the real world
consequence implied by the computation.  At the other extreme, one
could invoke the presence of an enforcer who would compel each party
to go ahead with the computation's specified action.  This would have
no effect on security for a given function (a cheating attack that
works without an enforcer also works with one) but introduces a larger
set of functions that one might wish to compute.  There exist
functions within this larger set for which the attack we present does
not work.

We specify functions by giving a matrix of outcomes.  For convenience,
the outputs of the function are labelled with consecutive integers
starting with 0.  We consider functions that satisfy the following
conditions:
\begin{enumerate}
\item \label{cond1} (Potentially concealing requirement) Each row and
  each column must contain at least two elements that are the same.
\item \label{cond2} (Non degeneracy requirement) No two rows or
columns should be the same.
\end{enumerate}
For instance, if $i,j\in\{0,1,2\}$ (which we term a $3\times 3$
function), the function $f(i,j)=1-\delta_{ij}$ is represented by
\begin{center}
\begin{tabular}{cc|ccc|}
\multirow{2}{*}{$f(i,j)$}&&\multicolumn{3}{|c|}{$i$}\\
&\;&\;0\;&\;1\;&\;2\;\\
\hline
\multirow{3}{*}{$j$}&0\;&$0$&$1$&$1$\\
&1\;&$1$&$0$&$1$\\
&2\;&$1$&$1$&$0$\\
\hline
\end{tabular}
.
\end{center}
\smallskip This function is potentially concealing, and non-degenerate.

We consider the case of $3\times 3$ functions.  We first give a
non-constructive proof that Alice can always cheat, and then an
explicit cheating strategy.

Let us assume that we have a black box that can implement the
protocol, i.e., one that performs the following operation:
\begin{eqnarray}
U_f\ket{i}_A\ket{j}_B\ket{0}\ket{0}=\ket{i}_A\ket{j}_B\ket{f(i,j)}_A\ket{f(i,j)}_B.
\end{eqnarray}
The states $\{\ket{i}_A\}$ are mutually orthogonal, as are the members
of the sets $\{\ket{j}_B\}$, $\{\ket{f(i,j)}_A\}$ and
$\{\ket{f(i,j)}_B\}$.  This ensures that Alice and Bob always obtain
the correct output if both have been honest.  The existence of such a
black box would allow Alice to cheat in the following way.  She can
first input a superposition, $\sum_{i=0}^2a_i\ket{i}_A$ in place
of $\ket{i}_A$.  Her output from the box is one of
$\rho_0,\rho_1,\rho_2$, the subscript corresponding to Bob's input,
$j$, where (using the shorthand ${\rm tr}_B(\ket{\Psi})\equiv{\rm
tr}_B(\ketbra{\Psi}{\Psi})$)
\begin{eqnarray}
\rho_j\equiv{\rm tr}_B\left(U_f\sum_{i=0}^2a_i\ket{i}_A\ket{j}_B\ket{0}_A\ket{0}_B\right).
\end{eqnarray}
Alice can then attempt to distinguish between these using any
measurement of her choice.

The main result of this section is the following theorem.
\begin{theorem}
Consider the computation of a $3\times 3$ deterministic function
satisfying conditions \ref{cond1} and \ref{cond2}.  For each function
of this type, there exists a set of co-efficients, $\{a_i\}$,
such that when Alice inputs $\sum_{i=0}^2a_i\ket{i}_A$ into the
protocol, there exists a measurement that gives her a better
probability of distinguishing the three possible ($j$ dependent)
output states than that given by her best honest strategy.
\label{thm1}
\end{theorem}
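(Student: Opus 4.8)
The plan is to compute Alice's reduced output state for an arbitrary superposed input, pin down precisely what her best honest strategy achieves, and then exhibit an input together with a measurement for which the coherences between her input lines $\ket{0},\ket{1},\ket{2}$ reveal more about Bob's input $j$ than any honest (computational-basis) input can. First I would record the elementary computation: if Alice inputs $\sum_{i=0}^{2}a_i\ket{i}$ and Bob inputs $j$, then applying $U_f$ and tracing out Bob's two registers gives
\[
\rho_j=\sum_k\ket{\phi_{j,k}}\bra{\phi_{j,k}}\otimes\ket{k}\bra{k},\qquad \ket{\phi_{j,k}}\equiv\sum_{i:\,f(i,j)=k}a_i\ket{i},
\]
so $\rho_j$ is block diagonal in Alice's output register, the $k$-block being the (sub-normalised) rank-one projector onto $\ket{\phi_{j,k}}$. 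The decisive structural remark is that \emph{measuring this output register} just reproduces the randomised honest strategy: outcome $k$ occurs with probability $\sum_{i:f(i,j)=k}|a_i|^2$, exactly as if Alice had honestly input $\ket{i}$ with probability $|a_i|^2$ and read off $f(i,j)$, while after the measurement she still holds the residual state $\ket{\phi_{j,k}}/\|\phi_{j,k}\|$. Hence any superposed input is at least as good as the best randomised honest input, and the whole content of the theorem is that for a suitable $\{a_i\}$ this residual state is not worthless. I would also fix the honest bound: by condition~\ref{cond1} every row of the table has a repeated entry, so any honest input distinguishes at most a two-element coarsening of the three values of $j$, which against the uniform prior caps the honest guessing probability at $\tfrac23$; since that probability is convex in the induced classical channel, randomising over rows cannot do better, and by condition~\ref{cond2} the three rows cannot all be constant (that would make all three columns equal), so some non-constant row attains $\tfrac23$ and the honest bound is exactly $\tfrac23$. ``Beating honest'' thus means guessing $j$ with probability strictly above $\tfrac23$ against the uniform prior.

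For the non-constructive half I would argue by contradiction: if the box were secure then the optimal guessing probability $\Phi(\vec a)$ (uniform prior, best measurement) would be $\le\tfrac23$ for every unit input $\vec a$, and since $\Phi(e_{i^*})=\tfrac23$ for the best honest row $i^*$ it suffices to produce $\vec a$ with $\Phi(\vec a)>\tfrac23$. Take $i^*$ with its repeated pair in Bob-positions $\{0,1\}$, so an honest Alice using $i^*$ confuses $j=0$ with $j=1$; by condition~\ref{cond2} columns $0$ and $1$ differ, so the span of the remaining two rows contains a unit vector $\ket{\xi}$ for which $\rho_0$ and $\rho_1$ separate, and I would look along the family $\ket{\psi_t}=\sqrt{1-t}\,\ket{i^*}+\sqrt{t}\,\ket{\xi}$ with the measurement ``read the output register, then run the Helstrom measurement inside each sector'', writing $P(t)\equiv\Phi(\psi_t)$. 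The goal is $P(0)=\tfrac23$ and $P'(0^+)>0$ for an appropriate $\ket{\xi}$, which contradicts optimality at $t=0$. I expect this derivative estimate to be the crux: taking $\ket{\xi}$ equal to a single other row $\ket{i'}$ already works unless that row takes, in columns $0$ and $1$, two values both distinct from $f(i^*,0)$, in which case the first-order gain from that direction cancels and one must instead admix the third row, possibly with only an $O(\varepsilon)$ amplitude, the gain then appearing at order $\sqrt{\varepsilon}$. Because the $3\times3$ tables obeying \ref{cond1}--\ref{cond2} form a finite list up to relabelling rows, columns and outputs, checking that a beneficial $\ket{\xi}$ always exists is a bounded case analysis.

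For the explicit half I would, for each table type on that finite list, simply write down a winning input---typically the symmetric input $\tfrac1{\sqrt3}(\ket{0}+\ket{1}+\ket{2})$, or a two-line input $\tfrac1{\sqrt2}(\ket{i_1}+\ket{i_2})$ when one row is redundant---together with the POVM ``measure the output register, then inside each sector apply the Helstrom measurement, or for symmetric residual states the optimal square-root measurement'', and verify that the guessing probability exceeds $\tfrac23$. As the worked example I would take $f(i,j)=1-\delta_{ij}$: the input $\tfrac1{\sqrt2}(\ket{0}+\ket{1})$ produces $\rho_0=\tfrac12\ket{00}\bra{00}+\tfrac12\ket{11}\bra{11}$, $\rho_1=\tfrac12\ket{01}\bra{01}+\tfrac12\ket{10}\bra{10}$ and $\rho_2=\ketbra{+1}{+1}$ (input register $\otimes$ output register, with $\ket{+}=\tfrac1{\sqrt2}(\ket{0}+\ket{1})$), and even the crude measurement ``read the output register; if it is $0$, read the input register; if it is $1$, measure $\{\ket{+},\ket{-}\}$'' already guesses $j$ with probability $\tfrac34>\tfrac23$, with the symmetric input doing appreciably better still. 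In short, the arithmetic is routine; the genuinely delicate point is locating, for every admissible table, a superposition whose residual states beat the classical value $\tfrac23$, and that is where the case analysis earns its keep.
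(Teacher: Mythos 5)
Your overall skeleton matches the paper's: reduce to a finite list of canonical tables, observe that Alice's reduced states $\rho_j$ are block diagonal in the output register with the output-register measurement reproducing the (randomised) honest strategy, fix the uniform-prior honest benchmark at $\tfrac{2}{3}$, and then show that the residual coherences push the optimal guessing probability strictly above it; your worked example for $f(i,j)=1-\delta_{ij}$ is correct ($\tfrac{3}{4}>\tfrac{2}{3}$ already with the crude two-stage measurement). Where you genuinely diverge is in the engine of the non-constructive half. The paper fixes the single input $\tfrac{1}{\sqrt{2}}(\ket{0}+\ket{1})$ for the canonical form of Table \ref{tab1} and shows that the measurement equivalent to the honest strategy violates the necessary and sufficient optimality conditions of Theorem \ref{optimal_cond}: the resulting system (\ref{first})--(\ref{last}) is unsatisfiable for every admissible $(a,b)$, so the truly optimal measurement strictly beats honest. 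You instead vary the input, $\ket{\psi_t}=\sqrt{1-t}\,\ket{i^*}+\sqrt{t}\,\ket{\xi}$, and aim for $P(0)=\tfrac{2}{3}$ with $P'(0^+)>0$.

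That route is viable, but two cautions. First, the positivity of the gain for \emph{every} admissible table is exactly the content of the theorem, and you defer it to ``a bounded case analysis'' that you do not carry out; the paper's proof is complete precisely because it discharges this step concretely (the joint unsatisfiability of (\ref{first})--(\ref{last}) together with the constraints on $a,b$). Second, your derivative estimate is more delicate than the word ``routine'' suggests: at $t=0$ the three $\rho_j$ commute and the discrimination problem is degenerate (many tie-breaking measurements are optimal), while the perturbation enters the off-diagonal blocks only at order $\sqrt{t}$, so whether the value function gains at order $\sqrt{t}$, at order $t$, or not at all along a given direction $\ket{\xi}$ genuinely depends on the table --- you acknowledge this, but it is exactly where your argument could silently fail without the case check. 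For the explicit half you and the paper coincide: a symmetric input plus a square-root/sector-wise Helstrom measurement, verified case by case (the paper does this verification by computer \cite{mathematica_script}).
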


\begin{table}
\begin{center}
\begin{tabular}{cc|ccc|}
\multirow{2}{*}{$f(i,j)$}&&\multicolumn{3}{|c|}{$i$}\\
&\;&\;0\;&\;1\;&\;2\;\\
\hline
\multirow{3}{*}{$j$}&0\;&\;$0$\;&\;$a$\;&\;$.$\;\\
&1\;&$0$&$b$&$.$\\
&2\;&$1$&$b$&$.$\\
\hline
\end{tabular}
\caption{This function can be taken as the most general $3\times 3$
  function satisfying conditions 1 and 2, where $a\neq b$, and $a=0$
  or $b=0$ or $b=1$.  The dots represent unspecified (and not
  necessarily identical) entries consistent with the conditions.}
\label{tab1}
\end{center}
\end{table}
\begin{proof}
We will rely on the following lemma.
\begin{lemma} 
All $3\times 3$ functions satisfying conditions 1 and 2 can be put in
the form of the function in Table \ref{tab1}.
\end{lemma}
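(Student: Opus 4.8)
The plan is to use the three kinds of relabeling available to us --- permuting the rows of the table (relabeling Bob's input $j$), permuting the columns (relabeling Alice's input $i$), and applying a bijection to the set of output values --- each of which preserves conditions \ref{cond1} and \ref{cond2} and leaves the security question unchanged. It therefore suffices to show that every $3\times 3$ table satisfying \ref{cond1} and \ref{cond2} can be transformed by such moves into the shape of Table \ref{tab1}.

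First I would record two structural facts forced by the conditions. A non-constant column of three entries has, by \ref{cond1}, exactly one repeated value, occupying a two-element ``repeat set'' of rows. At least two columns are non-constant: if at most one were, then two distinct rows would agree in two of their three coordinates and so, by \ref{cond2}, disagree in the third, forcing that third column to have three distinct entries and violating \ref{cond1}. Finally, two non-constant columns whose repeat sets coincide would have their two common rows agreeing in both coordinates, so by \ref{cond2} those rows differ in the third column, which is then non-constant with a different repeat set. Hence one can always select two non-constant columns whose repeat sets are distinct $2$-subsets of $\{0,1,2\}$, and so overlap in exactly one row.

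Next I would normalize. Pick two such columns, with common row $r$; relabel the rows so that, calling these columns $i=0$ and $i=1$, column $i=0$ has its repeat in rows $\{0,1\}$, column $i=1$ has its repeat in rows $\{1,2\}$, and $r=1$; this is possible precisely because the two repeat sets are distinct. Then relabel the outputs so the repeated value of column $i=0$ is $0$ and its distinct value is $1$. Column $i=0$ is now $(0,0,1)^{\top}$ and column $i=1$ is $(a,b,b)^{\top}$, with $a\neq b$ because column $i=1$ is non-constant; the remaining four entries are arbitrary subject only to \ref{cond1} and \ref{cond2}. This is exactly the shape of Table \ref{tab1} except for the side condition $a=0\lor b=0\lor b=1$.

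The main obstacle is establishing that side condition, which I expect to require a finite but somewhat intricate case analysis showing that the two columns above can always be chosen so that the repeated value of the second column is $0$ or $1$, or its distinct value is $0$. Applying \ref{cond1} to the overlap row $j=1$, whose entries are $0$, $b$ and $f(2,1)$, yields $b=0$ (done) or a value coincidence $f(2,1)\in\{0,b\}$ involving the third column. In the residual cases I would split on whether the third column is constant. If it is, with constant value $c$, then \ref{cond1} on row $j=1$ gives $c=0$ or $b=c$; in the former case \ref{cond1} on row $j=2$ forces $b\in\{0,1\}$, and in the latter \ref{cond1} on row $j=0$ forces $a=0$. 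If the third column is non-constant, all three columns carry repeat sets, and inspecting \ref{cond1} on the overlap row of each admissible pair --- together with \ref{cond2} to rule out a repeated column --- lets one re-designate which two columns play the roles of $i=0$ and $i=1$, and re-relabel the outputs accordingly, so that the side condition holds. The care needed is exhaustiveness: one must run separately through the configurations where the three repeat sets are all distinct and where two of them coincide, and within each track the positions of the coincidences that \ref{cond1} forces. Once this bookkeeping is complete the table is in the form of Table \ref{tab1}, and the lemma follows.
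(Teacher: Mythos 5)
Your proposal is correct and follows the same route as the paper's proof: use row, column and output relabellings to normalise two columns into the shape of Table \ref{tab1}, with the residual difficulty being the side condition $a=0$ or $b=0$ or $b=1$. Your structural preliminaries (at least two non-constant columns; two non-constant columns with coinciding repeat sets force the third column to be non-constant with a different repeat set) are justified more carefully than in the paper, which reaches the same normal form by observing that at most one column can be constant and swapping columns when the first two share a repeat set. The one step you defer --- the case analysis establishing the side condition when the third column is non-constant --- is exactly the step the paper dismisses with the single sentence ``Relabellings always put such cases into forms with $a=0$ or $b=0$ or $b=1$'', and your outline does close, more quickly than you fear. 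Indeed, suppose $a\neq 0$ and $b\notin\{0,1\}$, and write the third column as $(x,y,z)$ down the rows; condition \ref{cond1} applied to the three rows forces $x\in\{0,a\}$, $y\in\{0,b\}$ and $z\in\{1,b\}$, and no common value of $x,y,z$ is consistent with $a\neq b$, so the third column is automatically non-constant. Requiring two of $x,y,z$ to be equal, together with condition \ref{cond2} (the third column may not duplicate either of the first two), leaves only the shapes $(0,0,b)$, $(0,b,b)$, and, when $a=1$, $(1,0,1)$ or $(1,b,1)$. In each of these a re-designation of which two columns play the roles of $i=0$ and $i=1$, followed by the corresponding row and output relabelling, yields $b=1$, $a=0$, $a=0$ and $b=1$ respectively, completing the bookkeeping you anticipate.
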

\begin{proof}
The essential properties of any function are unchanged under
permutations of rows or columns (which correspond to relabelling of
inputs), and under relabelling of outputs.  In order that the function
is potentially concealing, there can be at most one column whose
elements are identical.  By relabelling the columns if necessary, we
can ensure that this corresponds to $i=2$.  Relabelling the outputs
and rows, if necessary, the column corresponding to $i=0$ has entries
$(f(0,0),f(0,1),f(0,2))=(0,0,1)$.  The column corresponding to $i=1$
then must have entries $(a,a,b)$ or $(a,b,b)$, with $a\neq b$.  In the
case $(a,a,b)$, the $i=2$ column must have the form $(c,d,d)$, for
$c\neq d$, in which case we can permute the $i=1$ and $i=2$ columns to
recover the form $(a,b,b)$ for the $i=1$ column.  Relabellings always
put such cases into forms with $a=0$ or $b=0$ or $b=1$.
\end{proof}

Suppose Alice inputs $\frac{1}{\sqrt{2}}\left(\ket{0}+\ket{1}\right)$
into a function of the form given in Table \ref{tab1}.  After tracing
out Bob's systems, Alice holds one of
\begin{eqnarray}
\rho_0&=&\frac{1}{2}\left(\ketbra{00}{00}+\delta_{a,0}\left(\ketbra{00}{10}+\ketbra{10}{00}\right)+\ketbra{1a}{1a}\right)\\
\rho_1&=&\frac{1}{2}\left(\ketbra{00}{00}+\delta_{b,0}\left(\ketbra{00}{10}+\ketbra{10}{00}\right)+\ketbra{1b}{1b}\right)\\
\rho_2&=&\frac{1}{2}\left(\ketbra{01}{01}+\delta_{b,1}\left(\ketbra{01}{11}+\ketbra{11}{01}\right)+\ketbra{1b}{1b}\right).
\end{eqnarray}
Measurement using the set $\{E_{i,k}=\ketbra{ik}{ik}\}$ in effect
reverts to an honest strategy.  The probability of correctly guessing
Bob's input using this set is the same as that for Alice's best honest
strategy.  These operators can be combined to form just three
operators, $\{E_{j'}\}$ such that a result corresponding to $E_{j'}$
means that Alice's best guess of Bob's input is $j'$.  Then
\begin{eqnarray}
\nonumber E_0&=&\alpha_1\ketbra{00}{00}+\delta_{a,0}\ketbra{10}{10}+\delta_{a,1}\ketbra{11}{11}+\delta_{a,2}\ketbra{12}{12}+\\\label{E0}&&\delta_{a,3}\ketbra{13}{13}\\
E_1&=&(1-\alpha_1)\ketbra{00}{00}+\alpha_2\delta_{b,0}\ketbra{10}{10}+\alpha_3\delta_{b,1}\ketbra{11}{11}+\nonumber\\\label{E1}&&\alpha_4\delta_{b,2}\ketbra{12}{12}+\alpha_5\delta_{b,3}\ketbra{13}{13}\\
\label{E2}E_2&=&\openone-E_0-E_1,
\end{eqnarray}
where the $\{\alpha_l\}$ are arbitrary parameters, $0\leq\alpha_l\leq
1$, and do not affect the success probability.  We will show that such
a measurement is not optimal to distinguish between the corresponding
$\{\rho_j\}$.  This follows from Theorem \ref{optimal_cond}.

Equations (\ref{con1}) and (\ref{con2}) imply respectively,
\begin{eqnarray}
\left(\alpha_1=0\quad{\rm or }\quad\alpha_2=0\quad{\rm or
}\quad b\neq 0\right)\quad{\rm and}\quad \left(\alpha_1=1\quad{\rm or }\quad
a\neq 0\right)\quad{\rm and}\nonumber\\\label{first} \quad \left(\alpha_1=1\quad{\rm or
}\quad\alpha_2=1\quad{\rm or }\quad b\neq 0\right)\quad{\rm and}\quad
\left(\alpha_3=0\quad{\rm or }\quad b\neq 1\right),
\end{eqnarray}
and,
\begin{eqnarray}
\nonumber&\left(\alpha_1=0\quad{\rm or }\quad(b\neq
0\quad{\rm and}\quad a\neq 0)\right)\quad{\rm and}\quad
\left(b=1\quad{\rm or }\quad\alpha_3\geq\frac{1}{4}\right)\quad{\rm
and}\\\nonumber&\quad 
\left(a=1\quad{\rm or }\quad\alpha_3=1\quad{\rm or }\quad b\neq
1\right)\quad{\rm and}\quad\left(b=0\quad{\rm or }\quad\alpha_2(1-\alpha_1)\geq\frac{1}{4}\right)\quad{\rm and}\\
&\left(\alpha_1=1\quad{\rm or }\quad
b\neq 0\quad{\rm or }\quad\alpha_2=0\right).
\end{eqnarray}
In addition, because the function is in the form given in Table
\ref{tab1}, we also have
\begin{eqnarray}
\label{last}\left(a=0\quad{\rm or }\quad b=0\quad{\rm or }\quad
b=1\right)\quad{\rm and}\quad a\neq b.
\end{eqnarray}
The system of equations (\ref{first}--\ref{last}) cannot be satisfied
for any values of $a,b,\{\alpha_k\}$.  Hence, the measurement
operators (\ref{E0}--\ref{E2}) are not optimal for discriminating
between Bob's inputs, so Alice always has a cheating strategy.
\end{proof}

Our proof of Theorem \ref{thm1} is non-constructive---we have shown
that cheating is possible, but not explicitly how it can be done.
Except in special cases (e.g., where the states $\{\rho_j\}$ are
symmetric), no procedure for finding the optimal {\sc POVM} to
distinguish between states is known \cite{Chefles,JRF}.  Nevertheless,
we have found a construction based on the square root measurement
\cite{HJW,HW} that, while not being optimal, gives a higher
probability of successfully guessing Bob's input than any honest
strategy.

The strategy applies to the states, $\sigma_j$, formed when Alice
inputs the state
$\frac{1}{\sqrt{3}}\left(\ket{0}+\ket{1}+\ket{2}\right)$ into the
computation.  The set of operators are those corresponding to the
square root measurement, defined by
\begin{eqnarray}
E_{j'}=\left(\sum_j\sigma_j\right)^{-\frac{1}{2}}\sigma_{j'}\left(\sum_j\sigma_j\right)^{-\frac{1}{2}}.
\end{eqnarray}
One can verify, case by case, that this strategy affords Alice a
better guessing probability over Bob's input than any honest one for
all functions of the form of Table \ref{tab1}.  The Mathematica script
which we have used to confirm this is available on the world wide web
\cite{mathematica_script}.

\section{Non-Deterministic Functions}
\subsection{Two-Sided Case}
\begin{table}
\begin{center}
\begin{tabular}{cc|cc|}
\multirow{2}{*}{$p(0|i,j)$}&&\multicolumn{2}{|c|}{$i$}\\
&&\;0\;&\;1\;\\
\hline
\multirow{2}{*}{$j$}&0\;&\;$p_{00}$\;&\;$p_{10}$\;\\
&1\;&\;$p_{01}$\;&\;$p_{11}$\;\\
\hline
\end{tabular}
\caption{The entries in the table give the probabilities of output 0
  given inputs $i,j$.  For example, if both parties input 0, then the
  output of the function is 0 with probability $p_{00}$, and 1 with
  probability $1-p_{00}$.}
\label{function1}
\end{center}
\end{table}

Initially, we specialize to the case $i,j,k\in\{0,1\}$.  We specify
such functions via a matrix of probabilities whose meaning is given in
Table \ref{function1}.  For the two-sided case, the relevant black box
implements the unitary, $U$, given by
\begin{eqnarray}
\label{U}U\ket{i}_A\ket{j}_B\ket{0}\ket{0}=\ket{i}_A\ket{j}_B\left(\sqrt{p_{ij}}\ket{00}_{AB}+\sqrt{1-p_{ij}}\ket{11}_{AB}\right).
\end{eqnarray}
Suppose that Alice has prior information about Bob's input such that,
from her perspective, he will input $0$ with probability $\eta_0$, and
$1$ with probability $\eta_1=1-\eta_0$.  The probability of correctly
guessing Bob's input using the best honest strategy is
\begin{equation}
\label{ph}
p_h=\max_i\left(\max_j\left(p_{ij}\eta_j\right)+\max_j\left((1-p_{ij})\eta_j\right)\right).
\end{equation}
Denote Alice's final state by $\rho_j$, where $j$ is Bob's input.  The
optimal strategy to distinguish $\rho_0$ and $\rho_1$ is successful
with probability
\begin{eqnarray}
\label{pc}
p_c=\frac{1}{2}\left(1+{\rm tr}\left|\eta_0\rho_0-\eta_1\rho_1\right|\right)
\end{eqnarray}
(cf.\ Theorem \ref{optimal_cond}).
\begin{theorem}
Let Alice input $\frac{1}{\sqrt{2}}\left(\ket{0}+\ket{1}\right)$ and
Bob input $j$ into the computation given in (\ref{U}).  Let Alice
implement the optimal measurement to distinguish the corresponding
$\rho_0$ and $\rho_1$ and call the probability of a correct guess
using this measurement $p_c$. Then, for all
$\{p_{00},p_{01},p_{10},p_{11}\}$, there exists a value of $\eta_0$ such
that $p_c>p_h$, unless, 
\begin{enumerate}
\item $p_{00}=p_{10}$ and $p_{01}=p_{11}$, or 
\item $p_{00}=p_{01}$ and $p_{10}=p_{11}$. 
\end{enumerate}
\label{thm3}
\end{theorem}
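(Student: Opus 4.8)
The plan is to reduce the statement to a comparison of the two functions $\eta_0\mapsto p_c(\eta_0)$ and $\eta_0\mapsto p_h(\eta_0)$ on the interval $[0,1]$, using only their behaviour near an endpoint. First I would compute Alice's reduced state explicitly: applying $U$ of (\ref{U}) to $\tfrac{1}{\sqrt2}(\ket{0}+\ket{1})_A\ket{j}_B\ket{0}\ket{0}$ and tracing out Bob's input register together with his half of the output gives $\rho_j=\tfrac12\big(\ketbra{a_j}{a_j}\otimes\ketbra{0}{0}+\ketbra{b_j}{b_j}\otimes\ketbra{1}{1}\big)$, where the second tensor factor is Alice's output qubit and $\ket{a_j}=\sqrt{p_{0j}}\ket{0}+\sqrt{p_{1j}}\ket{1}$, $\ket{b_j}=\sqrt{1-p_{0j}}\ket{0}+\sqrt{1-p_{1j}}\ket{1}$ are (non-normalised) vectors of Alice's input register, with $\braket{a_j}{a_j}=p_{0j}+p_{1j}$ and $\braket{b_j}{b_j}=2-p_{0j}-p_{1j}$. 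The crucial structural point is that $\rho_0$ and $\rho_1$ are simultaneously block diagonal with respect to Alice's output qubit, so $\operatorname{tr}|\eta_0\rho_0-\eta_1\rho_1|$ splits as the sum of the trace norms of two $2\times2$ blocks. Using (\ref{pc}) one then has $p_c(\eta_0)=\tfrac12(1+\operatorname{tr}|\eta_0\rho_0-\eta_1\rho_1|)$; since $\eta_0\rho_0-\eta_1\rho_1$ is affine in $\eta_0$ and the trace norm is a norm, $p_c$ is convex, and $p_c(0)=p_c(1)=1$. Likewise each $g_i(\eta_0)\equiv\max_j(p_{ij}\eta_j)+\max_j((1-p_{ij})\eta_j)$ is convex and piecewise affine with $g_i(0)=g_i(1)=1$, so by (\ref{ph}) $p_h=\max_ig_i$ is convex, affine on a neighbourhood of each endpoint, and equal to $1$ at $\eta_0\in\{0,1\}$.

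The heart of the argument is a one-sided derivative computation at $\eta_0=0$. Since $p_h$ is affine on some $[0,\epsilon]$, there $p_h(\eta_0)=1+p_h'(0^+)\eta_0$; a short case check shows $g_i'(0^+)=-1$ whenever $p_{i1}\in(0,1)$ (and $g_i'(0^+)\in[-1,0)$ in general), so $p_h'(0^+)=\max_ig_i'(0^+)=-1$ when all four $p_{ij}$ lie in $(0,1)$. For $p_c$, at $\eta_0=0$ we have $\eta_0\rho_0-\eta_1\rho_1=-\rho_1+\eta_0(\rho_0+\rho_1)$, which is negative semidefinite there; standard first-order perturbation theory (the eigenvalues supported on $\operatorname{supp}\rho_1$ stay negative and shift by $\eta_0\langle\cdot|\rho_0+\rho_1|\cdot\rangle$, while those in $\ker\rho_1$ open up as $\eta_0$ times the eigenvalues of $Q\rho_0Q$, with $P$ the projector onto $\operatorname{supp}\rho_1$ and $Q=\openone-P$) yields, after using $\operatorname{tr}(P\rho_1)=1$ and $\operatorname{tr}\rho_0=1$, the clean formula $\tfrac{d}{d\eta_0}\operatorname{tr}|\eta_0\rho_0-\eta_1\rho_1|\big|_{0}=-2\operatorname{tr}(P\rho_0)$, i.e.\ $p_c'(0^+)=-\operatorname{tr}(P\rho_0)$. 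Convexity then upgrades this to the global bound $p_c(\eta_0)\ge1-\operatorname{tr}(P\rho_0)\,\eta_0$ for every $\eta_0\in[0,1]$. Consequently, whenever $\operatorname{tr}(P\rho_0)<-p_h'(0^+)$, we get $p_c(\eta_0)>p_h(\eta_0)$ for all sufficiently small $\eta_0>0$, which is what the theorem requires for that $p$.

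It then remains to evaluate $\operatorname{tr}(P\rho_0)$ and to recognise the exceptional cases. When all $p_{ij}\in(0,1)$, $\operatorname{supp}\rho_1=\operatorname{span}(\ket{a_1})\otimes\ket{0}\ \oplus\ \operatorname{span}(\ket{b_1})\otimes\ket{1}$, so $\operatorname{tr}(P\rho_0)=\tfrac12\big(|\braket{\hat{a}_1}{a_0}|^2+|\braket{\hat{b}_1}{b_0}|^2\big)\le\tfrac12(\braket{a_0}{a_0}+\braket{b_0}{b_0})=1$ by Cauchy--Schwarz, with equality iff $\ket{a_0}\parallel\ket{a_1}$ and $\ket{b_0}\parallel\ket{b_1}$, i.e.\ iff $p_{00}p_{11}=p_{01}p_{10}$ and $(1-p_{00})(1-p_{11})=(1-p_{01})(1-p_{10})$. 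Combining these two relations forces $p_{00}+p_{11}=p_{01}+p_{10}$, so $\{p_{00},p_{11}\}=\{p_{01},p_{10}\}$ as unordered pairs, which is exactly case~1 ($p_{00}=p_{10},\,p_{01}=p_{11}$) or case~2 ($p_{00}=p_{01},\,p_{10}=p_{11}$). Hence for every $p$ with entries in $(0,1)$ that is not of these two forms, $\operatorname{tr}(P\rho_0)<1=-p_h'(0^+)$ and we are done. For the boundary cases, where some $p_{ij}\in\{0,1\}$, I would run the same analysis at whichever of $\eta_0=0$ or $\eta_0=1$ gives the favourable comparison ($p_c>p_h$ near $\eta_0=1$ iff $\operatorname{tr}(P'\rho_1)<p_h'(1^-)$, with $P'$ the projector onto $\operatorname{supp}\rho_0$), noting that the theorem is intended for potentially-concealing functions so that the degenerate possibility $p_h\equiv1$ cannot occur.

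I expect the main obstacle to lie precisely in this last step. Once $0$ and $1$ are admitted as entries, both $p_h'(0^+)$ and the support structure of the $\rho_j$ degenerate, and one has to verify, across the finitely many resulting shapes, that at least one of the two endpoints still yields a strict inequality outside cases~1 and~2; the perturbation-theoretic identity $p_c'(0^+)=-\operatorname{tr}(P\rho_0)$ also needs to be justified with care when $\rho_1$ has repeated eigenvalues. All of the conceptual content, however, sits in the interior computation: the attack works because the superposed input produces states whose output-blocks are rank-one projectors along the directions $\ket{a_j},\ket{b_j}$, and unless both pairs of directions coincide the optimal measurement can extract strictly more than the best honest input can.
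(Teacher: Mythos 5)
Your proposal is correct and reaches the paper's conclusion by a genuinely different route. The paper works at the opposite endpoint, setting $\eta_0=1-\epsilon$ so that $p_h=\eta_0$, writes out the four eigenvalues of the block-diagonal operator $\eta_0\rho_0-\eta_1\rho_1$ explicitly, and observes that ${\rm tr}\left|\eta_0\rho_0-\eta_1\rho_1\right|$ strictly exceeds $\eta_0-\eta_1$ unless the two discriminants $b(\{p_{ij}\})=4(\sqrt{p_{01}p_{10}}-\sqrt{p_{00}p_{11}})^2\eta_0\eta_1$ and $b(\{\overline{p_{ij}}\})$ both vanish; these vanishing conditions are exactly your parallelism conditions $\ket{a_0}\propto\ket{a_1}$ and $\ket{b_0}\propto\ket{b_1}$, and both arguments close with the same equal-sum, equal-product step forcing $\{p_{00},p_{11}\}=\{p_{01},p_{10}\}$. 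What your route buys is a cleaner identification of the exceptional set via Cauchy--Schwarz and a global quantitative bound: note that $p_c(\eta_0)\geq 1-\eta_0\,{\rm tr}(P\rho_0)$ follows exactly, with no perturbation theory at all, from ${\rm tr}|M|\geq{\rm tr}\left((Q-P)M\right)$ since $Q-P$ is a reflection and ${\rm tr}(P\rho_1)=1$ --- so the degenerate-eigenvalue worry you flag can be sidestepped entirely. What the paper's route buys is a self-contained $2\times 2$ computation with no appeal to support projectors. On the boundary cases $p_{ij}\in\{0,1\}$: your caution is warranted, but be aware the published proof shares the same gap --- its claim that $p_h=\eta_0$ for $\epsilon$ small fails when some $p_{i0}\in\{0,1\}$ (the honest strategy then beats the trivial guess), and for assignments such as $p_{00}=0$, $p_{01}=1$ the theorem as literally stated is false since $p_h\equiv 1$; both proofs implicitly restrict to the potentially concealing, interior situation. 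Also note that your endpoint estimate is only a lower bound on $p_c$, so at a boundary configuration it can fail to beat $p_h$ at one endpoint even when $p_c$ itself succeeds; the finite case check you anticipate is therefore genuinely needed there, but it lies beyond what the paper itself establishes.
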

The two exceptional cases correspond to functions for which only one
party can make a meaningful input.  We hence conclude that all
genuinely two-input functions of this type are impossible to compute
securely.
\begin{proof}
Take $\eta_0=1-\epsilon$. For sufficiently small $\epsilon>0$, (\ref{ph})
implies $p_h=\eta_0$.
We then seek $p_c$.  The eigenvalues of
$\eta_0\rho_0-\eta_1\rho_1$ are
\begin{eqnarray}
\lambda_{\pm}&=&\frac{1}{4}\left(a(\{p_{i,j}\})\pm\sqrt{a^2(\{p_{i,j}\})+b(\{p_{i,j}\})}\right)\\
\mu_{\pm}&=&\frac{1}{4}\left(a(\{\overline{p_{i,j}}\})\pm\sqrt{a^2(\{\overline{p_{i,j}}\})+b(\{\overline{p_{i,j}}\})}\right),
\end{eqnarray}
where $a(\{p_{i,j}\})=(p_{00}+p_{10})\eta_0-(p_{01}+p_{11})\eta_1$,
$b(\{p_{i,j}\})=4(\sqrt{p_{01}p_{10}}-\sqrt{p_{00}p_{11}})^2\eta_0\eta_1$,
and $\overline{p_{ij}}\equiv 1-p_{ij}$.

For $\epsilon$ sufficiently small, we have $a\gg b>0$.  Using
$\sqrt{1+x}\leq 1+\frac{x}{2}$, we find,
$\lambda_+\geq\frac{1}{4}(2a(\{p_{i,j}\})+\frac{b(\{p_{i,j}\})}{2a(\{p_{i,j}\})}),$ 
$\lambda_-\leq -\frac{b(\{p_{i,j}\})}{8a(\{p_{i,j}\})},$ 
$\mu_+\geq\frac{1}{4}(2a(\{\overline{p_{i,j}}\})+\frac{b(\{\overline{p_{i,j}}\})}{2a(\{\overline{p_{i,j}}\})}),$
and $\mu_-\leq
-\frac{b(\{\overline{p_{i,j}}\})}{8a(\{\overline{p_{i,j}}\})}$, with
equality iff $b(\{p_{i,j}\})=0$ and $b(\{\overline{p_{i,j}}\})=0$.  We
hence have
$\frac{1}{2}\left(1+\text{tr}|\eta_0\rho_0-\eta_1\rho_1|\right)\geq \eta_0$
and so $p_c\geq p_h$, with equality iff $p_{00}=p_{10}$ and
$p_{01}=p_{11}$, or $p_{00}=p_{01}$ and $p_{10}=p_{11}$.
\end{proof}
The explicit construction for the optimal cheating measurement is
given in Appendix \ref{AppA}.

\subsection{One-Sided Case}
For one-sided computations of non-deterministic functions, Alice can
cheat without inputing a superposed state.  In this case, the black
box performs the unitary
\begin{eqnarray}
U\ket{i}_A\ket{j}_B\ket{0}=\ket{i}_A\ket{j}_B\left(\sqrt{p_{ij}}\ket{0}_A+\sqrt{1-p_{ij}}\ket{1}_A\right),
\end{eqnarray}
where the last qubit goes to Alice at the end of the protocol.  The
following theorem shows that such computations cannot be securely
implemented.
\begin{theorem}
Having made an honest input to the black box above, Alice's optimum
procedure to correctly guess Bob's input is not given by a
measurement in the $\{\ket{0},\ket{1}\}$ basis, except if
$\{p_{ij}\}_{ij}\in\{0,1\}$ for all $i$, $j$.
\label{thm4}
\end{theorem}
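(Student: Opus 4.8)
The plan is to recognise the statement as a pure two-state discrimination problem and run the Helstrom $n=2$ analysis against it. Fix Alice's (honest) input $i$. Then the qubit the box returns to her is $\ket{\phi_{ij}}=\sqrt{p_{ij}}\ket0+\sqrt{1-p_{ij}}\ket1$, where $j\in\{0,1\}$ is Bob's input, so the task ``guess $j$'' is exactly that of discriminating the pure states $\rho_{ij}=\ketbra{\phi_{ij}}{\phi_{ij}}$ with whatever prior $(\eta_0,\eta_1)$ encodes Alice's information about Bob. Following the protocol honestly is the particular strategy of measuring in the $\{\ket0,\ket1\}$ basis; its success probability is $\max_j\eta_j p_{ij}+\max_j\eta_j(1-p_{ij})$. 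By the $n=2$ result of Helstrom et al.\ proved in Appendix \ref{AppA}, the \emph{optimal} success probability is $\tfrac12\bigl(1+{\rm tr}\,|\eta_0\rho_{i0}-\eta_1\rho_{i1}|\bigr)$, attained by the projective measurement onto the eigenvectors of $M_i:=\eta_0\rho_{i0}-\eta_1\rho_{i1}$ (equivalently one may invoke Theorem \ref{optimal_cond}).

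The heart of the argument is to decide when the $\{\ket0,\ket1\}$ measurement is this optimal measurement. Writing $\rho_{ij}$ in the computational basis, its off-diagonal entry is $\sqrt{p_{ij}(1-p_{ij})}$, so $M_i$ has off-diagonal entry $\eta_0\sqrt{p_{i0}(1-p_{i0})}-\eta_1\sqrt{p_{i1}(1-p_{i1})}$, and a short computation gives $\det M_i=-\eta_0\eta_1\bigl(\sqrt{p_{i0}(1-p_{i1})}-\sqrt{p_{i1}(1-p_{i0})}\bigr)^2$, which is strictly negative whenever $p_{i0}\neq p_{i1}$. Hence on an input where Bob's choice is meaningful, $M_i$ has one positive and one negative eigenvalue, distinct, so the optimal projective measurement is unique up to relabelling; it coincides with the $\{\ket0,\ket1\}$ measurement precisely when $\ket0,\ket1$ are eigenvectors of $M_i$, i.e.\ precisely when that off-diagonal entry vanishes. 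If $p_{i0},p_{i1}\in\{0,1\}$ both square roots are zero, so this holds for every prior --- the two states are $\ket0$ and $\ket1$ (or coincide) and the honest measurement is trivially optimal --- and this is exactly the exception named in the theorem. Otherwise at least one square root is nonzero, the vanishing condition pins $\eta_0$ to a single value in $(0,1)$ (or to none), and for every other prior the honest $\{\ket0,\ket1\}$ measurement is \emph{strictly} suboptimal.

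To finish I would feed this into the Security Condition, which asks that Alice gain nothing over honest play \emph{for all} prior information: honest Alice's benchmark is $\max_i$ of the $\{\ket0,\ket1\}$-value over her inputs, while a cheating Alice may use, on any input, the Helstrom measurement above, and the latter dominates the former input-by-input. Since it dominates strictly on a meaningful input for all but at most one prior, picking any other prior exhibits a cheating measurement that beats honest play, so the computation is insecure unless every $p_{ij}\in\{0,1\}$; and, exactly as for Theorem \ref{thm3}, the explicit form of this improved cheating measurement is the optimal discrimination measurement constructed in Appendix \ref{AppA}. Two points need the same care as in the deterministic case of Section \ref{det_fns}: one should discard degenerate functions (where $p_{i0}=p_{i1}$ for every $i$, so Bob's input is globally irrelevant and there is nothing to conceal) and non-potentially-concealing ones (where some input gives $\{p_{i0},p_{i1}\}=\{0,1\}$), and one must choose the prior so that the input that is \emph{honestly} optimal is itself a meaningful input, so that the strict gain is against $\max_i p_{\mathrm{comp}}(i)$ rather than merely against $p_{\mathrm{comp}}(i)$ for some suboptimal $i$. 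This tie-breaking between inputs is the only genuinely fiddly part; a short perturbative argument near $\eta_0=\tfrac12$, using that there $p_{\mathrm{comp}}(i)=\tfrac12\bigl(1+|p_{i0}-p_{i1}|\bigr)$, resolves it.
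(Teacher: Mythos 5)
Your proposal is correct and follows essentially the same route as the paper: the paper applies condition (\ref{con1}) of Theorem \ref{optimal_cond} to the operators $\{\ketbra{0}{0},\ketbra{1}{1}\}$, which yields exactly your off-diagonal vanishing condition $\eta_0\sqrt{p_{i0}(1-p_{i0})}=\eta_1\sqrt{p_{i1}(1-p_{i1})}$, and then observes that for this to hold for all priors both square roots must vanish, i.e.\ $p_{ij}\in\{0,1\}$. Your determinant computation establishing uniqueness of the optimal measurement, and the closing discussion of tie-breaking over Alice's inputs, are sound additions but go beyond the paper's shorter argument.
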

\begin{proof} 
From (\ref{con1}) of Theorem \ref{optimal_cond}, if Alice inputs
$i=1$, the measurement operators $\{\ketbra{0}{0},\ketbra{1}{1}\}$ are
optimal only if
\begin{equation}
\eta_0\sqrt{p_{10}(1-p_{10})}=(1-\eta_0)\sqrt{p_{11}(1-p_{11})}.
\end{equation}
For this to hold for all $\eta_0$, we require that either $p_{11}=0$ or
$p_{11}=1$, and either $p_{10}=0$ or $p_{10}=1$.  Similarly, if Alice
inputs $i=0$, we require either $p_{01}=0$ or $p_{01}=1$, and either
$p_{00}=0$ or $p_{00}=1$, in order that the specified measurement
operators are optimal.
\end{proof}
These exceptions correspond to functions that are deterministic, so do
not properly fall into the class presently being discussed.  Many are
essentially single-input, hence trivial, and all such exceptions are
either degenerate or not potentially concealing (see Section
\ref{det_fns}).

Our theorem also has the following consequence.
\begin{corollary}
One-sided variable bias coin tossing (see Chapter \ref{Ch2}) is
impossible.
\end{corollary}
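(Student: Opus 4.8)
The plan is to recognise one-sided variable bias coin tossing as an instance of the one-sided non-deterministic computations of Theorem \ref{thm4}, and then read off insecurity from that theorem together with the Security Condition. Recall that in one-sided VBCT only Bob supplies an input — a bias $p\in[p_{\rm min},p_{\rm max}]$, which an honest Bob sets to $p_{\rm max}$ or $p_{\rm min}$, say $j=0$ or $j=1$ — while Alice alone receives the outcome, which must be $0$ with probability $p_j$, and she must learn nothing about $j$ beyond what the outcome implies. Giving Alice a formal, single-valued input $i$, this is a function of the form of Table \ref{function1} with $p(0\mid i,j)=p_j$, independent of $i$.

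First I would reduce an arbitrary one-sided VBCT protocol to the black-box model of Section \ref{comp_model}. Keeping Alice's measurements quantum until the end, the honest final state of such a protocol has the ancilla-laden form $U'_f$; by the one-sided analogue of the reduction at the end of Section \ref{comp_model}, insecurity of the plain box $U$ with $p_{ij}=p_j$ implies insecurity of the protocol — the reduction needs only that the relevant marginal of the ancilla on Bob's side be independent of $j$, and if it is not the protocol already leaks extra information about Bob's input and is insecure. So it suffices to treat the plain black box $U\ket{i}_A\ket{j}_B\ket{0}=\ket{i}_A\ket{j}_B(\sqrt{p_j}\ket{0}_A+\sqrt{1-p_j}\ket{1}_A)$, with Alice's honest decoding being the $\{\ket{0},\ket{1}\}$ measurement. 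As with Lo's impossibility argument and its relativistic extension discussed in Section \ref{sec_comp_intro}, using a relativistic protocol with distributed agents does not change this, since the final state is still a $j$-dependent state shared among the parties.

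Now Theorem \ref{thm4} applies directly: after her honest input, Alice's optimal strategy for guessing $j$ is \emph{not} the $\{\ket{0},\ket{1}\}$ measurement unless $p_j\in\{0,1\}$ for every $j$. Since a genuine VBCT has $p_{\rm min}<p_{\rm max}$, the exception can hold only when $\{p_{\rm min},p_{\rm max}\}=\{0,1\}$ — the degenerate case in which the outcome already reveals Bob's bit, which is really a trivial deterministic one-input one-sided computation rather than a concealing variable bias coin toss in the sense of Chapter \ref{Ch2}. Outside that case some $p_j$ lies strictly in $(0,1)$, so Theorem \ref{thm4} supplies a POVM and a prior $\eta_0$ for which Alice guesses Bob's bias strictly better than any honest strategy, violating the Security Condition; hence one-sided VBCT is impossible. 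The only point requiring care is making the reduction to black-box form precise for relativistic protocols whose agents each hold part of the final state; beyond that the argument is immediate from Theorem \ref{thm4}.
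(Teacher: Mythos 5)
Your proposal is correct and follows essentially the same route as the paper: the paper's proof simply observes that a one-sided {\sc VBCT} is the special case $p_{00}=p_{10}$ and $p_{01}=p_{11}$ of the one-sided black box, and that (for a non-degenerate range of biases) this is not among the exceptions of Theorem \ref{thm4}, hence impossible. Your extra care over the degenerate case $\{p_{\rm min},p_{\rm max}\}=\{0,1\}$ and the reduction from $U'_f$ to $U_f$ merely fills in detail the paper leaves implicit.
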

\begin{proof}
A one-sided variable bias coin toss is the special case where both
$p_{00}=p_{10}$ and $p_{01}=p_{11}$.  These cases are not exceptions
of Theorem \ref{thm4}, and hence are impossible.
\end{proof}

\subsection{Example:  The Impossibility Of OT}
\label{OT_imposs}
Here we show explicitly how to attack a black box that performs {\sc
OT} when used honestly.  This is a second proof of its impossibility
in a stand-alone manner (the first being Rudolph's
\cite{Rudolph}). \footnote{Impossibility had previously been
argued on the grounds that {\sc OT} implies {\sc BC} and hence is
impossible because {\sc BC} is.  However, while this argument rules
out the possibility of a composable {\sc OT} protocol, a stand-alone
one is not excluded.}

The probability table for this task is given in Table \ref{tabOT}.
\begin{table}
\begin{center}
\begin{tabular}{cc|cc|}
\multirow{2}{*}{$p(k|i)$}&&\multicolumn{2}{|c|}{$i$}\\
&&\;0\;&\;1\;\\
\hline
\multirow{3}{*}{$k$}
&0\;&\;$\frac{1}{2}$\;&\;0\;\\
&1\;&\;0\;&\;$\frac{1}{2}$\;\\
&?\;&\;$\frac{1}{2}$\;&\;$\frac{1}{2}$\;\\
\hline
\end{tabular}
\caption{Probability table for oblivious transfer.}
\label{tabOT}
\end{center}
\end{table}

In an honest implementation of {\sc OT}, Bob is able to guess Alice's input
with probability $\frac{3}{4}$.  However, the final states after using
the ideal black box are of the form
$\ket{\psi_b}=\frac{1}{\sqrt{2}}\left(\ket{b}+\ket{?}\right)$, where
$\ket{0}$, $\ket{1}$ and $\ket{?}$ are mutually orthogonal.  These are
optimally distinguished using the {\sc POVM} $(E_0,\openone-E_0)$, where
\begin{equation}
E_0=\frac{1}{6}\left(\begin{array}{ccc}
2+\sqrt{3}&-1&1+\sqrt{3}\\
-1&2-\sqrt{3}&1-\sqrt{3}\\
1+\sqrt{3}&1-\sqrt{3}&2\\
\end{array}\right).
\end{equation}
This {\sc POVM} allows Bob to guess Alice's bit with probability
$\frac{1}{2}\left(1+\frac{\sqrt{3}}{2}\right)$, which is significantly
greater than $\frac{3}{4}$.

\section{Discussion} 
We have introduced a black box model of computation, and have given a
necessary condition for security.  Even if such black boxes were to
exist as prescribed by the model, one party can always break the
security condition.  Specifically, by inputing a superposed state
rather than a classical one, and performing an appropriate measurement
on the outcome state, one party can always gain more information on
the input of the other than that gained using an honest strategy.  In
the case of deterministic functions, this attack has only been shown
to work if the function is non-degenerate and potentially concealing.
In the case where the sole purpose of the function is to learn
something about the other party's input, this class of function is
the most natural to consider.

Our theorems deal only with the simplest cases of the relevant
functions.  However, the results can be extended to more general
functions as described below.

{\bf Larger input alphabets:} A deterministic function is impossible
to compute securely if it possesses a $3\times 3$ submatrix which is
potentially concealing and satisfies the degeneracy requirement.  This
follows because Alice's prior might be such that she can reduce Bob to
three possible values of $j$.  This argument does not rule out the
possibility of all larger functions, since some exist that are
potentially concealing without possessing a potentially concealing
$3\times 3$ subfunction.  Nevertheless, we conjecture that all
potentially concealing functions have a cheating attack which involves
inputing a superposition and then optimally measuring the outcome.

In the non-deterministic case, all functions with more possibilities
for $i$ and $j$ values possess $2\times 2$ submatrices that are ruled
out by the attacks presented, or reduce to functions that are
one-input.  Therefore, no two-party non-deterministic computations can
satisfy our security condition.

{\bf Larger output alphabets:} In the non-deterministic case, we
considered only binary outputs.  We conjecture that the attacks we
have presented work more generally on functions with a larger range of
possible outputs.


\bigskip
We have not proven that the aforementioned attacks work for any
function within the classes analysed, although we conjecture this to
be the case.  Furthermore, for any given computation, one can use the
methods presented in this chapter to verify its vulnerability under
such attacks.

Our results imply that there is no way to define an ideal suitable to
realise secure classical computations in a quantum relativistic
framework.  Hence, without making additional assumptions, or invoking
the presence of a trusted third party, secure classical computation is
impossible to realise using the usual notions of security.  The
quantum relativistic world, while offering more cryptographic power
than the classical world, as exemplified in Chapter \ref{Ch1}, still
does not permit a range of computational tasks.  Table \ref{fns2}
summarizes the known results for unconditionally secure two-party
computation.

\begin{sidewaystable}
\begin{tabular}{|l|l|c|c|l|}
\hline
\multicolumn{2}{|c|}{Type of computation}&Securely Implementable&Comment\\
\hline
Zero-input&Deterministic &$ \checkmark $  &Trivial\\
        &Random       one-sided         &$ \checkmark $  &Trivial\\
        &Random       two-sided         &\checkmark   &Biased $n$-faced die roll\\
\hline
One-input &Deterministic &$\checkmark $  &Trivial\\
        &Random       one-sided         &\ding{55}$^*$&One-sided
variable bias $n$-faced die roll\\
        &Random       two-sided         &$\checkmark^*$ &Variable bias
$n$-faced die roll\\
\hline
Two-input &Deterministic one-sided         &\ding{55}    &cf. Lo\\
        &Deterministic two-sided        &\ding{55}$^*$  &this chapter\\
        &Random       one-sided        &\ding{55}$^*$  &this chapter\\
        &Random       two-sided        &\ding{55}$^*$  &this chapter\\
\hline
\end{tabular}
\caption{Functions computable securely in two-party computations using
  (potentially) both quantum and relativistic protocols, when
  unconditional security is sought.  $\checkmark$ indicates that all
  functions of this type are possible, \ding{55} indicates that all
  functions of this type are impossible, $\checkmark^*$ indicates that
  a wide range of functions of this type are possible and conjectures
  made in Chapter \ref{Ch2} imply that all functions of this type are
  possible, and \ding{55}$^*$ indicates that a wide range of functions
  of this type are impossible and conjectures made in this Chapter
  imply that all functions of this type are impossible. This is the
  version of Table \ref{fns} updated in light of our work.}
\label{fns2}
\end{sidewaystable}

One reasonable form of additional assumption is that the storage power
of an adversary is bounded\comment{\footnote{In fact, it is believed
that there is a physical principle that bounds the information
capacity of a region in terms of its surface area (see
\cite{Bekenstein} for a recent review), although this bound is so
large that it is difficult to imagine it ever being of use.}}.  The
so-called bounded storage model has been used in both classical and
quantum settings.  This model evades our no-go results because
limiting the quantum storage power of an adversary forces them to make
measurements.  This collapses our unitary model of computation.  In
the classical bounded storage model, the adversary's memory size can
be at most quadratic in the memory size of the honest parties in order
to form secure protocols \cite{Cachin&,Ding&}.  However, if quantum
protocols are considered, and an adversary's quantum memory is
limited, a much wider separation is possible.  Protocols exist for
which the honest participants need no quantum memory, while the
adversary needs to store half of the qubits transmitted in the
protocol in order to cheat \cite{Damgard&3}.

In the recent literature, there have been investigations into the
cryptographic power afforded by theories that go beyond quantum
mechanics.  Such theories are often constrained to be non-signalling.
Popescu and Rohrlich investigated violations of the {\sc CHSH}
inequality (see Section \ref{Bell}) in non-signalling theories
\cite{PR}.  Such theories are able to obtain the maximum algebraic
value of the {\sc CHSH} quantity, 4.  The hypothetical device that
achieves such a violation has subsequently been called a non-local
box.  Devices of this kind would allow substantial reductions in the
communication complexity of distributed computing tasks \cite{vD} and
have been shown to allow any two-party secure computation
\cite{BCUWW}.  One might conclude that there is a further gap in
cryptographic power between non-signalling theories and quantum ones.
However, we argue that this is not justified for two reasons.
Firstly, in non-local box cryptography, one gives such boxes for free
to parties which need them.  Secondly, no procedure for doing joint,
or even alternative single measurements is prescribed to a non-local
box setting.  To make a fair comparison between non-local box
cryptography and standard quantum cryptography, one should consider a
quantum scenario in which separated parties are given shared singlets
for free, and also constrain them to make one of two measurements on
each state they hold.  Alternatively, one could find a new theory in
which non-local boxes emerge as features.  In the absence of such a
theory, one should be cautious about making comparisons.

Recently, it has been shown that any non-signalling box whose
correlations are non-separable is sufficient for bit commitment
\cite{WW}.  This includes the case where the correlations are quantum,
or indeed weaker.  Since quantum (non-relativistic) bit commitment is
impossible, even given access to shared {\sc EPR} pairs, the
additional cryptographic power cannot be attributed to the presence of
correlations above those that are possible using quantum mechanics
alone.  It remains an open question whether the same is true for {\sc
OT}.

\comment{We consider
it an open question as to whether nature provides us with a
cryptographically weaker physical theory than that which could be
afforded by the most general non-signalling theory.}

We further remark that the cheating strategy we present for the
non-deterministic case does not work for all assignments of Alice's
prior over Bob's inputs---there exist functions and values of the
prior for which it is impossible to cheat using the attack we have
presented.  This continues to be the case when we allow Alice to
choose any input state, including ones entangled with some space that
she keeps).  As a concrete example, consider the set
$(p_{00},p_{01},p_{10},p_{11})=(\frac{47}{150},\frac{103}{150},\frac{8}{9},\frac{5}{9})$,
with $\eta_0=\frac{1}{2}$ in the two sided version.  Hence, in practice,
there could be situations in which Bob would be happy to perform such
a computation, for example, if he was sure Alice had no prior
information over his inputs.



\chapter{Private Randomness Expansion Under Relaxed Cryptographic Assumptions}
\label{Ch4}

\begin{quote}
{\it ``The generation of random numbers is too important to be left to
  chance.''} -- Robert R. Coveyou
\end{quote}

\section{Introduction}
As a casino owner, Alice has a vested interest in random number
generation.  Her slot machines use pseudo-random numbers which she is
eager to do away with.  Alice has a sound command of quantum physics,
and realises a way to produce guaranteed randomness.  However, running
a casino is not easy, and Alice has neither the time nor resources to
construct the necessary quantum machinery herself.  Instead, her local
merchant, the shady Dr Snoop, offers to supply the necessary parts.
Naturally Alice is suspicious, and would like some way of ensuring
that Snoop's equipment really is providing her with a source of
private random bits.

\comment{Like all members of society, Alice has some important secrets.  From
time to time, she has need to send one of these to Bob.  Alice
possesses a sound command of physics, and understands the intricacies
of quantum key distribution but has neither the time nor resources to
construct the quantum machinery necessary to implement it.  Instead,
her local merchant, the shady Dr Snoop, offers to supply the necessary
parts.  The naturally cautious Alice wants to test the equipment
supplied in order to gain confidence in it before using it to send her
secret.}


\bigskip
Random numbers are important in a wide range of applications.  In
some, for example statistical sampling or computer simulations,
pseudo-randomness may be sufficient.  Psuedo-random sources satisfy
many tests for randomness, but are in fact deterministically generated
from a much shorter seed.  In applications such as gambling or
cryptography, this may be detrimental.  Since quantum measurements are
the only physical processes we know of that are random, it is natural
to construct random number generators based on these.  Devices which
generate randomness through quantum measurement have recently hit the
marketplace, but what guarantee does the consumer have that these
perform as claimed?  In this chapter, we investigate protocols that
guarantee private random number generation even when all the devices
used in the process come from an untrusted source.  This corresponds
to relaxing Assumption \ref{ass4} (see Section \ref{thesetting}), that
each party has complete knowledge of the operation of the devices they
use to implement a protocol.  We use the task of expanding a random
string, that is, using a given random string in some procedure in
order to generate a longer one\footnote{Note that this task involves
only one party trying to expand a random string in contrast to the
task of extending coin tosses discussed in Section
\ref{simulator_role}, where both Alice and Bob must generate the same
shared expansion.}, to illustrate that some cryptographic tasks are
possible even when this assumption is dropped.

Expansion of randomness comes in two flavours.  In the weakest form,
one simply wants to guarantee that the lengthened string really is
random and could not have been influenced by any outside source.  If
one also requires that no information on the lengthened string be
accessible to another party, then a stronger protocol is needed.  The
latter task, we refer to as {\em private} randomness expansion, and is
clearly sufficient for the former\footnote{A string formed by
measuring individual halves of singlets in some fixed basis is random,
but not secret, since the holder of the other half can discover the
random data.}.  The possession of guaranteed randomness is useful in
many contexts.  In a gambling scenario, for instance, several players
may learn the outcome of a random event (e.g., the spin of a roulette
wheel) but would be at a great advantage if they could influence
it. The {\sc BB}84 {\sc QKD} scheme on the other hand requires a
private random string to choose the bases to use.  Private randomness
expansion will be the focus of this chapter.

We give a protocol that uses an initial private random string,
together with devices supplied by an adversary, to expand this initial
string.  Our protocol is such that any specified amount of additional
randomness can be generated using a sufficiently long initial string.
Further, we give a second protocol which allows a (sufficiently long)
initial string to be expanded by any amount.  The length of initial
string required depends on the desired tolerance for successful
cheating by Snoop.  This second protocol has the undesirable feature
of requiring a large set of sites that cannot communicate with one
another.  Our protocols are not optimized for efficiency, and at
present do not have full security proofs.

\subsection{The Setting}
Let us now iterate the practical significance of dropping Assumption
\ref{ass4}.  Randomness expansion is a single party protocol.  We assume
that all quantum devices that the user, Alice, will use to perform the
protocol were sourced by Snoop\footnote{Since Alice herself is a
classical information processing device, it is unreasonable to ask
that Snoop created all classical devices.}.  Snoop will supply devices
that he claims function exactly as Alice prescribes\footnote{We assume
that Snoop can construct any device consistent with the laws of
physics, and that Alice does not ask for impossible devices.}.  The
devices cannot send communications outside of Alice's lab unless she
allows them to (cf.\ Assumption \ref{ass1}), and Alice can, if
necessary, prevent them from communicating with each another.

To become confident that the devices have not been tampered with,
Alice will perform some test on them.  In keeping with Kerckhoff's
principle \cite{Kerckhoff}, we assume that Snoop knows completely the
details of such tests.  If all of Alice's devices come from Snoop,
there is an immediate no-go result.  We idealize Alice's procedure for
testing the devices as a sequence of operations generating a member of
a set of possible outcomes.  Certain outcomes result in her rejecting
the devices, while others lead to their acceptance.
\begin{theorem}
If Alice follows a deterministic procedure, and sources all of her
devices from Snoop, then she cannot distinguish the case where Snoop's
devices implement the procedure as intended from the case where his
devices make pre-determined classical outputs.
\end{theorem}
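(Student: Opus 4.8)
The plan is to exploit the fact that, because Alice's procedure is deterministic, the \emph{only} source of randomness in the whole experiment is the devices themselves, and those are manufactured by Snoop. First I would fix a clean model of the interaction. Alice's test is an interleaved sequence of classical queries $q_1, q_2, \ldots$ issued to her devices and classical responses $r_1, r_2, \ldots$ received back, terminating in a classical verdict (accept or reject) and, on acceptance, an output string $s$. That the procedure is \emph{deterministic} means precisely that each query $q_t$ is a fixed function of the responses $(r_1, \ldots, r_{t-1})$ received so far, and that the verdict and $s$ are fixed functions of the full response transcript $\mathbf r = (r_1, \ldots, r_n)$; equivalently, Alice's entire observable view is a deterministic function $\Phi(\mathbf r)$. (Everything quantum lives in the devices sourced from Snoop; Alice herself processes only classical data. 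If Alice controls several mutually non-communicating devices, the same holds with $\mathbf r$ the joint transcript: each device's own query sequence is still a deterministic function of $\mathbf r$.) In the honest run the devices are the prescribed quantum apparatus, so $\mathbf r$ is distributed according to some law $P_{\mathrm{hon}}$ fixed by quantum mechanics together with Alice's transcript-dependent query schedule.

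Next I would exhibit Snoop's strategy. By Kerckhoff's principle Snoop knows $\Phi$ and the specification of the honest devices, so he can compute $P_{\mathrm{hon}}$. Before delivery he draws one sample $\mathbf r^\star = (r_1^\star, \ldots, r_n^\star)$ from $P_{\mathrm{hon}}$ using his own private randomness. Because the interaction is deterministic given the responses, $\mathbf r^\star$ also determines, offline, the query sequence each device will in fact receive; Snoop therefore builds each device as a purely classical, pre-wired lookup table that returns the appropriate entry of $\mathbf r^\star$ in response to the queries it is predicted to be asked (its behaviour off this predicted path is irrelevant, since Alice, being deterministic, never leaves the path). No runtime communication among the devices is required, since the whole schedule is arranged in advance. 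Such devices manifestly ``make pre-determined classical outputs.''

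Finally I would observe that with these devices the transcript Alice obtains is exactly $\mathbf r^\star$, which has law $P_{\mathrm{hon}}$; hence the distribution of her entire view $\Phi(\mathbf r)$ --- in particular the joint distribution of (accept/reject, output string) --- is identical to the honest run. Since any test Alice could apply is just some function of this view, no test separates the two cases, which is the claim. (As a corollary one reads off the advertised no-go phenomenon: in the malicious case Snoop holds $\mathbf r^\star$ and hence knows Alice's purportedly private output exactly, yet nothing in her data can betray this.) I do not expect a genuine obstacle: the proof is essentially the observation that a deterministic function of a random variable drawn from a fixed law has a fixed law, so replaying a pre-sampled transcript reproduces the honest distribution. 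The one point needing care is the bookkeeping for the multi-device, non-communicating case --- making precise that ``deterministic given the joint transcript'' lets Snoop predict, offline, every query every device will see, so that independent pre-wired classical tables suffice without any device ever learning another's queries or outputs.
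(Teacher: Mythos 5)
Your proposal is correct and follows essentially the same route as the paper's (much terser) proof: because Alice's procedure is deterministic, Snoop can predict every query offline and pre-wire purely classical devices that replay an accepting transcript. Your version is a mild strengthening --- by sampling the replayed transcript from the honest law rather than merely exhibiting one accepting transcript, you get exact equality of the distribution of Alice's entire view, not just guaranteed acceptance --- but the key idea is identical.
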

\begin{proof}
There exists a set of classical data that Alice will accept as a
passing of her test.  Snoop need simply provide devices that output
this set of data as required by Alice's procedure.
\end{proof}

To circumvent this no-go result, we give Alice an initial private
random string.  By using this string, she can ensure that Snoop does
not know every detail of her test procedure.  As we shall see, this
string is enough to constrain Snoop such that Alice can generate
random bits.  Since she needs an initial source of bits, this task is
randomness expansion.

\subsection{Using Non-Local Correlations}
\label{GHZ}
We have shown that without the use of an initial random string, Alice
cannot perform randomness expansion.  However, it is also the case
that without exploiting the non-local features of quantum mechanics,
she cannot either.  This is a corollary to the following theorem.
\begin{theorem}
If Alice sources all of her devices from Snoop and follows a local
procedure, then she cannot distinguish the case where Snoop's
devices implement the procedure as intended from the case where his
devices make classically generated outputs\footnote{A classically
generated output is one formed from the input without use of quantum
states or measurements.}.
\end{theorem}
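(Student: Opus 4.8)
The plan is to read ``Alice follows a local procedure'' in the sense made natural by the corollary that immediately follows it: the quantum operations Alice instructs her devices to carry out are such that, in the honest case, the family of input--output correlations they produce is \emph{local}, i.e.\ admits a local hidden variable model --- a classical random variable $\lambda$ with some law fixed in advance, together with, for each device $D_k$, a (deterministic, given $\lambda$) response function $g_k$, such that once $\lambda$ is fixed the output of $D_k$ is computed from $\lambda$ and from the input handed to $D_k$ alone, and the resulting joint distribution of all outputs matches the quantum prediction for every tuple of inputs. A procedure that \emph{does} exploit non-locality is by the same token one whose honest correlations admit no such model --- for instance one in which Alice uses spacelike separation to certify a violation of a Bell inequality (cf.\ Section \ref{Bell}); that non-locality in this sense is necessary for randomness expansion is precisely the asserted corollary. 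The strategy, then, is entirely parallel to the proof of the previous theorem, with the trivial hidden variable there replaced by a genuine one here.

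Given such a model, Snoop's alternative is to supply, in place of the $k$th quantum device, a classical device preloaded with a common sample of $\lambda$ --- all devices carrying the \emph{same} sample, which Snoop draws once when he builds the batch --- and programmed to return, on any input $s$, the value $g_k(\lambda, s)$. By the footnote's definition this is a classically generated output: it is formed from the input (and the classical data $\lambda$ already stored in the box) with no quantum state or measurement involved. I would then argue, by induction on the rounds of the protocol, that Alice's entire view --- the sequence of inputs she supplies, some drawn from her private seed $r$, some from earlier outputs, together with every output she receives --- has the same distribution in the honest quantum run and in Snoop's classical run. Both the base case and the inductive step reduce to the defining property of the hidden variable model, that for \emph{every} fixed input tuple the two constructions give the same output distribution; since $r$ is independent of $\lambda$ (and of the shared entangled state in the quantum run), conditioning on $r$ and on the realised $\lambda$ makes all later inputs deterministic functions of classical data already on hand, so the two runs march in step. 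Hence no statistic Alice can compute, adaptively or otherwise, distinguishes the two cases.

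The step that carries the weight is not a computation but the definitional one in the first paragraph: one must commit to the reading under which ``local procedure'' means ``honest correlations admit a local hidden variable model,'' and check that this is the right notion. It is: it is exactly what makes the stated corollary follow, and it degenerates correctly to the previous theorem, where ``deterministic'' corresponds to $\lambda$ trivial and Snoop simply hard-codes the accepting transcript. I would close with the observation that motivates the result: in the classical run Snoop keeps the sample of $\lambda$, and can therefore predict --- and could have been instructed to influence --- every bit Alice subsequently extracts, so a local procedure cannot deliver private, nor even guaranteed, randomness. This is the corollary, and it shows that exploiting the non-local features of quantum mechanics is unavoidable for the task.
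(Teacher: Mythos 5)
Your proof is correct, but it takes a different route from the thesis, and the difference is essentially one of how ``local procedure'' is read. The thesis reads it spatially: if no separation between devices is enforced, the whole apparatus collapses to a single box into which Alice feeds her random string, and Snoop's attack is simply to program that box with a lookup table giving an acceptable output for each possible input --- the hidden variable is trivial because there is only one device. You instead read ``local'' as a statement about the honest correlations (they admit a local hidden variable model), and you simulate them with a shared classical $\lambda$ distributed to possibly separated devices together with deterministic response functions, checking by induction over rounds that Alice's adaptive view is identically distributed. Your version strictly subsumes the thesis's (a single box trivially admits such a model with $\lambda$ trivial, recovering the previous theorem exactly as you note), and it actually delivers more of what the surrounding discussion claims: the text goes on to assert that Alice's tests must exploit \emph{non-local effects}, i.e.\ that spatial separation alone is useless unless the certified correlations violate a Bell-type inequality, and that stronger statement is what your LHV simulation proves, whereas the thesis's one-box reduction only covers the unseparated case and leaves the rest implicit. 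The cost of your approach is the definitional commitment you flag yourself, plus the mild bookkeeping that the hidden-variable model must reproduce the full multi-round joint distribution (including any memory in the honest devices) rather than single-round marginals --- your induction handles this, since $\lambda$ is sampled once and all later inputs become deterministic functions of classical data already in hand, but it is worth stating that the response functions may be taken deterministic without loss of generality by absorbing any local randomness into $\lambda$.
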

\begin{proof}
If all the processes occur locally, we can reduce any setup to the
following.  Snoop supplies a device into which Alice inputs her random
string, before it produces an output.  Snoop's cheating strategy in
this case is simply to program his device with a correct output for
each of Alice's possible inputs.
\end{proof}
It then follows that since Snoop's devices can offer a one-to-one
correspondence between Alice's input and their output, the amount of
private randomness in Alice's possession remains constant.

Alice's tests need to exploit non-local effects in order to be of use.
To see that these evade the no-go results above, consider two
spatially separated devices, both inside Alice's laboratory.  Alice
inputs part of her random string into each device, and demands that
the devices produce fast outcomes (i.e., within the light travel time
between the two devices).  Thus, the second device must follow a
procedure that is independent of the random string input to the first,
and vice-versa.  If Alice is to test for non-classical correlations
between the outcomes, then Snoop's potential to cheat is constrained.
States which produce non-classical correlations possess some intrinsic
randomness, and so, by verifying that Snoop's devices are
producing such states, Alice can be sure that she derives genuine
randomness from them.

The non-local nature of quantum mechanics is often exemplified using
the {\sc CHSH} test, as described in Section \ref{Bell}.  However, the {\sc CHSH}
test is not well suited for our purposes because it is based on
statistics generated over lots of runs.  In a finite run, it is
impossible to say for certain whether the value achieved was due to
malicious behaviour or simply bad luck.  Such a property is an
inconvenience, but not fatal since, by increasing the number of runs,
we can ensure that the probability of Snoop passing the test if he has
deviated from it is arbitrarily small.  The {\sc GHZ} test, discussed below,
has a much neater failure detection than this.

Consider instead the following test, which we call a {\sc GHZ} test
\cite{GHZ}, after its inventors Greenberger, Horne and Zeilinger.
\nomenclature[zGHZ]{{\sc GHZ}}{Greenberger, Horne and Zeilinger, the
collaboration behind the so-called {\sc GHZ} Bell test, and after whom
the {\sc GHZ} state,
$\frac{1}{\sqrt{2}}\left(\ket{000}-\ket{111}\right)$, was named}
Alice asks for three devices, each of which has two settings (which we
label $P$ and $Q$ following Section \ref{Bell}) and can output either
$1$ or $-1$.  Alice is to consider the four quantities $P_1P_2P_3$,
$P_1Q_2Q_3$, $Q_1P_2Q_3$ and $Q_1Q_2P_3$.  She demands that the first
of these is always $-1$, while the remaining three are $+1$.  That
these cannot be satisfied by a classical assignment can be seen as
follows.  Consider the product of the four quantities, which according
to Alice's demands must be $-1$.  However, the algebraic expression is
$P_1^2P_2^2P_3^2Q_1^2Q_2^2Q_3^2$, which for a classical assignment
must be positive.  This is a contradiction, and so no classical
assignment exists.  If, instead, the $\{P_i\}$ and $\{Q_i\}$ are
formed by the outcomes of measurements acting on an entangled quantum
state, then such demands can always be met.  In Appendix \ref{AppC},
we describe the complete set of operators and states that achieve
this.  In essence, all such operators behave like Pauli $\sigma_x$ and
$\sigma_y$ operators
\nomenclature[g\sigma_x]{$\sigma_x,\sigma_y,\sigma_z$}{The Pauli spin
operators} and the state behaves like a {\sc GHZ} state, that is, the
state $\frac{1}{\sqrt{2}}\left(\ket{000}-\ket{111}\right)$, up to
local unitary invariance.

The {\sc GHZ} test does not rely on the statistical properties of several
runs.  Rather, the outcome of each run is specified.  If any run
contradicts the specified value, then one can be sure that the state
and operators are not the ones claimed.  This is highly useful
cryptographically, as it allows Alice to be certain when she has
detected interference.  {\sc GHZ} tests have a further advantage over {\sc CHSH}
in that they offer a higher rate of increase in randomness (we discuss
this further in Section \ref{effic}).

\section{Private Randomness Expansion}
\subsection{The Privacy Of A Random String}
Let us first consider the case where each party has classical
information.  Alice has string $x\in X$.  Snoop has
$z\in Z$, partially correlated with $x$, these strings having been
drawn from a distribution $P_{XZ}$.  Alice's string is private if
$I(X:Z=z)$ is negligible, i.e., string $X$ is essentially uniformly
distributed from Snoop's point of view.

Now consider the case where Snoop's information on Alice's string is
quantum.  In general, it is not enough to demand that for any
measurement Snoop performs, his resulting string $z$ is such that
$I(X:Z=z)$ is negligible.  Such a definition does not ensure that
Alice's string can be used in {\it any} further application.  The
reason is that Snoop need not measure his system to form a classical
string, but can instead keep hold of his quantum system.  He may be
then able to acquire knowledge which constitutes cheating in the
further application (see also the discussion in Section
\ref{sec_defs}).  For example, as a result of parameters revealed in
the further application, Snoop might be able to identify a suitable
measurement on his original quantum system that renders the further
application insecure.  (See \cite{BHLMO, RennerKoenig, Renner} for a
further discussion, and an explicit example.) Thus, a key with the
property that for all measurements by Snoop $I(X:Z=z)$ is negligible,
cannot be treated in the same way as a private random key.

Instead, as discussed in Chapter \ref{Ch3}, security definitions are
defined with reference to the properties of a suitable ideal.  In an
ideal protocol, the final state is of the form
$\frac{1}{|A_n|}\sum_{i\in
A_n}\ketbra{i}{i}\otimes\rho_Z\equiv\sigma_I$, where $\rho_Z$ is
Snoop's final system and is independent of $i$, $A_n$ represents the
set of strings of length $n$, and $|A_n|=2^n$ is the size of set
$A_n$.  In a real implementation, the final state has the form
$\sum_{i\in A_n}P_I(i)\ketbra{i}{i} \otimes\rho_Z^i\equiv\sigma_R$.  A
useful security definition is that $D(\sigma_I,\sigma_R)\leq\epsilon$,
which implies that the two situations can be distinguished with
probability at most $\epsilon$ (see Section \ref{tracedist}).
Moreover, since the trace distance is non-increasing under quantum
operations \cite{Nielsen&Chuang}, this condition must persist when the
string is used in any application, and hence the string satisfies a
stand-alone security definition (see Section \ref{sec_defs}).  Since
the protocol is non-interactive, and takes place entirely within
Alice's laboratory, it is clear that universally composable security
is also realized.

In many applications, the string produced may not satisfy a security
requirement of this kind without first undergoing privacy
amplification.  
In Section \ref{priv_amp}, we discussed privacy amplification in a
three party scenario, in which Alice and Bob seek to generate a shared
random string on which Eve's information is negligible.  Alice and Bob
are required to communicate during the amplification stage, and thus
leak information about the amplification to Eve.  Private randomness
expansion, on the other hand, is a two party game.  No information
need be leaked in amplification since there is no second honest party
needing to perform the same procedure.  For instance, if $universal_2$
hashing is used, the adversary never gains any knowledge about the
hash function.  The randomness used to choose it remains private and
hence acts catalytically.

\subsection{Definitions}
Let us denote Alice's initial private uniform random string by $x\in
X$.  This string has length $n$ bits.  Alice expands $x$, generating
the additional string $s\in S$.  A protocol for private randomness
expansion using devices supplied by Snoop is $\epsilon$-secure if, for
any strategy followed by Snoop whereby he holds Hilbert space
$\mathcal{H}_Z$, we have
\begin{equation}
D(\rho_{SZ},\rho_{U_S}\otimes\rho_Z)\leq\epsilon,
\end{equation}
where $\rho_{U_S}$ denotes the maximally mixed state in
$\mathcal{H}_S$.

A protocol for the weaker task of randomness expansion is
$\epsilon$-secure if
\begin{equation}
D(\rho_S,\rho_{U_S})\leq\epsilon.
\end{equation}
Then, no restriction is placed on how much information the state in
$\mathcal{H}_Z$ provides on $S$.  For instance, it could be entangled
in such a way that Snoop can always find $S$.  What is important is
that Snoop cannot influence $S$ in any way, except with probability
$\epsilon$.

Like in previous chapters, using $N_1,\ldots,N_r$ as security
parameters, we say that a protocol is secure if $\epsilon\rightarrow
0$ as the $N_i\rightarrow\infty$, and that a protocol is perfectly
secure if $\epsilon=0$ for some fixed finite values of the $N_i$.

\subsection{Finite Expansion}
\label{finite_exp}
We now give a protocol which allows a private random string to be
expanded.  Before undergoing the protocol, Alice asks Snoop for three
devices, each of which has two settings (inputs), ($P_i$ and $Q_i$ for
the $i$th device) and can make two possible outputs, $+1$ or $-1$.
These devices cannot communicate with agents outside of Alice's
laboratory (cf.\ Assumption \ref{ass1}), nor with one another.  Alice
asks that whenever these devices are used to measure one of the four
{\sc GHZ} quantities ($P_1P_2P_3$, $P_1Q_2Q_3$, $Q_1P_2Q_3$ and
$Q_1Q_2P_3$), they return the outcomes specified in Section \ref{GHZ}
(i.e., $-1$, $+1$, $+1$ and $+1$ respectively).  \footnote{In
practice, Alice might ask for devices that measure either $\sigma_x$
or $\sigma_z$, and for a further device that creates {\sc GHZ}
states. Of course, she will not be able to distinguish this scenario
from one satisfying the test but using a different set of states and
operators, hence we have kept the description as general as possible.}
We call these three devices taken together a device triple.  Alice
uses her device triple in the following procedure.

\begin{protocol}\qquad
\label{protA}

\begin{enumerate}
\item Alice chooses security parameter $\epsilon$, to give a
  sufficiently small probability of Snoop successfully cheating.  She
  divides her string $x$ into two strings $x_1$ and $R$, of equal
  length.
\item \label{pst1} Alice uses 2 bits of $x_1$ to choose one of the four
  tests, via the assignment in Table \ref{GHZtab}.
\item She performs the corresponding test, by having each of three
  agents make inputs to their boxes and receive their outputs such
  that light could not have travelled between any pair of boxes
  between input and output\footnote{Alternatively, Alice can avoid the
  need for large separations if she can ensure no communication
  between devices after the protocol begins, e.g.\ by putting each
  device in its own separate laboratory.}.
\item \label{pst3} If she receives the wrong product of outputs, she
  aborts, otherwise she turns her output into a bit string using the
  assignments given in Table \ref{GHZtab}.  In this way, Alice builds
  a random string $x'\in X'$.
\item Alice repeats steps \ref{pst1}--\ref{pst3} until she has
  depleted $x_1$.
\item \label{PA} Alice bounds
  $H_{\infty}^{\epsilon}(\rho_{X'X_1Z}|X_1Z)$.  Here,
\begin{equation}
\rho_{X'X_1Z}=\sum_{x',x_1}P_{X'X_1}(x',x_1)\ketbra{x'x_1}{x'x_1}\otimes\rho_Z^{x'x_1},
\end{equation}
  and $\mathcal{H}_Z$ is the Hilbert space held by Snoop.  She then
  performs privacy amplification using a $universal_2$ hash function,
  where the random string $R$ is used to choose the hash function.
  (Note that $R$ has the same length as $x'$ \cite{CW,WC}.)  If
  Alice's final string, $s$, has length $\tau$, then Equation
  (\ref{sm_min_ent}) implies that $s$ can be distinguished from
  uniform with probability at most
  $\epsilon+\frac{1}{2}2^{-\frac{1}{2}\left(H_{\infty}^{\epsilon}(\rho_{X'X_1Z}|X_1Z)-\tau\right)}$. \footnote{Since
  it is only quantum devices that are supplied by Snoop, and hashing
  is a classical procedure, there is no security issue associated with
  this step.}
\end{enumerate}
\end{protocol}

\begin{table}
\begin{center}
\begin{tabular}{|c|cccc|}
\hline
bit sequence&00&01&10&11\\
\hline
input&$P_1P_2P_3$&$P_1Q_2Q_3$&$Q_1P_2Q_3$&$Q_1Q_2P_3$\\
\hline
\end{tabular}

\bigskip
(a)

\bigskip
\begin{tabular}{|c|cccc|}
\hline
\multirow{2}{*}{output}&$---$&$-++$&$+-+$&$++-$\\
&$+++$&$+--$&$-+-$&$--+$\\
\hline
assignment&00&01&10&11\\
\hline
\end{tabular}

\bigskip
(b)
\caption{Assignment table for (a) choosing the inputs to the three
devices based on two random bits, and (b) assigning the outputs
generated from the three devices to form two new random bits.  
}
\label{GHZtab}
\end{center}
\end{table}

\begin{figure}
\begin{center}
\includegraphics[width=\textwidth]{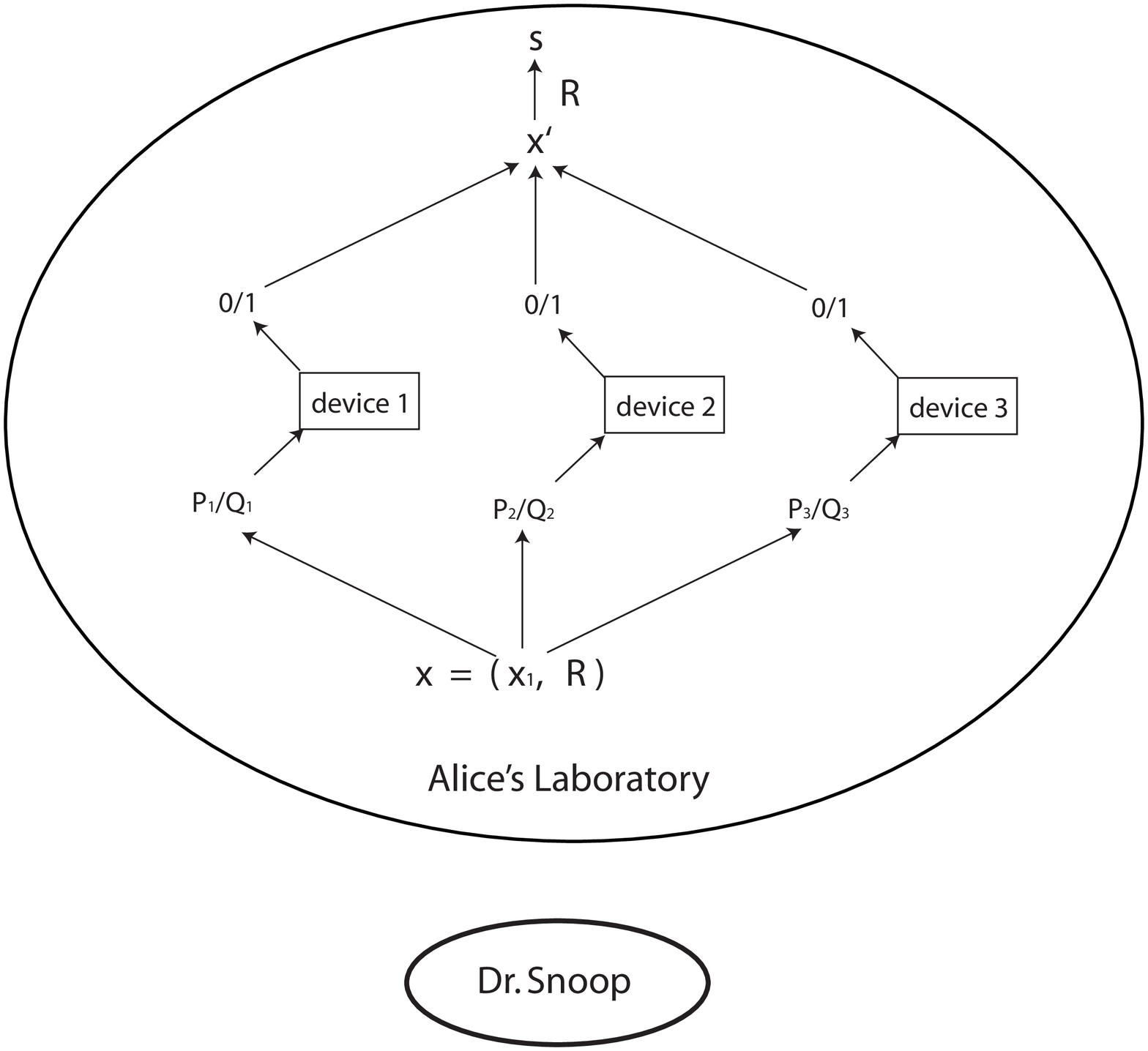}
\caption{Diagram of the steps in Protocol \ref{protA}.  Together
  devices 1--3 form a device triple.}
\label{fig:Rand_exp1}
\end{center}
\end{figure}

This protocol is illustrated in Figure \ref{fig:Rand_exp1}.  Note that
Alice bounds the quantity $H_{\infty}^{\epsilon}(\rho_{X'X_1Z}|X_1Z)$,
rather than $H_{\infty}^{\epsilon}(\rho_{X'Z}|Z)$.  This ensures that
if Snoop discovers $x_1$ {\it after} the protocol has taken place
(e.g., perhaps it is used as part of some further application), the
string $s$ remains secure.  This is important for the composability of
Protocol \ref{protA}.  Following the discussion of privacy
amplification in Section \ref{priv_amp}, the same is true if $R$ is
subsequently divulged.  The concatenation $(x,s)$ is the final private
random string generated by the protocol.  It is manifestly longer than
the initial one.  Moreover, if Snoop is honest, then Protocol
\ref{protA} uses 2 bits of $x$ while generating 2 new bits of
randomness each time the loop (i.e., Steps \ref{pst1}--\ref{pst3}) is
run.

Although $(x,s)$ is private with respect to the outside world, it is
not private with respect to the devices, which, being malicious, may
be programmed to remember their sequence of inputs and outputs.  Snoop
could then program her devices in the following way.  The first time
$x$ is input, the devices behave honestly, using genuine {\sc GHZ}
states and suitable measurement operators.  Alice's tests will then
all pass.  When $x$ is input again (which the devices know, because we
assume Snoop knows Alice's procedure), the devices can simply recall
the output they made in the first (honest) run.  With probability
$\frac{1}{4}$ the devices output these directly, otherwise they
randomly flip two of the three outputs. (The devices can be
pre-programmed with shared private randomness in order to do this.)
The outputs in this second run appear genuine from Alice's point of
view, but in fact contain no additional private randomness.
Therefore, the procedure cannot simply be repeated to generate an even
longer string.

\subsubsection{Security Against Classical Attacks}
Consider the situation where Alice performs the protocol as described,
while Snoop attempts to cheat.  In so doing, Snoop limits himself to
classical attacks (that is, to inserting known outcomes\footnote{Snoop
could distribute these outcomes according to some probability
distribution, but this will not help.  Additionally, he could make the
output depend on the input, but since when we bound the smooth
min-entropy we give Snoop the input, this also does not help.}).  If
he does this, his best attack has success probability $\frac{3}{4}$
per supposed {\sc GHZ} state, and gains him 2 bits of Alice's sequence,
$x'$.

Snoop can then have made a maximum of
$m=\frac{\log{\frac{1}{\epsilon}}}{\log\frac{4}{3}}$ attacks, except
with probability less than $\epsilon$.  So that his probability of
successful attack is less than $2\epsilon$, we require the hashing to
reduce the string length by
$2\left(\frac{1}{\log{\frac{4}{3}}}+1\right)\log{\frac{1}{\epsilon}}-2$
bits (see Equation (\ref{sm_min_ent})).  This is independent of the
number of {\sc GHZ} tests performed.  Provided the initial private
random string has length greater than twice this, it can be expanded,
except with probability less than $2\epsilon$.


In the cases where Snoop does not make an attack, two new pieces of
randomness are generated for each bit of $x_1$.  Therefore, against
classical attacks, this protocol increases the amount of private
randomness by a factor of $\lesssim\frac{3}{2}$, for large initial
amounts.

\subsubsection{Quantum Attacks}
Of course, limiting Snoop to classical attacks is an undesirable and
unrealistic assumption, especially given the fact that he is able to
produce Alice's quantum devices!  If Snoop performs a quantum attack,
then, before privacy amplification, the final state of the system
takes the form
\begin{equation}
\sum_{x',x_1}P_{X'X_1}(x',x_1)\ketbra{x'x_1}{x'x_1}_A\otimes\rho_Z^{x'x_1}.
\end{equation}
The length by which $x'$ needs to be reduced depends on
$H_{\infty}^{\epsilon}(\rho_{X'X_1Z}|X_1Z)$, and on $\epsilon$.  If
Alice wants an overall error probability less than $2\epsilon$, then
she can expand randomness provided that
$H_{\infty}^{\epsilon}(\rho_{X'X_1Z}|X_1Z)>2\log\frac{1}{2\epsilon}-2$
(see Equation (\ref{sm_min_ent})).

We have not been able to usefully bound
$H_{\infty}^{\epsilon}(\rho_{X'X_1Z}|X_1Z)$.  However, intuitively, we
expect that if a large number of {\sc GHZ} tests pass, Snoop's states must
be close to {\sc GHZ} states, except with probability exponentially small in
the number of tests.  In Appendix \ref{AppC}, we give a complete
description for the set of states that perfectly satisfy a {\sc GHZ} test.
Such states all generate 2 bits of private randomness per test.
Hence, we suspect that conditioned on $\scorpio$ {\sc GHZ} tests passing,
$H_{\infty}^{\epsilon}(\rho_{X'X_1Z}|X_1Z)$ is less than, but
approximately equal to $2\scorpio$.  In fact, to ensure our result, we
need a weaker conjecture, as follows.
\begin{conjecture}
\label{non-0}
If Protocol \ref{protA} is followed exactly by Alice, then for all
$\zeta>0$, $\epsilon>0$, there exists a sufficiently large integer,
$\scorpio$, such that conditioned on $\scorpio$ {\sc GHZ} tests passing,
$H_{\infty}^{\epsilon}(\rho_{X'X_1Z}|X_1Z)\geq\zeta$.
\end{conjecture}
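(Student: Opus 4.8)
The plan is to deduce Conjecture \ref{non-0} from a robust self-testing statement for the {\sc GHZ} paradox and then to propagate the resulting per-round randomness through the $\scorpio$ rounds of Protocol \ref{protA}. First I would fix the worst-case model for Snoop: purifying, his strategy consists of a pure state $\ket{\Psi}$ on $\mathcal H_{D_1}\otimes\mathcal H_{D_2}\otimes\mathcal H_{D_3}\otimes\mathcal H_Z$, where $D_i$ is the (arbitrarily large) internal memory of the $i$th device and $Z$ is Snoop's own system, together with, for each device and each of its two settings, a binary measurement which by the usual dilation we may take to be projective; since the devices may carry memory, in round $t$ device $i$ applies a measurement that depends on the inputs and outputs of rounds $1,\dots,t-1$, which we absorb into an evolving device state. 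The spacelike separation of the three devices within a round already rules out any classical assignment satisfying all four {\sc GHZ} conditions (Section \ref{GHZ}), so the entire content of the conjecture lies in the quantum, approximate regime.

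In the ideal case the conjecture is immediate: by the characterisation in Appendix \ref{AppC}, a single-round strategy passing all four {\sc GHZ} tests with certainty is, up to a local isometry, the {\sc GHZ} state measured with Pauli-like operators, tensored with a system $Z$ uncorrelated with the devices, so the round's two-bit output is uniform and independent of the test label and of $Z$, giving $H_\infty(\rho_{X'X_1Z}|X_1Z)=2\scorpio$. The real step is a stability version: I would aim to prove that there is a non-increasing function $\mu$ with $\mu(0)=2$ such that any single-round strategy whose state may be correlated with an auxiliary record $Z'$ of the history, and which passes with probability at least $1-\delta$ when the test label $T$ is uniform, produces a two-bit output $X'$ with $H_\infty(X'|T,Z')\ge\mu(\delta)$. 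Such a bound should follow by combining a robust self-testing theorem for {\sc GHZ} --- a local isometry carrying the strategy to within $f(\delta)$, $f(0)=0$, of the ideal one --- with the fact that trace distance controls guessing probability (Appendix \ref{AppA}); the {\sc GHZ} case ought to be at least as tractable as the {\sc CHSH}-based device-independent randomness bounds, and is arguably cleaner because the ideal correlations are deterministic. I would establish this first for a single device triple with no memory.

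Given $\zeta$ and $\epsilon$, one then fixes $\delta>0$ with $\mu(\delta)\ge 1$ and lets $\tilde\rho$ be the (sub-normalised) state conditioned on the event $\Omega$ that all $\scorpio$ tests pass, of weight $p=\Pr[\Omega]$. If $p<\epsilon$ the conditioning event is so rare that the bound holds vacuously; this is exactly what neutralises the purely classical attack of inserting a fixed output triple in every round, which passes all $\scorpio$ tests only with probability $(3/4)^{\scorpio}$. Otherwise $p\ge\epsilon$, and writing $p=\prod_{t=1}^{\scorpio}\Pr[\text{round }t\text{ passes}\mid\Omega_{t-1}]$ (with $\Omega_{t-1}$ the event that rounds $1,\dots,t-1$ passed), the geometric mean of the conditional pass probabilities is at least $\epsilon^{1/\scorpio}$, so at most $B_0:=\log(1/\epsilon)/(-\log(1-\delta))$ rounds --- a bound independent of $\scorpio$ --- can have conditional pass probability below $1-\delta$. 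Every one of the remaining $\ge\scorpio-B_0$ ``good'' rounds then satisfies the hypothesis of the robust self-test, with $Z'$ the system $Z$ together with the classical record of earlier inputs and outputs, and so injects at least $\mu(\delta)\ge 1$ bit of fresh min-entropy; moreover the test label for that round is drawn from bits of $x_1$ independent of all preceding data.

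The main obstacle, and the place where genuinely new input would be needed, is the last step: assembling these per-round bounds into a lower bound on $H_\infty^{\epsilon}(\tilde\rho_{X'X_1Z}|X_1Z)$ for the \emph{final} state conditioned on \emph{all} rounds passing. A smooth-min-entropy chain rule gives something of the form $\sum_{\text{good }t}\mu(\delta)$ minus correction terms, but the delicate point is that conditioning on the success of \emph{later} rounds could in principle erode the entropy created in an earlier good round, since a malicious device is free to leak information about an earlier output into its later outputs. This is precisely the regime for which a general composition (``entropy accumulation'') theorem would be required, and it goes beyond what current techniques establish. Absent that, I would (a) prove the conjecture first in the memoryless case, where the $\scorpio$ uses of the device triple are effectively i.i.d.\ and a quantum asymptotic-equipartition / de Finetti argument in the style of \cite{Renner} applies directly, and (b) attack the general case by constructing a supermartingale in the accumulated min-entropy and applying an Azuma-type concentration inequality, exploiting the freshness of each round's test randomness to control the increments. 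Taking $\scorpio$ larger than $B_0+\lceil\zeta\rceil$, plus whatever slack the correction terms demand, would then give the claim.
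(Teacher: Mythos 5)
The statement you are asked to prove is left as an open conjecture in the thesis: the author explicitly writes that he has ``not been able to usefully bound $H_{\infty}^{\epsilon}(\rho_{X'X_1Z}|X_1Z)$'' and that the protocols of Chapter \ref{Ch4} ``at present do not have full security proofs''. There is therefore no proof in the paper to compare against, and your proposal must be judged as a research plan rather than a proof. As such it has the right skeleton --- a per-round device-independent min-entropy bound extracted from an approximate {\sc GHZ} test, followed by an accumulation argument over the $\scorpio$ adaptive rounds --- and you are candid that the second step needs a composition (``entropy accumulation'') theorem that is not available. But both halves are genuine gaps, not just the second. Appendix \ref{AppC} characterises only strategies that pass the test \emph{exactly}; the robust version you invoke (a local isometry within $f(\delta)$ of the ideal, hence $H_{\infty}(X'|T,Z')\geq\mu(\delta)$ with $\mu(0)=2$) is asserted, not derived, and the derivation is not a routine continuity argument: min-entropy is strongly discontinuous (the thesis itself makes this point in Section \ref{beyondshan}), so closeness of states does not by itself yield a min-entropy rate. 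One must bound the guessing probability of $X'$ given the devices' quantum memories directly in terms of the observed pass probability, and that is where the real work lies.

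A second, more localized problem is your claim that the bound ``holds vacuously'' when $\Pr[\Omega]<\epsilon$. Under the conjecture as literally stated --- the smooth min-entropy of the state \emph{conditioned} on all tests passing --- this is false: deterministic boxes pass all $\scorpio$ tests with the small but nonzero probability $\left(\frac{3}{4}\right)^{\scorpio}$, and conditioned on that event $X'$ is a fixed function of $X_1$, so $H_{\infty}^{\epsilon}(\rho_{X'X_1Z}|X_1Z)$ is bounded by a constant of order $-\log(1-\epsilon)$ for every $\scorpio$, never reaching a large $\zeta$. Your ``vacuous'' escape works only if the entropy is evaluated on the subnormalized state $\Pr[\Omega]\,\rho_{|\Omega}$, which the smoothing ball can then absorb entirely. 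So either the conjecture must be reformulated (as a dichotomy ``either $\Pr[\Omega]<\epsilon$ or the entropy is at least $\zeta$'', or in the subnormalized convention), or your argument must be restated to match the normalized reading; as written, the step that is supposed to neutralise the deterministic attack does not do so. The remainder of your counting argument (at most $\log(1/\epsilon)/(-\log(1-\delta))$ rounds with conditional pass probability below $1-\delta$) is correct, but it only becomes useful once the two missing ingredients above are supplied.
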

If we accept this conjecture, then any desired length of additional
private random string can be generated using a sufficiently long
initial string (but at present, we do not know how to relate the
number of tests to the amount of additional randomness).  This
conjecture, together with Equation (\ref{sm_min_ent}), implies that we
can use Protocol \ref{protA} to generate $\tau$ additional random bits
except with probability 
\begin{equation}
\label{delta_defn}
\delta\leq\epsilon+\frac{1}{2}2^{-\frac{1}{2}\left(\zeta-\tau\right)}.
\end{equation}
Conjecture \ref{non-0} implies that for fixed $\zeta$, increasing
$\scorpio$ reduces $\epsilon$, while for fixed $\epsilon$, increasing
$\scorpio$ increases $\zeta$.  Hence, $\delta$ can be made arbitrarily
small for fixed $\tau$, by increasing $\scorpio$, which in turn
requires a longer initial string.

The capacity for generating any finite amount of additional randomness
may be useful in itself, but what is more useful is the ability to
take a string and expand it by an arbitrary amount.  In the next
section we give a protocol to do just that.

\subsection{Indefinite Expansion}
\label{indef}
If we accept Conjecture \ref{non-0}, then, except with a probability
exponentially small in the number of tests performed, the string
generated in Protocol \ref{protA} is private and random.  In this
section, we introduce a protocol that we conjecture allows a
sufficiently long initial random string to be expanded by an arbitrary
amount.

As we have mentioned, one cannot simply feed the original string, $x$,
twice into the same devices to double the amount of randomness gained.
On the other hand, if a second device triple is supplied by Snoop, and
can be assured no means of communication with the original (which is
reasonable given Assumption \ref{ass1}), then the string $(x,s)$
generated by the first triple is private and random with respect to
the second, and hence can be used as input.  One natural way to assure
independence is simply to provide spatial separation between the
device triples, in which case the same string $x$ (but not $(x,s)$)
can be used for each triple.  The overall protocol is as follows.

\begin{protocol}\qquad
\label{protNtriples}

\begin{enumerate}
\item Alice asks Snoop for $N$ device triples.
\item She places each device triple within its own sub-lab of her
laboratory such that no two can communicate.
\item Within each sub-lab, Alice uses her device triple to perform
protocol \ref{protA} with the same initial string, $x$, being used for
each.  The output generated in lab $i$ is string $s_i$, and we denote
the intermediate (non-hashed) string in this lab $x_i'$.  If any of
the {\sc GHZ} tests fail, the entire protocol aborts.
\item The strings $\{s_i\}$ are concatenated to form the final
output.
\end{enumerate}
\end{protocol}
\begin{figure}
\begin{center}
\includegraphics[width=\textwidth]{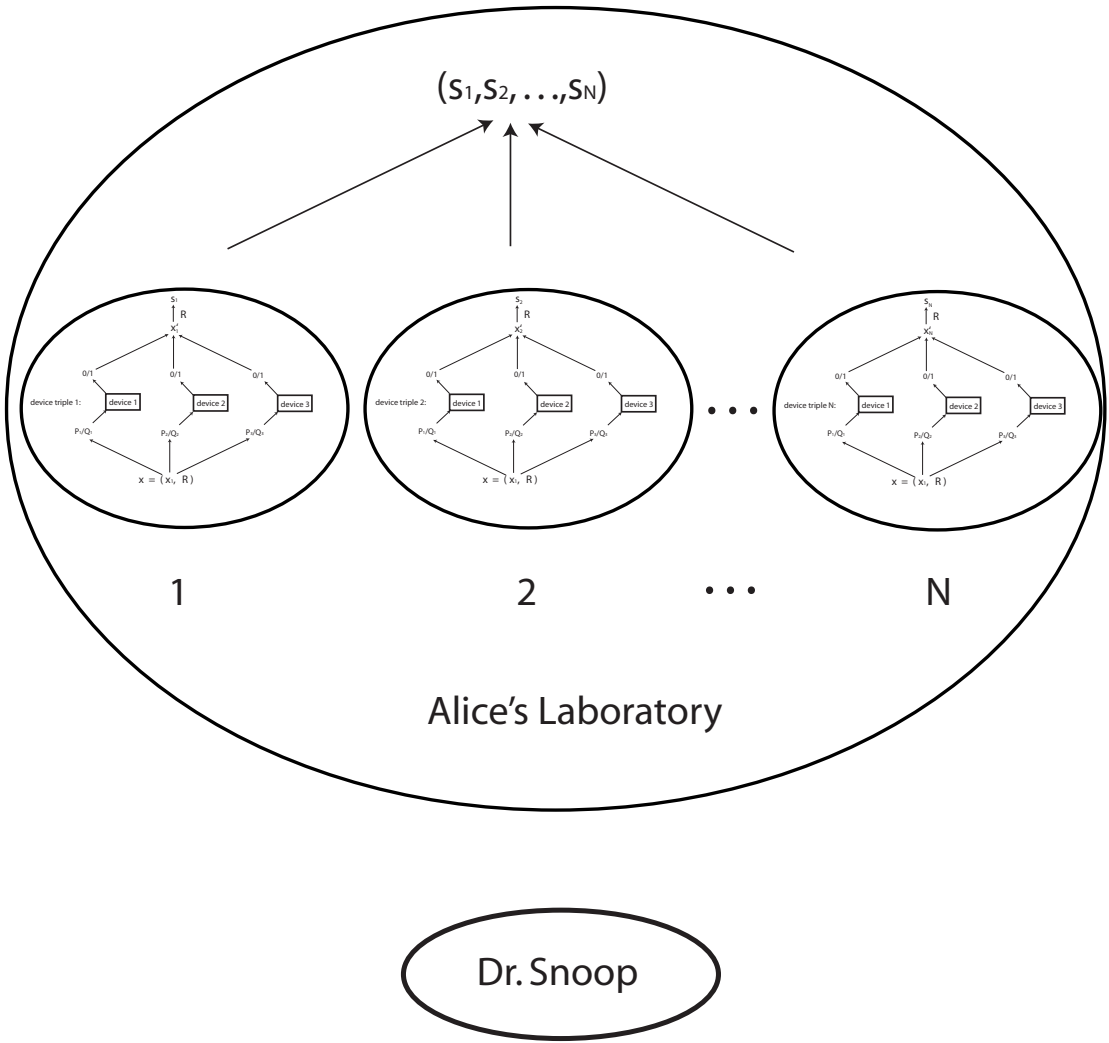}
\caption{Diagram of the steps in Protocol \ref{protNtriples}.  The
  same string, $x$, is used to generate the input to each device
  triple.  We have numbered each sub-lab in which instances of
  Protocol \ref{protA} occur.}
\label{fig:Rand_exp2}
\end{center}
\end{figure}
This protocol is illustrated in Figure \ref{fig:Rand_exp2}.

If we accept Conjecture \ref{non-0}, then each device triple, taken on
its own generates a non-zero amount of private randomness, except with
probability $\delta$, as defined by Equation (\ref{delta_defn}).  From
the discussion of privacy amplification in Section \ref{extractors},
this means that, for any system held by Snoop, he can distinguish
Alice's string from a uniform one with probability at most $\delta$.
This includes the case where, after the protocol has taken place,
Snoop learns $x$.  Since this must hold for any system held by Snoop,
we have that the strings $\{s_i\}$ are independent, since one possible
strategy for Snoop is to keep the other $N-1$ systems.  Hence, this
protocol generates $N$ times as much randomness as Protocol
\ref{protA}.  Thus, provided the initial private random string is
sufficiently long that it would generate a longer string in Protocol
\ref{protA}, it can be used to generate an arbitrarily large amount of
additional private randomness.

\section{Resource Considerations}
We have described two protocols for the expansion of random strings.
For a given initial string, the first protocol has limited potential
for expansion, while the second can be used to expand this string by
an arbitrary amount, but requires a large supply of device triples in
order to do so.  We consider the following resources:
\begin{enumerate}
\item The number of bits forming the initial string, $n$, and
\item The number of sub-laboratories Alice must form, $N$.
\end{enumerate}
Such resources limit the amount of additional randomness that can be
generated, as well as the probability of error achievable.

For a fixed initial string, Protocol \ref{protNtriples} allows an
arbitrary amount of randomness to be generated, provided that $n$ is
sufficient for the error tolerance required.  On the other hand, if
$N$ is fixed as well, then there is some limit on the amount of
expansion possible.  Since Protocol \ref{protA} is called as a
sub-protocol of Protocol \ref{protNtriples}, we look to enhance the
former in order to improve efficiency.

There are two ways in which one might increase the amount of
randomness generated over that given using Protocol \ref{protA}.  The
first is to use a more efficient extractor than the $universal_2$ hash
functions we have considered, so that the relative size of $x_1$ over
$R$ could be increased.  The second is to use a more efficient test to
generate the additional randomness.  This latter consideration is
discussed in the next section.

\subsection{Beyond The GHZ Test}
\label{effic}
Consider the task of using an $n$ bit initial string in some procedure
in order to maximize the length of additional random string generated,
while relying on $universal_2$ hashing for privacy amplification.  We
use $universal_2$ hash functions which require a random string equal
in length to the string being hashed.  Consider now a {\sc GHZ}-like
test whose output is $\nu$ times the length of the input.  In order
to use such a test to form a new string, the $n$ bit string is
partitioned into two strings, one of length $\frac{n}{1+\nu}$ and
one of length $\frac{\nu n}{1+\nu}$.  The first of these is
used, via the {\sc GHZ}-like test, to generate a string of length
$\frac{\nu n}{1+\nu}$ which is hashed using the second to form
the final string.  In this way, the original $n$ bit string has been
used to form one of length $n\left(1+\frac{\nu}{1+\nu}\right)$
(ignoring the reduction in length required for security, which, for
large $n$, represents an arbitrarily small fraction of the length).
In the limit $\nu\rightarrow\infty$, the original string can be
doubled in length.  This should be compared to an increase of
$\frac{3}{2}$ times if the original {\sc GHZ} test is used, or
approximately $1.4$ times if {\sc CHSH} is used. \footnote{In the {\sc
GHZ} case, choosing between each of the four quantities to test uses
two bits of randomness, while the amount of randomness gained from a
successful test is also two bits according to quantum theory (each of
the four possible outcomes from any of the measurements (e.g., for
$P_1P_2P_3$, the four are $---$, $++-$, $+-+$ and $-++$) are equally
likely).  Hence, for this test, $\nu=1$.}

Arbitrarily large values of $\nu$ are possible for appropriately
constructed tests.  One such construction was conceived by Pagonis,
Redhead and Clifton \comment{({\sc PRC}) \nomenclature[zPRC]{{\sc
PRC}}{Pagonis, Redhead and Clifton, the collaboration behind a
generalization of the {\sc GHZ} test}} \cite{PRC}.  They have
presented a series of Bell-type tests which extend the {\sc GHZ} test
to more systems.  In the seven system version, Alice asks for seven of
the two input, two-output devices discussed previously, and is to
consider the eight quantities
\begin{flushleft}
$P_1Q_2Q_3Q_4Q_5Q_6Q_7$, $Q_1P_2Q_3Q_4Q_5Q_6Q_7$,
$Q_1Q_2P_3Q_4Q_5Q_6Q_7$, $Q_1Q_2Q_3P_4Q_5Q_6Q_7$,\\
$Q_1Q_2Q_3Q_4P_5Q_6Q_7$, $Q_1Q_2Q_3Q_4Q_5P_6Q_7$,
$Q_1Q_2Q_3Q_4Q_5Q_6P_7$, $P_1P_2P_3P_4P_5P_6P_7$.
\end{flushleft}
She demands that the first seven are always $+1$, while the last
should be $-1$.  Again, it is easy to see that this is classically
impossible.  We conjecture that quantum mechanically, all states which
satisfy these requirements are essentially seven system analogues of
the {\sc GHZ} state, i.e.\
$\frac{1}{\sqrt{2}}(\ket{0000000}-\ket{1111111})$ (like in the {\sc
GHZ} case discussed in Appendix \ref{AppC}), although this remains
unproven\footnote{The proof provided for the {\sc GHZ} case does not
generalize directly.}.  For this test, 3 bits of randomness are
required to choose amongst the eight settings, while in a successful
implementation of the test on this state, 6 bits of randomness are
generated by the output.  Higher dimensional versions of this test
(see \cite{PRC}) lead to larger increases still.  In the $k$th version
of this test, $4k-1$ devices are required to measure one of $4k$
quantities.  Such a test generates $4k-2$ bits of randomness, and
hence has an associated value $\nu=\frac{4k-2}{\log{4k}}$.

Although these tests allow a larger amount of additional randomness
per bit of original string, there is a tradeoff in that they generate
lower detection probabilities in the event that Snoop cheats.  This is
easily illustrated by considering a classical attack.  For a {\sc GHZ} test,
a classical attack can escape detection with probability $\frac{3}{4}$
per test, while in the seven system generalization, this figure is
$\frac{7}{8}$ per test.  However, without a relation between the
smooth min-entropy and the number of tests, we cannot fully classify
the tradeoff.

\section{Discussion}
In this chapter, we have introduced two protocols that we conjecture
allow the expansion of a private random string using untrusted
devices.  The second of our protocols provides an arbitrarily long
private random string.  This may be a useful primitive on which to
base other protocols in the untrusted device scenario, and this is an
interesting avenue of further work.  Such a scenario is of interest in
that it allows us to reduce our assumptions.  More fundamentally, we
can think of nature as our untrusted adversary which provides devices.
One could then argue that our protocols strengthen the belief that
nature behaves in a random way\footnote{Of course, it is impossible to
rule out cosmic conspiracy.}.

The untrusted devices scenario is a realistic one, and will become
important if quantum computers become widespread.  The ordinary user
will not want to construct a quantum computer themselves and will
instead turn to a supplier, in the same way that users of classical
computers do today.  The protocols in this chapter seek to provide
such users a guarantee that the devices supplied are behaving in such
a way that their outputs are private and random, to within a
sufficient level of confidence.



\addcontentsline{toc}{chapter}{Conclusions}
\fancyhead[RO]{{\bf Conclusions}}
\def\baselinestretch{1}

\chapter*{Conclusions}


\def\baselinestretch{1.66}

The cryptographic power present within a model depends fundamentally
on the physical theory underlying it.  Non-relativistic classical
theory does not give much power and unproven technological assumptions
often have to be employed in order to make cryptographic tasks
possible.  Non-relativistic quantum theory permits key distribution,
but remains insufficient for a range of other tasks.

We have investigated quantum relativistic protocols, the most powerful
allowed by current theory.  Using such protocols, we have been able to
widen the class of tasks known to be possible to include variable bias
coin tossing.  However, many remain impossible.  The current state of
the field for two-party protocols is summarized in Table \ref{fns2}
(see page \pageref{fns2}).  Nature itself has a built-in limit on the
set of cryptographic tasks allowed.  For some, it is fundamentally
necessary to appeal to assumptions about the adversary in order that
they be achieved.  One might speculate that developments to our
current theory (e.g.\ a theory of quantum gravity) could be such that
they alter the set of allowed tasks.

We have also investigated cryptographic tasks outside the standard
model.  Specifically, we have dropped the usual assumption that each
party trusts all the devices within their laboratory.  In the
untrusted devices model, any quantum devices used are assumed to be
produced by a malicious adversary.  Even within this highly
restrictive scenario, some cryptographic procedures can succeed.  We
have discussed the task of expanding a private random string in
detail, giving two protocols which we conjecture do just that.

\bigskip
Throughout this thesis, we have sought unconditional security.  Our
goal has been to use a minimal set of assumptions in order to do
cryptography.  Aside from its obvious practical benefits, a
classification of tasks as possible or impossible is of intellectual
interest as a way of giving insight into fundamental physics itself.
However, when considering real-world cryptography, unconditional
security is unattainable in the way we have described.  Our first
assumption, that each party has complete trust in the security of
their laboratory, for example, is at best an assumption about the
power of an adversary, since an impenetrable laboratory is impossible
to realize.

Trust is something of a commodity in cryptography.  In practice, the
overwhelming majority of users are much more trusting than we have
allowed for.  They will, for instance, accept the functionality of
their devices on faith, taking the presence of a padlock symbol in the
corner of their browser window as a guarantee that their
communications are being encrypted.  Furthermore, they provide any
malicious code on their system with a high capacity channel (an
internet connection) with which to release private data.  Trusted
suppliers are hence a virtual necessity in any large-scale
cryptographic network.

Ultimately, it is not for us to say which assumptions a given user
should accept and which they should not.  Instead, we set up protocols
and clearly state the assumptions under which they are secure.  In
this way, the responsibility of deciding whether a given protocol is
of use in a particular situation is delegated to its user.




\appendix
\fancyhead[RO]{{\bf Appendix \thechapter}}
\chapter{Maximizing The Probability Of Distinguishing Between Two Quantum States}
\label{AppA}
Here we prove the following theorem \cite{Helstrom}:

\begin{theorem} 
Bob is in possession of one of two states whose density matrices are,
$\rho_0$ and $\rho_1$, for which the prior probability of $\rho_0$ is
$\eta_0$, and of $\rho_1$ is $\eta_1=1-\eta_0$.  The {\sc POVM} which is optimal to
distinguish these states does so with success probability
$\frac{1}{2}\left(1+{\rm tr}\left|\eta_0\rho_0-\eta_1\rho_1\right|\right)$.
\end{theorem}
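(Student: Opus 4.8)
The plan is to reduce the optimisation over all measurements to a single Hermitian-operator inequality. First I would note that it suffices to consider a two-outcome POVM $\{E_0,E_1\}$ with $E_0+E_1=\openone$, outcome $i$ corresponding to the guess $\rho_i$: any POVM with more outcomes can be coarse-grained by adding together all elements associated with a given guess without decreasing the success probability, exactly as argued for state discrimination earlier. With this reduction, and using $E_1=\openone-E_0$ together with ${\rm tr}\,\rho_1=1$, the success probability becomes
\begin{equation}
p_{\rm succ}=\eta_0\,{\rm tr}(E_0\rho_0)+\eta_1\,{\rm tr}(E_1\rho_1)=\eta_1+{\rm tr}\!\left(E_0\Lambda\right),\qquad \Lambda\equiv\eta_0\rho_0-\eta_1\rho_1 .
\end{equation}

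The key step is to bound ${\rm tr}(E_0\Lambda)$ over all $E_0$ with $0\leq E_0\leq\openone$. Writing the spectral decomposition of the Hermitian operator $\Lambda$ and grouping eigenvalues by sign gives $\Lambda=\Lambda_+-\Lambda_-$ with $\Lambda_\pm\geq 0$ supported on orthogonal subspaces, so that $|\Lambda|=\Lambda_++\Lambda_-$. Then ${\rm tr}(E_0\Lambda)={\rm tr}(E_0\Lambda_+)-{\rm tr}(E_0\Lambda_-)\leq{\rm tr}(E_0\Lambda_+)\leq{\rm tr}(\Lambda_+)$, the two inequalities following from ${\rm tr}(E_0\Lambda_-)\geq0$ and ${\rm tr}((\openone-E_0)\Lambda_+)\geq0$ respectively. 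Equality holds throughout when $E_0$ is the projector onto the support of $\Lambda_+$, i.e.\ onto the non-negative part of the spectrum of $\Lambda$; in particular the optimum is attained by a projective measurement, so no advantage is lost by restricting attention to such measurements.

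It then remains to evaluate ${\rm tr}(\Lambda_+)$. Combining ${\rm tr}|\Lambda|={\rm tr}(\Lambda_+)+{\rm tr}(\Lambda_-)$ with ${\rm tr}\,\Lambda={\rm tr}(\Lambda_+)-{\rm tr}(\Lambda_-)=\eta_0-\eta_1$ yields ${\rm tr}(\Lambda_+)=\tfrac12\left({\rm tr}|\Lambda|+\eta_0-\eta_1\right)$, whence
\begin{equation}
p_{\rm succ}\leq\eta_1+\tfrac12\left({\rm tr}|\Lambda|+\eta_0-\eta_1\right)=\tfrac12\left(1+{\rm tr}\left|\eta_0\rho_0-\eta_1\rho_1\right|\right),
\end{equation}
with equality for the projective measurement identified above, which both establishes the bound and exhibits the optimal POVM. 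I do not expect a genuine obstacle: the only points requiring care are the reduction to a two-outcome POVM and the operator inequality ${\rm tr}(E_0\Lambda)\leq{\rm tr}(\Lambda_+)$ for $0\leq E_0\leq\openone$, both of which are elementary. A remark worth adding is that the same computation shows the bound cannot be improved by allowing Bob ancillary systems or more general operations, since any such strategy is already subsumed by the POVM formalism.
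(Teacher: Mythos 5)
Your proof is correct, and it is essentially the same argument as the paper's: both rest on the Jordan decomposition $\eta_0\rho_0-\eta_1\rho_1=\Lambda_+-\Lambda_-$ into positive parts with orthogonal support, bound the relevant trace by ${\rm tr}\left|\eta_0\rho_0-\eta_1\rho_1\right|$, and exhibit the projector onto the support of $\Lambda_+$ as the optimal measurement. The only difference is bookkeeping: you coarse-grain to a two-outcome POVM at the outset and bound ${\rm tr}(E_0\Lambda)$ directly, whereas the paper carries a general $N$-outcome POVM through using the identity $\max(a,b)=\tfrac12(a+b+|a-b|)$ before applying the same operator bound.
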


\begin{proof}
Our proof follows a similar argument to Nielsen and
Chuang \cite{Nielsen&Chuang}, but extends their result to the case of
unequal prior probabilities.

Consider a {\sc POVM} described by elements $\{E_i\}_{i=1}^N$, which
satisfy $\sum_{i=1}^N E_i=\openone$.  Measurement with this {\sc POVM}
on the state provided generates outcomes according to one of two
probability distributions.  We have, $P_I(i)={\rm tr}(\rho_0 E_i)$ and
$Q_I(i)={\rm tr}(\rho_1 E_i)$, where $P_I$ occurs with probability
$\eta_0$ and $Q_I$ with probability $\eta_1$.  On measuring outcome
$i$, our best guess of the distribution will be the one with
$\max(\eta_0 P_I(i),\eta_1 Q_I(i))$, this guess being correct with
probability $\frac{\max(\eta_0 P_I(i),\eta_1 Q_I(i))}{\eta_0
P_I(i)+\eta_1 Q_I(i)}$.  The overall probability that we guess
correctly using this {\sc POVM} is then, $\sum_i\max(\eta_0 P_I(i),\eta_1
Q_I(i))$. Let us label the $\{P_I(i),Q_I(i)\}$ such that $P_I(i)\geq
Q_I(i)$ for $i=1,...,d$, and $P_I(i)<Q_I(i)$ for $i=d+1,...,N$.  Then,
\begin{eqnarray}
\nonumber\sum_{i=1}^N\max(\eta_0 P_I(i),\eta_1 Q_I(i))&=&\sum_{i=1}^d\eta_0 P_I(i)+\sum_{i=d+1}^N\eta_1 Q_I(i)\\
\nonumber&=&\sum_{i=1}^N\left|\eta_0 P_I(i)-\eta_1 Q_I(i)\right|+\sum_{i=1}^d\eta_1
Q_I(i)+\sum_{i=d+1}^N\eta_0 P_I(i)\\
&=&\frac{1}{2}\left(1+\sum_{i=1}^N\left|\eta_0 P_I(i)-\eta_1 Q_I(i)\right|\right)
\end{eqnarray}
Let us now define positive operators $\Upsilon_0$ and $\Upsilon_1$
with orthogonal support such that
$\eta_0\rho_0-\eta_1\rho_1=\Upsilon_0-\Upsilon_1$, and hence,
$\left|\eta_0\rho_0-\eta_1\rho_1\right|=\Upsilon_0+\Upsilon_1$.

We then have that,
\begin{eqnarray}
\nonumber\sum_{i=1}^N\left|\eta_0 P_I(i)-\eta_1
Q_I(i)\right|&=&\sum_{i=1}^N\left|E_i(\eta_0\rho_0-\eta_1\rho_1)\right|\\
\label{ineq}
&\leq &{\rm tr}\left|\eta_0\rho_0-\eta_1\rho_1\right|,
\end{eqnarray}
where the final inequality follows from the fact that,
\begin{eqnarray*}
\left|{\rm tr}\left(E_i\left(\Upsilon_0-\Upsilon_1\right)\right)\right|\leq{\rm
  tr}\left(E_i\left(\Upsilon_0+\Upsilon_1\right)\right)={\rm
  tr}\left(E_i\left|\eta_0\rho_0-\eta_1\rho_1\right|\right),
\end{eqnarray*}
and $\sum_i E_i=\openone$.

It remains to show that a {\sc POVM} exists that achieves equality in
(\ref{ineq}).  The relevant {\sc POVM} is $\{\Pi_0,\Pi_1\}$, where $\Pi_0$
is the projector onto the support of $\Upsilon_0$, and $\Pi_1$ is
likewise the projector onto the support of $\Upsilon_1$.  It is easy
to show that this {\sc POVM} has the desired properties.  We have hence
shown that the inequality (\ref{ineq}) can be saturated, hence the
result.
\end{proof}

As a corollary to this theorem, if the states to be distinguished have
equal priors, then they can be successfully distinguished with
probability at most $\frac{1}{2}(1+D(\rho_0,\rho_1))$, where
$D(\rho_0,\rho_1)\equiv\frac{1}{2}{\rm tr}|\rho_0-\rho_1|$ is the
trace distance between the two states.



\chapter{A Zero Knowledge Protocol For Graph Non-Isomorphism}
\label{AppD}

We use the task of providing a zero-knowledge proof for graph
non-isomorphism as an illustration that universally composable
security definitions can be satisfied, even in cryptographic protocols
in which one party responds after having received information from
another.  The protocol we use is classical and is found in \cite{GMW}.
We discuss its universally composable properties here.

A zero-knowledge proof is a protocol involving a verifier and a
prover.  It ensures that if some statement is true, and the protocol
is followed honestly, the prover is able to convince the verifier of
its truth, without revealing any other information.  Furthermore, if
the statement is false, it is impossible to convince the verifier that
it is true.

In this context, a graph is a series of nodes together with a defined
connectivity.  A zero-knowledge proof for graph non-isomorphism is one
in which the prover can convince a verifier that two graphs are
inequivalent under any permutation of their vertices.  In our
protocol, we assume that the prover has a device which solves the
graph isomorphism problem (i.e., a device which when give two graphs
decides whether they are isomorphic or not), but that the
(computationally bounded) verifier cannot solve this problem.
\begin{protocol}\qquad

We label the two graphs $G_0$ and $G_1$.  These are known to both
the prover and verifier.
\begin{enumerate}
\item The verifier picks either $G_0$ or $G_1$ at random and
  applies a random permutation, $\Xi$, to it.  This permuted graph is
  sent to the prover.
\item The prover tests to see whether the graph is a permutation of
  $G_0$ or $G_1$, and returns $0$ or $1$ to the verifier accordingly.
\item The verifier checks whether the prover was correct.  If so, this
  process is repeated until a sufficient confidence level is reached.
  If not, then no proof has been provided that the graphs are
  non-isomorphic.
\item We denote the outcome of the protocol $a=1$ if the proof is
  accepted, and $a=0$ if it is not.
\end{enumerate}
\end{protocol}

The ideal functionality has the following behaviour.
\begin{idf}\qquad
If the graphs are non-isomorphic, the prover can choose whether to
prove the non-isomorphism or not (i.e., whether the ideal will output
$a=1$ or $a=0$ to the verifier).  If they are isomorphic, the ideal
can only output $a=0$ to the verifier.
\end{idf}

Consider now a scenario in which the prover has access to an
additional device into which she inputs the permutation, and the
device returns either $c=0$ or $c=1$.  This device is analogous to the
additional device used by Bob in Section \ref{simulator_role} and uses
an algorithm unknown to the prover.  There, the additional device
broke the requirements for universally composable security.  However,
here it does not.  We show that a dishonest prover can use this device
and a device performing the ideal functionality in order to simulate
all the data she would gain by using this device in the real protocol.
The prover simply simulates each permutation and inputs the permuted
graph into the additional device.  The outcomes either correspond to
correct identification of the chosen graph, or they do not.  If they
do, the prover simply tells the ideal functionality to output $a=1$,
otherwise she tells it to output $a=0$.  

It is important that the inputs to the additional device are made
prior to the ideal being used.  Once the ideal has been executed, the
simulator cannot then generate pairs $(\Xi,c)$ that are correctly
distributed with $a$.



\chapter{The Complete Set Of Quantum States That Can Pass A GHZ Test}
\label{AppC}
The technique that we follow in this section is similar to that used
to find the complete set of states and measurements producing maximal
violation of the {\sc CHSH} inequality \cite{PopescuRohrlich}.

We seek the complete set of tripartite states (in finite dimensional
Hilbert spaces), and two-setting measurement devices that output
either $1$ or $-1$, such that, denoting the settings of device $i$ by
$P_i$ and $Q_i$, we have,
\begin{enumerate}
\item If all three detectors measure $P_i$, then the
  product of their outcomes is $+1$.
\item If two detectors measure $Q_i$ and one measures
  $P_i$, then the product of their outcomes is $-1$.
\end{enumerate}
These are equivalent to demanding
\begin{eqnarray}
\label{rel1}P_1\otimes P_2\otimes P_3\ket{\Psi}&=&-\ket{\Psi}\\
\label{rel2}Q_1\otimes Q_2\otimes P_3\ket{\Psi}&=&\ket{\Psi}\\
Q_1\otimes P_2\otimes Q_3\ket{\Psi}&=&\ket{\Psi}\\
\label{rel4}P_1\otimes Q_2\otimes Q_3\ket{\Psi}&=&\ket{\Psi},
\end{eqnarray}
where $\ket{\Psi}$ is the tripartite state.
We then have
\begin{eqnarray}
\nonumber
F\ket{\Psi}\equiv\frac{1}{4}\left(P_1\otimes Q_2\otimes Q_3+Q_1\otimes
P_2\otimes Q_3\,+\qquad\quad\right.\\
Q_1\otimes Q_2\otimes P_3-
\label{ineq2}
\left.P_1\otimes P_2\otimes
P_3\right)\ket{\Psi}&=&\ket{\Psi}.
\end{eqnarray}
$\ket{\Psi}$ is thus an eigenstate of $F$ with eigenvalue 1, so that
$F^2\ket{\Psi}=\ket{\Psi}$.  This is equivalent to
\begin{eqnarray}
\nonumber\left(i[P_1,Q_1]\otimes
i[P_2,Q_2]\otimes\openone+i[P_1,Q_1]\otimes\openone\otimes
i[P_3,Q_3]\,+\right.\quad\qquad\\\left.\openone\otimes i[P_2,Q_2]\otimes
i[P_3,Q_3]\right)\ket{\Psi}=&12\ket{\Psi}.
\end{eqnarray}

The maximum eigenvalue of $i[P_1,Q_1]$ is 2, hence
\begin{equation}
i[P_1,Q_1]\otimes i[P_2,Q_2]\otimes\openone\ket{\Psi}=4\ket{\Psi}
\end{equation}
and similar relations for the other permutations.  We hence have
\begin{equation}
i[P_1,Q_1]\otimes\openone\otimes\openone\ket{\Psi}=2\ket{\Psi}
\end{equation}
from which it follows that
\begin{equation}
\bra{\Psi}\left(\left\{P_1,Q_1\right\}\otimes\openone\otimes\openone\right)^2\ket{\Psi}=0
\end{equation}
and hence that
\begin{equation}
\left(\left\{P_1,Q_1\right\}\otimes\openone\otimes\openone\right)\ket{\Psi}=0.
\end{equation}
Consider the following Schmidt decomposition \cite{Nielsen&Chuang}:
$\ket{\Psi}=\sum_{i=1}^n\lambda_i\ket{i_1}\ket{i_{23}}$, where
$\lambda_i\geq 0$ $\forall$ $i$, and $n$ is the dimensionality of the
first system.  Then, if $\lambda_i\neq 0$ $\forall$ $i$, the
$\{\ket{i_1}\}$ are $n$ eigenstates of $\left\{P_1,Q_1\right\}$, each
having eigenvalue 0.  Since there are only $n$ eigenstates, we must
have $\left\{P_1,Q_1\right\}=0$.

If some of the $\lambda_i$ are zero, then we can define a projector
onto the non-zero subspace.  Call this $\Pi_1$, and define $p_1=\Pi_1
P_1\Pi_1$ and $q_1=\Pi_1 Q_1\Pi_1$.  Similarly, define projectors
$\Pi_2$ and $\Pi_3$, and hence operators $p_2,$ $q_2$ and $p_3$, $q_3$
by taking the Schmidt decomposition for systems (1,3) and 2, and (1,2)
and 3, respectively.  It is then clear that
\begin{equation}
\frac{1}{4}\left(p_1\otimes q_2\otimes q_3+q_1\otimes p_2\otimes
q_3+q_1\otimes q_2\otimes p_3-p_1\otimes p_2\otimes
p_3\right)\ket{\Psi}=\ket{\Psi}
\end{equation}
holds for the projected operators, and hence, these satisfy
$\{p_i,q_i\}=0$ for $i=1,2,3$.

The relationships, $p_i^2=\openone$, $q_i^2=\openone$, $\{p_i,q_i\}=0$
then apply for the Hilbert space restricted by $\{\Pi_i\}$.  These
imply that $p_i$, $q_i$ and $\frac{i}{2}[q_i,p_i]$ transform like the
generators of SU(2).  The operators may form a reducible
representation, in which case we can construct a block diagonal matrix
with irreducible representations on the diagonal.  The anticommutator
property means that only the two-dimensional representation can
appear, hence we can always pick a basis such that
$p_i=\openone_{d_i}\otimes\sigma_x{}_i$ and $q_i=\openone_{d_i}\otimes\sigma_y{}_i$
for some dimension, $d_i$, of identity matrix.  Our state then needs
to satisfy
\begin{equation}
\openone_{d_1}\otimes\sigma_x{}_1\otimes\openone_{d_2}\otimes\sigma_x{}_2\otimes\openone_{d_3}\otimes\sigma_x{}_3\ket{\Psi}=-\ket{\Psi},
\end{equation}
and similar relations for the other combinations analogous to
(\ref{rel2}--\ref{rel4}).
By an appropriate swap operation, this becomes
\begin{equation}
\openone_{d_1d_2d_3}\otimes\sigma_x{}_1\otimes\sigma_x{}_2\otimes
\sigma_x{}_3\ket{\Psi}=-\ket{\Psi},
\end{equation}
etc., which makes it clear that the system can be divided into
subspaces, each of which must satisfy the {\sc GHZ} relation (\ref{ineq2}).
In an appropriate basis, we can write
\begin{equation}
\ket{\Psi}=\left(\begin{array}{c}a_1\ket{\psi_{\text{GHZ}}}\\a_2\ket{\psi_{\text{GHZ}}}\\\vdots\end{array}\right),
\end{equation}
where $\ket{\psi_{\text{GHZ}}}=\frac{1}{\sqrt{2}}(\ket{000}-\ket{111})$
\footnote{This is the only solution to $(\sigma_x{}_1\otimes
\sigma_y{}_2\otimes \sigma_y{}_3+\sigma_y{}_1\otimes
\sigma_x{}_2\otimes \sigma_y{}_3+\sigma_y{}_1\otimes
\sigma_y{}_2\otimes \sigma_x{}_3-\sigma_x{}_1\otimes
\sigma_x{}_2\otimes \sigma_x{}_3)\ket{\psi_j}=4\ket{\psi_j}$, up to
global phase.}, and the complex
co-efficients $\{a_j\}$ simply weight each subspace and satisfy
$\sum_j |a_j|^2=1$.  We have hence
obtained the complete set of states and operators satisfying
(\ref{rel1}--\ref{rel4}), up to local unitaries.



\fancyhead[RO]{{\bf References}}
\addcontentsline{toc}{chapter}{References}

\bibliographystyle{unsrt}

\renewcommand{\bibname}{References}

\end{document}